\newcommand{\hi}{\mathcal{H}}
\newcommand{\hio}{\mathcal{H}_{\mathcal{R}_1}}
\newcommand{\hit}{\mathcal{H}_{\mathcal{R}_2}}
\newcommand{\hith}{\mathcal{H}_{\mathcal{R}_3}}
\newcommand{\his}{\mathcal{H}_{\mathcal{S}}}
\newcommand{\hir}
{\mathcal{H}_{\mathcal{R}}}
\newcommand{\Eff}{\mathcal{E}}
\newcommand{\Y}{\yen}
\newcommand{\ip}[2]{\left\langle\,#1\,|\,#2\,\right\rangle} 
\newcommand{\ket}[1]{|#1\rangle} 
\newcommand{\bra}[1]{\langle#1|} 
\newcommand{\no}[1]{\left\|#1\right\|} 
\newcommand{\tr}[1]{\textrm{tr}\left[#1\right]} 
\newcommand{\id}{\mathbbm{1}} 
\newcommand{\Sy}{\mathcal{S}}
\newcommand{\R}{\mathcal{R}}
\newcommand*\colvec[3][]{
	\begin{pmatrix}\ifx\relax#1\relax\else#1\\\fi#2\\#3\end{pmatrix}
}
\renewcommand{\S}{\mathcal{S}}
\newcommand{\T}{\mathcal{T}}
\newcommand{\E}{\mathsf{E}}
\newcommand{\F}{\mathsf{F}}
\newcommand{\G}{\mathsf{G}}
\renewcommand{\R}{\mathcal{R}}
\newcommand{\hisr}{\mathcal{H}_{\mathcal{S}} \otimes \mathcal{H}_{\mathcal{R}}}
\definecolor{myblue}{rgb}{0.2,0.2,0.8}
\definecolor{myblack}{rgb}{0,0,0}
\definecolor{myurl}{rgb}{0.1,0.1,0.4}
\edef\restoreparindent{\parindent=\the\parindent\relax}
\newtheorem{theorem}{Theorem}[section] 
\newtheorem{proposition}[theorem]{Proposition}
\newtheorem{definition}[theorem]{Definition}
\newtheorem{example}[theorem]{Example}
\newtheorem{remark}[theorem]{Remark}
\begin{document}
	
	\title{Operational Quantum Reference Frame\newline Transformations}
	
	\author{Titouan Carette}\email{titouandotcaretteatpolytechniquedotedu}
	\affiliation{Basic Research Community for Physics, Leipzig, Germany}
	\affiliation{LIX, CNRS, {\'E}cole polytechnique, Institut Polytechnique de Paris, 91120 Palaiseau, France}
	\author{Jan G{\l}owacki}\email{jan.glowacki.research@gmail.com}
	\affiliation{Basic Research Community for Physics, Leipzig, Germany}
	\affiliation{Center for Theoretical Physics, Polish Academy of Sciences, 02-668 Warsaw, Poland}
	\affiliation{International Center for Theory of Quantum Technologies, University of Gda{\'n}sk, 80-309 Gda{\'n}sk, Poland}
	\affiliation{Department of Computer Science, University of Oxford, OX1 3QD Oxford, UK}
	\affiliation{Institute for Quantum Optics and Quantum Information, Austrian Academy of Sciences, Boltzmanngasse 3, A-1090 Vienna, Austria}
	\author{Leon Loveridge}\email{leon.d.loveridge@usn.no}
	\affiliation{Basic Research Community for Physics, Leipzig, Germany}
	\affiliation{Department of Science and Industry Systems, University of South-Eastern Norway, 3616 Kongsberg, Norway}
	\affiliation{Okinawa Institute of Science and Technology Graduate University,
		1919-1 Tancha, Onna-son, Okinawa 904-0495, Japan}

	
	\begin{abstract}
		Quantum reference frames are needed in quantum theory for many of the same reasons that reference frames are in classical theories: to manifest invariance in line with fundamental relativity principles and to provide a basis for the definition of observable quantities. Though around since the 1960s, and used in a wide range of applications, only recently have the means for transforming descriptions between different quantum reference frames been tackled in detail. In this work, we provide a general, operationally motivated framework for quantum reference 
		frames and their transformations, holding for locally compact groups. The work is built around the notion of operational equivalence, in which quantum states that cannot be physically distinguished are identified. For example, we describe the collection of relative observables as a subspace of the algebra of invariants on the composite of system and frame, and from here the set of relative states is constructed through the identification of states which cannot be distinguished by relative observables. Through the notion of framed observables---the formation of joint observables of system and frame---of which the relative observables can be understood as examples, quantum reference frame transformations are then  maps between equivalence classes of relative states which respect the framing. We give an explicit realisation in the setting that the initial frame admits a highly localized state with respect to the frame observable.  The transformations are invertible exactly when the final frame also has such a localizability property. The procedure we present is in operational agreement with other recent inequivalent constructions on the domain of common applicability, but extends them in a number of ways, and weakens claims of entanglement generation through frame changes.
	\end{abstract}

	\maketitle

	\section{Introduction}
	
	A means for describing one quantum system relative to another is needed to accommodate the imposition of a symmetry principle. For example, in the absence of absolute space, and in order to respect Galilean or special relativistic invariance, the position of a quantum system has meaning only in relation to other quantum systems, which then function as \emph{quantum reference frames}. A full account of such relative descriptions, at the level of states, observables, and the concomitant 
	probability measures, and how these descriptions transform when different frames are chosen, is the subject of this paper. Our contribution is inspired by other recent efforts with the same aim  (e.g. \cite{giacomini2019quantum,de2020quantum,de2021perspective}), with the further ambitions of (i) providing a mathematically precise framework for locally compact groups, and (ii) making the framework as far as possible \emph{operational} \cite{busch1997operational}, in the sense of being grounded ultimately in Born rule probabilities that could in principle be observed. In so doing, we provide a foundation for the study of quantum reference frames rooted in the full probabilistic structure of quantum theory (observables as positive operator-valued measures, states as positive trace class operators with unit trace) with a symmetry principle explicitly incorporated from the outset. Since there are a number of differing contemporary frameworks and perspectives on the founding assumptions and mathematical implementations of quantum reference frames and their transformations (e.g. \cite{Bartlett2007,palmer2014changing,Loveridge2017a,Vanrietvelde:2018pgb}), with differing takes on fundamental questions such as whether frame changes are unitary \cite{castro2021relative,lake2023quantum}, the desire for a precise formulation is clear. 
	
	The need to understand quantum reference frames and their transformations arises in diverse areas of physics, including curved spacetime quantum field theories and quantum gravity, in which background independence suggests working with relational observables (e.g., \cite{chandrasekaran2023algebra,witten2024background,fewster2024quantum}), quantum information theory (e.g. \cite{Bartlett2007,palmer2014changing}) where communication tasks may require transmission of frame-change-invariant information, and
	in the foundations of quantum mechanics where an understanding of the basic meaning of the formalism is still being sought.\footnote{In this vein, one can include  \emph{relational quantum mechanics}, originally due to Rovelli \cite{rovelli1996relational,adlam2022information}, and the perspectival approach of Bene and Dieks \cite{bene2002perspectival}. These are not based on quantum reference frames explicitly, but the motivation is largely aligned.}
	This assortment of applications
	explains the variety of distinct formulations---perhaps four strands are now visible, each described by one of \cite{Bartlett2007,giacomini2019quantum,Vanrietvelde:2018pgb,Loveridge2017a}---with differing underlying assumptions. In quantum information theory, the frame-independent communicable information between agents  who do not share a frame is achieved through averaging the states over the group of all (classical) frame orientations. By contrast, the perspective neutral approach \cite{Vanrietvelde:2018pgb,de2021perspective}, inspired by Dirac quantization of constrained systems, limits the physical (pure) states to the unit vectors in the given Hilbert space that are invariant under the action of the group generated by the quantized constraints, which yields different physics than that given in the quantum information approach. Another major recent development is given in  
	\cite{giacomini2019quantum}, in which `states relative to a frame' for a particle on a line are taken as primitive, and explores the ensuing change of state description under frame changes, resulting in e.g. the generation of entanglement. This direction was taken up in \cite{de2020quantum}, in which the group theoretic aspects are emphasized, various assumptions are made more explicit, and a more general treatment is given. In each approach, akin to special relativity and gauge theories, there is a group which dictates both what is observable/physical in the theory and also mediates transformations between frames, and the distinction between the approaches can in part be understood through the role played by the symmetry group. In this paper, we take that the physical \emph{observables} are invariant to be a fundamental physical principle, and explore the consequences, culminating in a frame change procedure based on ideas arising in e.g. \cite{loveridge2012quantum,loveridge2017relativity,Loveridge2017a}. We comment on relative merits, drawbacks and distinctions between the various approaches as we go.

	\subsection{Survey}
	
	The paper is organised as follows. Sec. \ref{sec:pre} begins with the provision of some general background and nomenclature used later,
	focussing on states, observables, channels and probabilities, 
	and providing some general topological material. The state spaces we consider are more general than those usually encountered in quantum mechanics, understood as convex subsets of real vector spaces. Next, we 
	formulate the notion of \emph{operational equivalence}, based on identifying (typically) states that cannot be probabilistically distinguished, either for principle or practical reasons. Concretely, the identification (on the trace class) is given as $T \sim_\mathcal{O} T '$ if $\tr[T A] = \tr[T' A]$ for all $A \in \mathcal{O} \subset B(\hi)$, where $\mathcal{O}$ is some specified collection of operators; the ensuing quotient on density operators is then the generic model for the state spaces we will encounter throughout. Some general theorems about such spaces are given; most notably we show that the standard statistical duality between the states and effects can be given in general operational terms under the above quotient. 
	
	After these generalities, we consider the example of operational equivalence that arises under a symmetry described by a unitary representation of a locally compact group $G$, and with the set $\mathcal{O}$ above taken to be the fixed point algebra $B(\hi)^G$ of $G$-invariant bounded operators. The state space then comprises the equivalence classes of density operators that cannot be distinguished by elements of $B(\hi)^G$. This setting is discussed in some detail, since it serves as an illustration of the conceptual and formal differences between the approach to implementing invariance employed in this work and in other approaches to quantum reference frames.

	We choose the invariant operators rather than invariant states (as in the quantum information approach) or Hilbert space vectors (as in the perspective-neutral approach \cite{de2021perspective}),
	because when $G$ is not compact such objects may not exist. Indeed, the perspective-neutral approach requires distributional techniques which are not known to be rigorously possible in general. Further operational equivalences are considered later; 
	the newly introduced notions of relative states and observables, 
	needed for the frame transformations, are based on this concept. 
	
	Next comes the topic of the \emph{localizability} (norm-$1$) property of positive operator-valued measures (POVMs), which assures the existence of probability measures arising via the Born rule being concentrated around an arbitrary point to arbitrary precision. These POVMs, in conjunction with a 
	\emph{covariance} property described here, play an important role throughout, as discussed in the next section.
	Various examples of covariant POVMs are provided (with and without the localizability property being satisfied), including those defined by systems of coherent states, familiar from the perspective-neutral approach (e.g. \cite{Vanrietvelde:2018pgb,de2021perspective}).

	Sec. \ref{sec:rqk} provides a definition of a quantum reference frame $\R$ as a \emph{system of covariance} $\R= ( U_\R,\E_\R,\hir)$ based on $\Sigma_{\R}$, i.e, a unitary representation of $G$ in a Hilbert space $\hir$ and a covariant POVM $\E_\R$ on a measurable space $\Sigma_{\R}$ with values in $B(\hir)$. Here, the compound object of system $\S$ and frame $\R$ is considered for the first time.
	Frames are understood as (representing) physical systems, and fixing a frame is viewed as setting a concrete experimental arrangement,
	akin to fixing a measuring apparatus and choosing a particular pointer observable in the quantum theory of measurement (e.g., \cite{Busch1996}). Thus, the choice of frame reflects which system will execute that function, and how it will do it. From here we introduce \emph{framing}, which yields an operational equivalence relation reflecting the fixing of the frame and therefore respecting the given operational/experimental scenario under consideration.
	
	In this work, we are primarily concerned with the setting of \emph{principal} frames, in which the value space $\Sigma_{\R}$
	is a principal homogeneous $G$-space, which already covers a large range of physically relevant examples. In this context, the frame observables are understood as providing a means of describing, probabilistically, the \emph{orientation} of the quantum system, contingent on the state preparation of the frame; using localizing sequences of states we make rigorous---in terms of a limiting procedure at the level of probabilities/expectation values---objects such as $\ket{g}$ ($g \in G$) which are used freely in the physics literature but require proper mathematical foundation. We note that the orientation is not {\it per se} observable, since we stipulate that truly observable quantities are invariant, and we will define relative orientation observables which do respect the invariance requirement.
	
	Next we recall the \emph{relativization maps} $\Y^\R$ on principal homogeneous spaces, introduced in e.g. \cite{Miyadera2015e,Loveridge2017a}, which are at the heart of the operational approach to quantum reference frames.  The setting of general homogeneous spaces is considered for finite groups/spaces in \cite{glowacki2023quantum}, and \cite{fewster2024quantum} deals with the case of $G/H$ for $G$ locally compact and $H$ the compact stabiliser of the action. A relativization map $\Y^\R$ is a quantum channel, and takes $B(\his)$ to invariants on the composite system, i.e., $\Y^\R(B(\his)) \subset B(\hisr)^G$. The (ultraweak closures of the) images of the relativization maps provide what we call spaces of \emph{relative observables}, written $B(\his)^\R := \Y^\R(B(\his))^{\rm cl}$. Such operators are not only invariant but also framed, and understood as those that can actually be measured, given the imposition of symmetry (invariance) and the choice of frame. We note that the collection of relative observables need not form an algebra, but is always an operator space.
	
	Following the definition of relative observables, the \emph{relative states} $\S(\his)_\R$ (and relative trace class operators $\T(\his)_\R$) are defined as operational equivalence classes of states (trace class operators) on the composite system that cannot be distinguished by relative observables. The space of relative observables $B(\his)^\R$ is in Banach duality with the space of relative trace class operators---$[\T(\his)_\R]^* \cong B(\his)^\R \subset B(\hisr)^\G$---with the relative states  embedded as $\S(\his)_\R \subseteq \T(\his)_\R^{\rm sa}$ (`sa' standing for `self-adjoint'); this extends the standard duality $\T(\hi)^* \cong B(\hi)$  between the trace class and bounded operators to their relational incarnation considered in this work. For the comparison with other attempts to formalise the notion of relative state, we note that the relative state spaces introduced here are isomorphic to convex subsets of the state space of the system $\S$, namely to the image of the predual map $\Y^\R_*$ to be written $\S(\hi)^\R := \Y^\R_*(\S(\hisr))\cong \S(\hi)_\R \subseteq \S(\his)$. This allows for the representation of the relative states as states of the system alone, which is in line with what is done in other works (e.g. \cite{giacomini2019quantum,de2020quantum}), except more general, operationally motivated and mathematically precise. 
	
	We then consider conditioned relative descriptions, which arise from fixing a state of the given frame.  The main tool here is the assignment to any frame state $\omega \in \S(\hir)$ the corresponding \emph{restriction map} $\Gamma_\omega: B(\hisr) \to B(\his)$, which  is known to be a completely positive normal conditional expectation \cite{Miyadera2015e,Loveridge2017a}. This naturally gives rise to the dual notions of the \emph{conditioned relative observables} and the \emph{product relative states}, which live in the image of $\Y^{\R}_*$ applied to product states. These may be used to generalise the alignable states of \cite{krumm2021quantum}; various properties of this particularly tractable class of relative states are given. Given the means for  conditioning upon a state of the reference, we address the question of how the conditioned relative descriptions behave upon localizing the reference. Generalizing previous results (of e.g. \cite{loveridge2012quantum,Loveridge2017a}), we find that in the case of a localizable frame, any observable of the system can be represented as a limit of conditioned relative observables,   $\Y^\R$ is injective, and the set of relative states is dense in the state space of the system. Thus the ordinary, non-relational framework of quantum theory is recovered as a limiting case of the description relative to a good quantum reference frame.
	
	The work thus far described may then be used in service of the operational take on frame changes, to which the paper turns in Sec. \ref{sec:fcp}. The transformation rule we give is inspired by
	\cite{giacomini2019quantum,de2020quantum}, but mathematically and physically distinct from it, and all other quantum frame changes appearing to date. The map we provide is defined on the relevant convex operational relative state space. Here, three systems are considered - a system and two frames, and the aim is to transform the (state) description relative to one frame to that relative to the other. Before giving the frame change, we introduce a final ingredient called the \emph{lifting maps}, which serve to `attach' a chosen state of the second reference to a state relative to the first one, giving rise to a state on the total system consisting of both frames and the original system, and respecting symmetry. With all the tools in hand, we may then define the localized frame transformations as a limiting case of the following procedure. First, given a state relative to the first frame, use the lifting map to attach a state of the first frame that is well-localized at the identity of $G$, in analogy to \cite{de2020quantum}. Thus we pass by the `global' description comprising system and all frames, whilst respecting the symmetry, similar in spirit to what is done in \cite{de2021perspective}. Then we use the predual of the second relativization map, yielding a state relative to the second frame. The limit is taken with respect to an arbitrary localizing sequence centered around the identity of $G$. The procedure is well-defined and invertible when the second frame is also localizable, and both framing equivalence relations are taken into account. We also show that in the setup of three frames, the localized frame transformations compose as one would hope.
	
	We conclude by showing that our map agrees in spirit with
	\cite{giacomini2019quantum,de2020quantum} in the following sense. In the setting that the frame changes in the above works are rigorously defined, those frame changes yield elements of the operational equivalence classes appearing in our work. The upshot is that if one accepts the premise of the operational equivalence relations defined here, there is no measurement that can distinguish our frame change from theirs. Given that the operational equivalence classes contain both entangled states and separable states with the same statistics on relative observables, in our prescription there is no fact of the matter about whether the post-frame change states are entangled - indeed, since no measurement could show that they are, we urge some degree of caution in making strong physical claims based on the generation of entanglement from frame changes. We also provide a brief comparison with the perspective-neutral approach, though a full account of it is a large project due to the difficulties of making the perspective-neutral framework precise. However, we again observe  no operational disagreement between our maps and those of \cite{de2021perspective} in the setting that the initial frame is ideal. After a brief conclusion, in Appendix \ref{glossary} we provide a glossary of the various spaces we consider for reference. Appendix \ref{diagrams} contains all the diagrams in one place for ease of use, and finally Appendix \ref{app:proofs} provides proofs of some theorems omitted from the main text.
	
	\section{Preliminaries}\label{sec:pre}
	
	We briefly recall in this section the essentials of the Hilbert space framework of quantum mechanics pertinent to the subject matter which follows. We introduce the notion of operational equivalence, which is used to identify 
	mathematically distinct entities (most commonly for us these are states) 
	that cannot be distinguished in a given operational scenario, paying specific attention to the equivalence arising from the imposition of a symmetry. We then define the relative states in an operational way, given as the equivalence classes of density matrices that cannot be probabilistically told apart by relative observables. We close by recalling the definitions of covariance and localizability for POVMs, accompanied by a list of examples.
	
	\subsection{Basics}
	
	In the  Hilbert space framework of quantum theory, a quantum system has associated with it a (complex, separable) Hilbert space $\hi$, and (normal) states are identified with positive trace class operators on $\hi$ of trace one; we write $\T(\hi)$ for the trace class (of $\hi$), which is a Banach space under the trace norm $||\cdot ||_1$, $\T(\his)^{\rm sa}$ for the self-adjoint part and $\mathcal{S}(\hi)$ for the convex subset of states. The pure states are the extremal elements of $\mathcal{S}(\hi)$, characterised as the rank-1 projections and written $\dyad{\varphi}$ for some unit vector $\varphi \in \hi$; the collection of pure states is written $\mathcal{P}(\hi)$. The space (which is also a von Neumann algebra) of bounded operators in $\hi$ is denoted $B(\mathcal{H})$, which is the Banach dual of $\T(\hi)$. A channel $\Lambda:B(\hi) \to B(\mathcal{K})$ is a normal (i.e., continuous with respect to the ultraweak topologies on $B(\hi)$ and $B(\mathcal{K})$) completely positive (CP) map which preserves the unit; the normality and unitality imply there is a unique CP trace-preserving \emph{predual} map (also to be called a channel) $\Lambda_*:\T(\hi) \to \T(\mathcal{K})$ defined through
	$\tr[T\Lambda(A)]=\tr[\Lambda_*(T)A]$ for all $T \in \T(\hi)$ and 
	$A \in B(\hi)$, which also takes states to states. More generally we consider abstract \emph{state spaces}, which are convex subsets of real vector spaces, and convexity-preserving maps between them. The convexity-preserving bijections are called state space isomorphisms. Usually, the state spaces considered are total convex subsets of real Banach spaces, where total convexity means that the points are separated by the (real) bounded affine functionals; this is the case for $\S(\hi) \subset \T(\hi)^{\rm sa}$, where the relevant functionals are obtained by tracing with self-adjoint operators. 
	
	Observables are (identified with) positive operator-valued measures (POVMs) $\E: \mathcal{F} \to B(\hi)$, where $\mathcal{F}$ is a $\sigma$-algebra of subsets of some set (or \emph{value space}) $\Sigma$, which represents outcomes that may be obtained in a measurement of the given observable. 
	In this paper, $\Sigma$ is a topological space and $\mathcal{F}$ is the Borel $\sigma$-algebra, written $\mathcal{B}(\Sigma)$.
	If the system is prepared in a state $\omega \in \S(\hi)$, upon measurement of $\E$ the probability that an outcome is in a set 
	$X \in \mathcal{B}(\Sigma)$ is given 
	through the Born formula:
	\begin{equation}\label{eq:born}
		p_{\omega}^{\E}(X) = \tr[\omega \, \E (X)].
	\end{equation}
	Operators in the range of POVMs, that is, positive operators in the unit operator interval, are called \emph{effects}. Thanks to the duality between the bounded and trace class operators, the effects can be equivalently characterized as positive continuous affine functionals on $\S(\hi)$ bounded by one. The space of effects on $\hi$ will be denoted by $\Eff(\hi)$. If each operator in the range of a POVM $\E$ is a projection, $\E$ is a projection-valued measure (PVM) and if defined on (Borel subsets of) $\mathbb{R}$ the standard description of observables as self-adjoint operators in $\hi$ is recovered through the spectral theorem. POVMs which are PVMs will be called \emph{sharp} (observables), all others \emph{unsharp}. In the case that
	an observable is sharp, we will also refer to the corresponding self-adjoint operator as an observable. Note that `observable algebras' typically have non-self-adjoint elements, as one finds in the algebraic QM/QFT literature.
	
	Various topological notions have already been mentioned; it is worth briefly formalizing these since they appear repeatedly. Being motivated by operational
	ideas, the preferred topology on $B(\hi)$ is the \emph{topology of pointwise convergence of expectation values}, i.e., $A_n \to A \in B(\hi)$ exactly when $\tr[T A_n] \to \tr[T A]$ for all $T \in \T(\hi)$ (there is no loss of generality to restrict further from $\T(\hi)$ to $\mathcal{S}(\hi)$). This is the ultraweak (also called $\sigma$-weak, or weak-$*$ as the dual of $\T(\hi)$) topology on $B(\hi)$ (on norm bounded sets convergence here agrees with convergence in the weak operator topology arising from the family of seminorms $A \mapsto |\ip{\varphi}{A \phi}|$, which can be reconstructed from pure state expectation values by polarization). Note that all these topologies are Hausdorff \cite{murphy2014c}. On the predual $\T(\hi)$ (and by restriction to the state space), 
	we again use the topology of pointwise convergence of the expectation values, thus $T_n \to T$ exactly when $\tr[T_n A] \to \tr[T A]$ for all $A \in B(\hi)$. Since the set of effects spans $B(\hi)$, this can be equivalently given in $\mathcal{E}(\hi)$. We will refer to this topology as the \emph{operational topology} on the trace class (and accordingly also on the states) i.e., it is the weakest topology that makes all the $T \mapsto \tr[T A]$ continuous. The superscript "$^{\rm cl}$" will always refer to ultraweak closure of the subsets in operator algebras, and operational closure of subsets of trace class operators.
	
	\subsection{Operational Equivalence}
	
	The notion of operational equivalence allows for the identification of distinct `entities' that cannot be distinguished physically/operationally. 
	For instance, in the operational approach to quantum theory (e.g. \cite{busch1997operational}), states are defined as equivalence classes of preparation procedures that cannot be distinguished in any measurement. In gauge theories, fields related by a gauge transformation may be regarded as physically identical. In this work we specify a set of observables, and quotient the set of states by an operational equivalence relation defined by equality of expectation values, i.e., we identify states that cannot be probabilistically distinguished on the given set of observables. The prototypical example comes from identifying any states which give the same statistics on all (gauge-)invariant observables. The state spaces which arise in this way are naturally identified with the predual of the von Neumann algebra of invariant observables $B(\his)^G$. This operational state space  is a central object in this work, and plays a role similar to that of the physical Hilbert space in the perspective-neutral approach \cite{Vanrietvelde:2018pgb,de2021perspective}. We also consider an important case of (operationally) equivalent collections of density operators that cannot be distinguished by operators in the image of a given channel, which will come useful in the context of the construction of relative states.
	
	\subsubsection{Generalities}
	
	We begin with a general definition of operational equivalence for collections of states and effects:
	
	\begin{definition}
		Let $\mathcal{S}$ be a collection of states and $\mathcal{O}$ a collection of effects. Then:
		\begin{itemize}
			\item Two states $\rho, \rho' \in \mathcal{S}$ are called \emph{operationally equivalent} with respect to $\mathcal{O}$
			if $\tr[\rho \, \F]=\tr[\rho' \, \F]$ for all $\F \in \mathcal{O}$.
			\item Two effects $\F,\F' \in \mathcal{O}$  are called operationally equivalent with respect to $\mathcal{S}$ if $\tr[\rho\F]=\tr[\rho \F']$ for all $\rho \in \mathcal{S}$.
		\end{itemize}
	\end{definition}
	This definition may be adapted to POVMs, self-adjoint and trace class operators as needed. Since the following will be used throughout, we state it separately: 
	
	\begin{definition}
		Given a subset $\mathcal{O}\subseteq B(\mathcal{H})$, the $\mathcal{O}$-\emph{operational equivalence relation} on $\T(\mathcal{H})$ is defined as
		\[T \sim_{\mathcal{O}} T' \Leftrightarrow \tr[TA]=\tr[T'A] \hspace{3pt} \forall A \in \mathcal{O}.\]
	\end{definition}
	The identification of $\mathcal{O}$-equivalent trace class operators amounts to taking the quotient $\T(\mathcal{H})/\hspace{-3pt}\sim_{\mathcal{O}}$.
	The continuity of the map $A \mapsto \tr[T A]$ on $B(\hi)$ for any $T \in \T(\hi)$ means that the set of $\mathcal{O}$-equivalent trace class operators is equal to the set of $\mathcal{O}^{\rm{cl}}$-equivalent trace class operators (e.g. \cite{kelley2017general}). Notice also that the same equivalence is realized with respect to $\mathcal{O}$, ${\rm span}\{\mathcal{O}\}$ and ${\rm conv}\{\mathcal{O}\}$ (denoting the convex hull); these facts will be used throughout. 
	
	\begin{proposition}\label{prop:iml}
		The space $\T(\mathcal{H})/\hspace{-3pt}\sim_{\mathcal{O}}$ equipped with the quotient norm is the unique Banach predual of the ultraweak closure of the span of $\mathcal{O}$, so that
		\[
		\big[{\rm span}\{\mathcal{O}\}^{\rm cl}\big]_* \cong \T(\mathcal{H})/\hspace{-3pt}\sim_{\mathcal{O}}.
		\]
		
	\end{proposition}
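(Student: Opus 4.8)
The plan is to exhibit $\T(\mathcal{H})/\!\sim_{\mathcal{O}}$ as a predual of $\mathcal{M} := \mathrm{span}\{\mathcal{O}\}^{\rm cl}$ via the natural pairing, and then invoke the known uniqueness of preduals for von Neumann algebras (more precisely, for dual Banach spaces whose predual is, like $\T(\mathcal{H})$, a separable Banach space, or appealing to the general von Neumann algebra uniqueness-of-predual theorem since $\mathcal{M}$ is an ultraweakly closed subspace of $B(\mathcal{H})$ but we only need it as a dual \emph{Banach space}). First I would set $\mathcal{O}^\circ := \{T \in \T(\mathcal{H}) : \tr[TA] = 0 \ \forall A \in \mathcal{O}\}$, the pre-annihilator of $\mathcal{O}$ in $\T(\mathcal{H})$; since $T \sim_{\mathcal{O}} T'$ iff $T - T' \in \mathcal{O}^\circ$, we have $\T(\mathcal{H})/\!\sim_{\mathcal{O}} = \T(\mathcal{H})/\mathcal{O}^\circ$ as a normed space with the quotient norm. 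Because $\mathcal{O}^\circ$ is a norm-closed subspace (each $T \mapsto \tr[TA]$ is norm-continuous), the quotient is a genuine Banach space.

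Next I would identify the annihilator. By the remarks already recorded in the text, $\mathcal{O}^\circ = (\mathrm{span}\,\mathcal{O})^\circ = (\mathrm{span}\{\mathcal{O}\}^{\rm cl})^\circ = \mathcal{M}^\circ$, using that the operational/ultraweak closure does not change the set of annihilating trace-class operators (continuity of $A \mapsto \tr[TA]$ in the ultraweak topology, as noted). Now apply the standard Banach-space duality for annihilators: since $B(\mathcal{H}) = \T(\mathcal{H})^*$ and $\mathcal{M}$ is a weak-$*$-closed (i.e. ultraweakly closed) subspace of $B(\mathcal{H})$, one has $\mathcal{M} = (\mathcal{M}_\circ)^\perp$ where $\mathcal{M}_\circ := \mathcal{O}^\circ = \mathcal{M}^\circ \subseteq \T(\mathcal{H})$ is its pre-annihilator, and consequently the canonical map
\[
\T(\mathcal{H})/\mathcal{M}_\circ \longrightarrow \mathcal{M}^*, \qquad [T] \longmapsto \big(A \mapsto \tr[TA]\big),
\]
is an isometric isomorphism of Banach spaces. (Here one uses the Hahn–Banach-based fact that the quotient norm of $[T]$ equals $\sup\{|\tr[TA]| : A \in \mathcal{M}, \|A\| \le 1\}$, together with weak-$*$ closedness of $\mathcal{M}$ to ensure surjectivity onto all of $\mathcal{M}^*$ — every bounded functional on $\mathcal{M}$ extends to $B(\mathcal{H})$ and is thus given by some $T \in \T(\mathcal{H})$.) This already establishes that $\T(\mathcal{H})/\!\sim_{\mathcal{O}}$ \emph{is} a predual of $\mathcal{M}$.

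Finally, for \emph{uniqueness} I would note that $\mathcal{M} = \mathrm{span}\{\mathcal{O}\}^{\rm cl}$, being an ultraweakly closed subspace of $B(\mathcal{H})$ on a separable Hilbert space, has a separable predual, and invoke the theorem (originally due to Dixmier/Sakai for von Neumann algebras, but valid for any dual Banach space with separable predual, or for operator spaces that are duals via Blecher–Le Merdy) that the predual of such a space is unique up to isometric isomorphism: any two preduals $X_1, X_2$ of $\mathcal{M}$ induce the same weak-$*$ topology (the normal functionals are intrinsically characterized), hence $X_1 \cong X_2$ canonically. Composing with the isomorphism from the previous step gives the claimed $\big[\mathrm{span}\{\mathcal{O}\}^{\rm cl}\big]_* \cong \T(\mathcal{H})/\!\sim_{\mathcal{O}}$.

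I expect the main obstacle to be the uniqueness claim rather than the existence: showing $\T(\mathcal{H})/\!\sim_{\mathcal{O}}$ is \emph{a} predual is a clean application of annihilator duality, but asserting it is \emph{the unique} predual requires care about which category of "predual" is meant — isometric vs. isomorphic, and whether $\mathcal{M}$ is merely an operator space (where predual uniqueness can fail for general operator spaces) or whether one leans on the von Neumann algebra / separable-predual machinery. The cleanest route is to restrict the uniqueness statement to the separable setting already in force ($\mathcal{H}$ separable $\Rightarrow$ $\T(\mathcal{H})$ separable $\Rightarrow$ $\T(\mathcal{H})/\mathcal{M}_\circ$ separable), where uniqueness of predual up to isometry for dual spaces with a separable predual is classical; alternatively, if $\mathcal{O}$ is a $*$-closed subalgebra so that $\mathcal{M}$ is a von Neumann algebra, one may cite the Dixmier–Sakai theorem directly. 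I would flag this point explicitly in the write-up so the intended notion of uniqueness is unambiguous.
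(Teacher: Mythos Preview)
Your overall strategy matches the paper's---identify $\sim_\mathcal{O}$ with quotienting by the pre-annihilator, then use annihilator duality---but there is a genuine error in your existence step. You claim that the map
\[
\T(\mathcal{H})/\mathcal{M}_\circ \longrightarrow \mathcal{M}^*, \qquad [T] \longmapsto \big(A \mapsto \tr[TA]\big)
\]
is an isometric \emph{isomorphism onto} $\mathcal{M}^*$, justifying surjectivity by ``every bounded functional on $\mathcal{M}$ extends to $B(\mathcal{H})$ and is thus given by some $T \in \T(\mathcal{H})$.'' The extension is fine, but the second clause is false: $B(\mathcal{H})^* \supsetneq \T(\mathcal{H})$ whenever $\dim\mathcal{H}=\infty$, so the image of your map is only the space of \emph{normal} functionals $\mathcal{M}_*$, not all of $\mathcal{M}^*$. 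Worse, even if you had established $\T(\mathcal{H})/\mathcal{M}_\circ \cong \mathcal{M}^*$, that would \emph{not} exhibit the quotient as a predual of $\mathcal{M}$: a predual $Y$ satisfies $Y^* \cong \mathcal{M}$, not $Y \cong \mathcal{M}^*$. The correct (and standard) statement---which is what the paper uses, citing Rudin---is the other direction: $\big(\T(\mathcal{H})/{}^\perp\mathcal{M}\big)^* \cong ({}^\perp\mathcal{M})^\perp = \mathcal{M}$, the last equality holding because $\mathcal{M}$ is weak-$*$ closed. This is an easy fix, but as written your argument does not establish the predual claim.

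On uniqueness, your instinct that this is the delicate part is right, but the blanket assertion ``dual Banach space with separable predual has unique predual'' is not a theorem: $\ell^1$ has many non-isometric preduals (e.g.\ $c_0$ and $C(K)$ for various countable compact $K$), all separable. The paper handles this by showing that $K(\mathcal{H}) \cap \mathcal{M}$ is ultraweakly dense in $\mathcal{M}$ and then invoking a specific result of Godefroy on uniqueness of preduals for spaces containing a weak-$*$ dense set of compact operators (with $\mathcal{H}$ separable). Your flagged caveat is therefore well placed, but you will need a sharper tool than separability alone.
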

	\begin{proof}
		For $A \in \mathcal{O}$, we write $\phi_A$ for the (trace norm) continuous linear functional $T \mapsto \tr[A T]$ and identify $A$ with $\phi_A$ through the
		isomorphism $\T(\mathcal{H})^* \cong B(\mathcal{H})$. Then
		\[
		T \sim_{\mathcal{O}} T' \Longleftrightarrow \hspace{3pt} T-T' \in \ker(\phi_A) \hspace{3pt} \forall A \in \mathcal{O}.
		\]
		
		Notice now that the pre-annihilator of $\mathcal{O}$, defined as the joint kernel of the functionals $\phi_A$ with $A \in \mathcal{O}$ and denoted by
		\[
		{}^{\perp}\mathcal{O} := \bigcap_{A\in \mathcal{O}}\ker(\phi_A) \subseteq \T(\hi),
		\]
		is closed as an intersection of closed sets, and a subspace due to linearity of the $\phi_A$ functionals. Thus ${}^{\perp}\mathcal{O} = {}^{\perp}{\rm span}\{\mathcal{O}\}$ and $\T(\hi) /\hspace{-3pt}\sim_{\mathcal{O}} = \T(\hi) /{}^{\perp}\mathcal{O}$ is a Banach space with the quotient norm given by
		\[
		||T + {}^{\perp}\mathcal{O} || :=\inf_{S \in {}^{\perp}\mathcal{O}} ||T+S||_1.
		\]
		
		We then have $\left(\T(\hi)/\hspace{-3pt}\sim_{\mathcal{O}} \right)^* = \left(\T(\hi)/{}^{\perp}\mathcal{O}\right)^* \simeq {}^{\perp}\mathcal{O}^{\perp} = \left({}^{\perp}{\rm span}\{\mathcal{O}\}\right)^{\perp} = {\rm span}\{\mathcal{O}\}^{\rm cl}$ (see Thm. 4.9 and 4.7 in \cite{rudin1974functional}).
		
		For the uniqueness, we note the following. Since the set of compact operators $K(\hi)$ is convex, the ultraweak closure $K(\hi)^{\rm cl}$ equals the $\sigma$-strong$^*$ closure of $K(\hi)$ (Thm. 2.6 (iv) in \cite{takesaki2001}). Due to the $\sigma$-strong$^*$ continuity of the $^*$-operation, $K(\hi)^{\rm cl}$ is a $\sigma$-strongly$^*$ closed $*$-subalgebra of $B(\hi)$ and the double commutant theorem gives $K(\hi)^{\rm cl}=K(\hi)''=B(\hi)$. Then clearly $K(\hi)\cap {\rm span}\{\mathcal{O}\}^{\rm cl}$ is ultraweakly dense in ${\rm span}\{\mathcal{O}\}^{\rm cl}$. Since we assume separability of $\hi$, the uniqueness then follows from Thm. 2.1 in \cite{godefroy2014uniqueness}.

	\end{proof}
	\begin{remark} The uniqueness of preduals is known  in the theory of von Neumann algebras \cite{Sakai1971-wd}, extended here to ultraweakly closed operator spaces (which we recall are norm-closed subspaces of Banach spaces). Indeed, applying the above  reasoning and Thm. 2.1 in \cite{godefroy2014uniqueness} gives uniqueness of Banach preduals for arbitrary ultraweakly closed subspaces of $B(\hi)$ for separable $\hi$. These predual spaces are unique up to isometry and can be equally well characterized as the spaces of normal functionals on such subspaces (see 2.6 (iii) in \cite{takesaki2001} and  \cite{zhong-jin1992}). The uniqueness of the predual spaces allows for the unambiguous switching between dual descriptions (\textit{\`a la} Schr\"{o}dinger/Heisenberg pictures), which we apply in various operational settings later.
	\end{remark}

	The set of classes of indistinguishable density operators can be understood as a state space in its own right, as is confirmed by the following.

	\begin{proposition}\label{prop:statespace}
		The set $\S(\mathcal{H})/\hspace{-3pt}\sim_{\mathcal{O}}$ is a total convex state space as a subset of the real Banach space $\T(\hi)^{\rm{sa}}/\hspace{-3pt}\sim_{\mathcal{O}}$. Moreover, it is closed in the quotient operational topology.
	\end{proposition}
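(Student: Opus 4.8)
The plan is to verify the three assertions in turn: that $\S(\mathcal{H})/\!\sim_{\mathcal{O}}$ is convex, that the ambient space $\T(\hi)^{\rm sa}/\!\sim_{\mathcal{O}}$ is a real Banach space in which it embeds, that the convex state space is \emph{total} (separated by bounded affine functionals), and finally that it is closed in the quotient operational topology.

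First I would record that the quotient map $q: \T(\hi) \to \T(\hi)/\!\sim_{\mathcal{O}}$ is linear, so it carries the convex set $\S(\hi)$ onto a convex set; since the equivalence relation respects the real-linear structure and the $^*$-operation (the defining functionals $T \mapsto \tr[TA]$ for $A$ ranging over $\mathcal{O}$ can be taken self-adjoint, as the same relation is induced by ${\rm span}\{\mathcal{O}\}$ and hence by its self-adjoint part), the restriction $q: \T(\hi)^{\rm sa} \to \T(\hi)^{\rm sa}/\!\sim_{\mathcal{O}}$ is well-defined onto a real Banach space, exactly as in the proof of Proposition \ref{prop:iml} but with $\T(\hi)^{\rm sa}$ in place of $\T(\hi)$ and the real span of the self-adjoint part of $\mathcal{O}$ in place of ${\rm span}\{\mathcal{O}\}$. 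Thus $\S(\hi)/\!\sim_{\mathcal{O}} = q(\S(\hi))$ sits inside $\T(\hi)^{\rm sa}/\!\sim_{\mathcal{O}}$ as a convex subset.

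Next, for totality, the bounded affine functionals separating points of the quotient are exactly those induced by elements of ${\rm span}\{\mathcal{O}\}^{\rm cl}$: by Proposition \ref{prop:iml} (or its real analogue) the dual of $\T(\hi)^{\rm sa}/\!\sim_{\mathcal{O}}$ is the self-adjoint part of ${\rm span}\{\mathcal{O}\}^{\rm cl}$, and a functional $[T] \mapsto \tr[TA]$ is well-defined on the quotient precisely because $T \sim_{\mathcal{O}} T'$ forces equality of these traces — this is tautological from the definition of the equivalence relation and the closure remark preceding Proposition \ref{prop:iml}. Two distinct classes $[T] \neq [T']$ are by definition distinguished by some $A \in \mathcal{O}$, hence by the corresponding bounded affine functional, which gives totality.

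For closedness in the quotient operational topology, I would argue as follows: the quotient operational topology on $\T(\hi)^{\rm sa}/\!\sim_{\mathcal{O}}$ is by construction the weakest topology making all the functionals $[T] \mapsto \tr[TA]$, $A \in \mathcal{O}^{\rm cl}$, continuous, i.e. it is the weak topology induced by the predual duality of Proposition \ref{prop:iml}. A class $[T]$ lies in $\S(\hi)/\!\sim_{\mathcal{O}}$ iff it contains a representative that is positive with unit trace; I claim this is equivalent to the two conditions $\tr[TA] \ge 0$ for every \emph{positive} $A \in \mathcal{O}^{\rm cl}$ and $\tr[T\,\id] = 1$ whenever $\id \in {\rm span}\{\mathcal{O}\}^{\rm cl}$ — but in general $\id$ need not be in the span, so instead one should phrase membership intrinsically: $\S(\hi)/\!\sim_{\mathcal{O}}$ is the image under the (continuous, open) quotient map of the operationally closed set $\S(\hi) \subseteq \T(\hi)^{\rm sa}$, and one uses that $\S(\hi)$ is a closed, bounded (hence, modulo care, the right compactness) subset together with the fact that the quotient map is continuous and the equivalence classes meeting $\S(\hi)$ form a saturated set. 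Concretely, I expect the cleanest route is: show $q^{-1}(q(\S(\hi))) = \S(\hi) + {}^{\perp}\mathcal{O}^{\rm sa}$ is operationally closed in $\T(\hi)^{\rm sa}$ (a closed convex set plus a closed subspace need not be closed in general, so this is the delicate point), and then invoke that a quotient map sends saturated closed sets to closed sets.

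The main obstacle is precisely this last closedness claim: the sum of the closed convex set $\S(\hi)$ and the closed subspace ${}^{\perp}\mathcal{O}$ is not automatically closed in an infinite-dimensional space. I would handle it by working with the predual duality directly rather than with sums of sets: identify $\T(\hi)^{\rm sa}/\!\sim_{\mathcal{O}}$ with the predual of $M := ({\rm span}\{\mathcal{O}\}^{\rm cl})^{\rm sa}$, note the quotient operational topology is the $\sigma(M_*, M)$-topology, and characterize $q(\S(\hi))$ as $\{\phi \in M_* : \phi(A) \ge 0 \ \forall A \in M,\ A \ge 0,\ \text{and}\ \|\phi\| \le 1\ \text{attained suitably}\}$ — a set cut out by weak-$*$ closed conditions (positivity is a weak-$*$ closed condition, and the relevant normalization/norm bound is weak-$*$ closed on the unit ball, which is weak-$*$ compact by Banach--Alaoglu). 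Care is needed because unitality/normalization of the original states may translate into a non-obvious condition on $M_*$; if $\mathcal{O}$ does not contain enough operators to detect the trace, one restricts attention to the fact that all representatives in a class have the same trace only when $\id$ is accessible, and otherwise phrases closedness relative to the natural positive cone, which suffices since the statement only asserts closedness of the state space inside the self-adjoint quotient and that follows once the positive cone is shown weak-$*$ closed and the normalization is handled by the earlier observation that every class in $\S(\hi)/\!\sim_{\mathcal{O}}$ has a representative in $\S(\hi)$.
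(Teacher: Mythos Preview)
For convexity and totality your argument aligns with the paper's: both use that the quotient map is linear (hence preserves convexity) and that the bounded affine functionals on the quotient are exactly the effects in $\mathrm{span}\{\mathcal{O}\}^{\rm cl}$, which separate classes by construction. The paper phrases this directly in terms of effects well-defined on $\mathcal{O}$-classes; you route it through Proposition~\ref{prop:iml}, but the content is the same.

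For closedness the paper takes a much shorter route than you do. It simply notes that $\S(\hi)$ is operationally closed in $\T(\hi)$ (any operational limit of states is positive and unit-trace, hence a state) and that the quotient operational topology is the quotient of the operational topology on $\T(\hi)$, concluding that a limit of classes $[\rho_n]_\mathcal{O}$ is again of the form $[T]_\mathcal{O}$ with $T \in \S(\hi)$. It never engages with the saturation $\S(\hi) + {}^{\perp}\mathcal{O}$ or with an intrinsic weak-$*$ description of $q(\S(\hi))$. Your route via positivity and normalization conditions on the predual is more scrupulous, and you have correctly flagged the delicate point---normalization is detectable from $\mathcal{O}$ only when $\id \in \mathrm{span}\{\mathcal{O}\}^{\rm cl}$---but you do not finish: the last paragraph dissolves into hedging rather than a proof. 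Your worry about general $\mathcal{O}$ is in fact legitimate (for $\mathcal{O}=\{A\}$ with $A$ compact, positive and injective on infinite-dimensional $\hi$, one gets $q(\S(\hi))=(0,\|A\|]$, which is not closed), but in every instance where the paper invokes this proposition $\mathcal{O}$ is a unital algebra or the image of a unital channel, so $\id$ is available and your weak-$*$ argument would complete cleanly there. In short: you have been more careful than the paper on this point, at the cost of not actually closing the argument.
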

	
	\begin{proof}
		Since the real linear structure of $\T(\hi)^{\rm{sa}}/\hspace{-3pt}\sim_{\mathcal{O}}$ comes from $\T(\hi)^{\rm{sa}}$, convexity is preserved under the quotient. In particular, writing $[\_]_\mathcal{O}$ for the $\mathcal{O}$-equivalence classes, for any $\rho,\rho' \in \S(\hi)$ and $0 \leq \lambda \leq 1$ we have
		\[
		\lambda[\rho]_\mathcal{O} + (1-\lambda)[\rho']_\mathcal{O} = [\lambda \rho + (1-\lambda)\rho']_\mathcal{O} \in \S(\hi)/\hspace{-3pt}\sim_{\mathcal{O}}.
		\]
		
		The bounded affine functionals $\S(\mathcal{H}) \to [0,1]$ are given by $\rho \mapsto \tr[\rho \, \F]$ for with $\F \in \Eff(\hi)$. The effects on $\S(\mathcal{H})/\hspace{-3pt}\sim_{\mathcal{O}}$, i.e, bounded affine functionals $\S(\mathcal{H})/\hspace{-3pt}\sim_{\mathcal{O}} \to [0,1]$, are then given by $\rho \mapsto \tr[\rho \, \F]$ with $\F \in \mathcal{E}(\hi)$ being well-defined on the $\mathcal{O}$-equivalence classes, i.e., such that for all $\rho, \rho' \in \S(\hi)$ we have
		\[
		\rho \sim_\mathcal{O} \rho' \Rightarrow \tr[\rho \, \F] = \tr[\rho' \, \F].
		\] 
		Since the operational equivalence classes taken with respect to $\mathcal{O}$ and $\rm{span}\{\mathcal{O}\}^{\rm cl}$ are the same, and clearly the effects outside of $\rm{span}\{\mathcal{O}\}^{\rm cl}$ will not be well-defined on the $\mathcal{O}$-equivalence classes, we can conclude that $\F \in \rm{span}\{\mathcal{O}\}^{\rm cl}$. The effects on $\S(\mathcal{H})/\hspace{-3pt}\sim_{\mathcal{O}}$ are then given by the operators in $\Eff(\hi) \cap \rm{span}\{\mathcal{O}\}^{\rm cl}$. Such effects separate the elements of $\S(\mathcal{H})/\hspace{-3pt}\sim_{\mathcal{O}}$ by construction, providing total convexity.
		
		The state space $\S(\hi)$ is operationally closed in $\T(\hi)$, since for any sequence of states $(\rho_n) \subset \S(\hi)$ such that $\lim_{n \to \infty}\tr[\rho_n \, A] = \tr[T \, A]$ for all $A \in B(\hi)$ and some $T\in \T(\hi)$, we can conclude that $T \in \S(\hi)$. Indeed, the continuity of the trace gives positivity and normalization of $T$. The operational topology on $\S(\hi)/\hspace{-3pt}\sim_{\mathcal{O}}$ is the quotient topology of the one on $\T(\hi)$ so we have
		$$\lim_{n \to \infty} [\rho_n]_\mathcal{O} = [T]_\mathcal{O} \in \S(\hi)/\hspace{-3pt}\sim_{\mathcal{O}}.$$
	\end{proof}

	A state space of the form above will be called an \emph{operational state space}. Total convex subsets of real Banach spaces can be embedded in the general framework of base-norm spaces and the dual order unit spaces (e.g., \cite{beltrametti1997effect}), pointing to potential generalizations of the notions presented in this work.
	
	Often in this paper, the set $\mathcal{O}$ is the image of a unital normal positive map. In this case, the corresponding state space admits an equivalent useful characterization. 
	
	\begin{proposition}\label{generalst}
		Any normal, positive, unital map $\Lambda:B(\mathcal{K})\to B(\mathcal{H})$ provides a state space isomorphism 
		\[
		\S(\mathcal{H})/\hspace{-3pt}\sim_{\Im \Lambda}
		\cong \Lambda_* (\S(\mathcal{H})),
		\]
	\end{proposition}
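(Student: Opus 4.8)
The plan is to show that the predual $\Lambda_*$ descends to the quotient and implements the isomorphism. First I would note that, since $\Lambda$ is normal and unital, it admits a predual $\Lambda_* : \T(\mathcal{H}) \to \T(\mathcal{K})$, characterised by $\tr[\Lambda_*(T)B] = \tr[T\Lambda(B)]$ for all $T \in \T(\mathcal{H})$ and $B \in B(\mathcal{K})$: indeed, normality makes each $B \mapsto \tr[T\Lambda(B)]$ an ultraweakly continuous functional on $B(\mathcal{K})$, hence represented by a unique $\Lambda_*(T) \in \T(\mathcal{K})$, linear in $T$. Positivity of $\Lambda$ makes $\Lambda_*$ positive, and unitality gives $\tr[\Lambda_*(T)] = \tr[T\Lambda(\id)] = \tr[T]$, so $\Lambda_*$ restricts to a map $\S(\mathcal{H}) \to \S(\mathcal{K})$ with image $\Lambda_*(\S(\mathcal{H}))$. (Note that normality and unitality alone suffice here, not the complete positivity assumed for channels in Section \ref{sec:pre}.)

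The key step is the chain of equivalences, valid for any $\rho,\rho' \in \S(\mathcal{H})$:
\[
\rho \sim_{\Im\Lambda} \rho' \iff \tr[\Lambda_*(\rho)B] = \tr[\Lambda_*(\rho')B] \ \ \forall B \in B(\mathcal{K}) \iff \Lambda_*(\rho) = \Lambda_*(\rho'),
\]
where the first equivalence is just the defining duality relation for $\Lambda_*$ applied to the operators $A = \Lambda(B) \in \Im\Lambda$, and the second uses that $B(\mathcal{K}) = \T(\mathcal{K})^*$ separates points of $\T(\mathcal{K})$. Consequently the assignment $[\rho]_{\Im\Lambda} \mapsto \Lambda_*(\rho)$ is a well-defined injection of $\S(\mathcal{H})/\!\sim_{\Im\Lambda}$ into $\S(\mathcal{K})$, and it is surjective onto $\Lambda_*(\S(\mathcal{H}))$ by definition of that set.

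Finally I would check that this bijection preserves convex structure in both directions. Since the affine structure on $\S(\mathcal{H})/\!\sim_{\Im\Lambda}$ is inherited from $\T(\mathcal{H})^{\rm sa}$ (cf. Proposition \ref{prop:statespace}) and $\Lambda_*$ is linear, for $0 \le \lambda \le 1$ we have $\Lambda_*(\lambda\rho + (1-\lambda)\rho') = \lambda\Lambda_*(\rho) + (1-\lambda)\Lambda_*(\rho')$, so the map is affine; its inverse is automatically affine as the inverse of an affine bijection between convex sets. Hence the map is a state space isomorphism, and $\Lambda_*(\S(\mathcal{H}))$ inherits from Proposition \ref{prop:statespace} the structure of a total convex operational state space. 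I do not anticipate a genuine obstacle: the entire argument rests on the duality relation defining $\Lambda_*$ together with the separation property of $B(\mathcal{K})$; the only points worth stating carefully are the reduction of the hypotheses (normal and unital, rather than CP) and the already-noted fact that $\sim_{\Im\Lambda}$ coincides with $\sim_{\mathrm{span}\{\Im\Lambda\}^{\rm cl}}$, so that testing only against operators of the form $\Lambda(B)$ loses nothing.
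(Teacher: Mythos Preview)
Your proof is correct and follows essentially the same route as the paper: both identify the $\Im\Lambda$-equivalence relation with equality under $\Lambda_*$ (the paper phrases this as ${}^\perp\Im\Lambda = \ker\Lambda_*$, which is exactly your chain of equivalences), and then observe that $\Lambda_*$ therefore descends to a convexity-preserving bijection onto its image. Your version is more explicit about the existence and properties of $\Lambda_*$ and about affinity of the inverse, but the argument is the same.
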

	
	\begin{proof}
		Since $\Lambda$ is normal, we can write ${}^\perp \Im \Lambda = \ker \Lambda_* $, and thus $\T (\mathcal{H})/\hspace{-3pt}\sim_{\Im \Lambda} = \T (\mathcal{H})/\ker \Lambda_*$. Then $\Lambda_* $ restricts to an \emph{invertible} bounded linear map \newline $\T (\mathcal{H})/\hspace{-3pt}\sim_{\Im \Lambda} \to \Im \Lambda_* $. Since $\Lambda$ is linear, unital and positive, $\Lambda_*$ restricts further to a convexity preserving bijection $\S (\mathcal{H})/\hspace{-3pt}\sim_{\Im \Lambda} \to \Lambda_* (\S(\mathcal{H}))$.\footnote{Note that in general, this correspondence doesn't hold at the level of the ambient Banach spaces since $\Im \Lambda_* $ might not be closed.}
	\end{proof}
	
	\subsubsection{Operational equivalence from symmetry}\label{subsubsec:oes}

	An important example of operational equivalence arises in the presence of symmetry. Consider a strongly continuous unitary representation $U:G \to B(\hi)$ of a (locally compact) group $G$, writing $g.A$ to stand for $ U(g) A U(g)^* $ with $A\in B(\hi)$ and $T.g$ for $U(g)^* T U(g)$ with $T \in \mathcal{T}(\hi)$.
	Two density operators are identified if they cannot be distinguished by observables invariant under the representation, and therefore \emph{invariant algebra} $B(\hi)^G := \{A \in B(\hi)~|~g.A = A\}$ (which is a von Neumann algebra as the commutant of a unitary group) plays a key role. Moreover, as we shall see, observables with effects in the invariant algebra can be understood as those which can be defined without reference to an external frame.
	
	\begin{definition}
		The operational equivalence relation on $\T(\hi)$ taken with respect to the invariant algebra $B(\his)^G$ is denoted $T \sim_G T'$. The space of $G$-\emph{equivalent trace class operators} is given by
		\[
		\T(\hi)_G := \T(\hi)/\hspace{-3pt}\sim_{G},
		\]
		while the operational state space of $G$-\emph{equivalent states} is defined to be
		\[
		\S(\hi)_G := \S(\hi)/\hspace{-3pt}\sim_G.
		\]
	\end{definition}
	Prop. \ref{prop:statespace} and \ref{prop:iml} then ensure the following:
	\begin{proposition}
		There is a Banach space isomorphism
		\[
		B(\his)^G_* \cong \T(\his)_G.
		\]
		Moreover, the subset $\S(\his)_G \subset \T(\his)^{sa}_G$ is a total convex state space. 
	\end{proposition}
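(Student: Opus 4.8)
The plan is to obtain both claims as immediate specializations of Propositions \ref{prop:iml} and \ref{prop:statespace}, taking the distinguished set of effects to be $\mathcal{O} = B(\his)^G$. The only preliminary observation needed is that $B(\his)^G$, being the commutant of the unitary group $\{U(g)\}_{g\in G}$, is a von Neumann algebra; in particular it is a linear subspace of $B(\his)$ and is ultraweakly closed, so that $\mathrm{span}\{B(\his)^G\}^{\rm cl} = B(\his)^G$. With this identification in hand the two halves of the proposition follow separately.

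For the Banach space isomorphism, I would simply apply Proposition \ref{prop:iml} with $\mathcal{O} = B(\his)^G$. Its conclusion, $\big[\mathrm{span}\{\mathcal{O}\}^{\rm cl}\big]_* \cong \T(\his)/\hspace{-3pt}\sim_{\mathcal{O}}$, combined with the observation above and with the definitions $\T(\his)_G := \T(\his)/\hspace{-3pt}\sim_G$ and $\sim_G\; =\; \sim_{B(\his)^G}$, yields directly $B(\his)^G_* \cong \T(\his)_G$. I would note that the isomorphism is isometric because the predual is equipped with, and its identification in Proposition \ref{prop:iml} is built from, the quotient norm on $\T(\his)/{}^\perp\mathcal{O}$; one may also invoke the uniqueness-of-predual remark following Proposition \ref{prop:iml} to make the identification canonical.

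For the second assertion, I would apply Proposition \ref{prop:statespace}, again with $\mathcal{O} = B(\his)^G$: it states precisely that $\S(\his)/\hspace{-3pt}\sim_{\mathcal{O}}$ is a total convex state space as a subset of the real Banach space $\T(\his)^{\rm sa}/\hspace{-3pt}\sim_{\mathcal{O}}$ (and is moreover closed in the quotient operational topology). Rewriting with the notation $\S(\his)_G$ and $\T(\his)^{\rm sa}_G$ gives the claim. It is worth remarking that here the separating family of effects on $\S(\his)_G$ is $\Eff(\his)\cap B(\his)^G$, i.e.\ exactly the invariant effects, which is the operationally meaningful statement: $G$-equivalent states are by construction separated precisely by invariant observables.

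There is essentially no hard step: the content has been front-loaded into the two general propositions, and the proof is bookkeeping plus the one structural fact that $B(\his)^G$ is a von Neumann algebra (already noted in the text). If anything, the only point requiring a moment's care is making sure the Banach isomorphism is read off at the level of normed spaces and not merely of vector spaces, which is why I would explicitly point to the quotient-norm construction in the proof of Proposition \ref{prop:iml}.
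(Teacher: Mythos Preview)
Your proposal is correct and matches the paper's own proof: the paper simply states that Propositions \ref{prop:iml} and \ref{prop:statespace} ensure the result, which is exactly what you do by specializing to $\mathcal{O}=B(\his)^G$ and noting that this is already an ultraweakly closed linear space. Your additional remarks (isometry via the quotient norm, identification of the separating effects as the invariant ones) are correct elaborations but not required.
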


	Note that in general $\S(\hi)_G$ cannot be identified with the set $\S(\hi)^G $ of invariant states, i.e., those $\rho\in \S(\hi)$ for which $\rho.g =\rho $ (here and in the sequel, a superscript $G$ denotes true invariants, and a subscript $G$ the relevant quotient space), this last set being generically empty if $G$ is not compact. By contrast, there is often an abundance of invariant observables, which is most clearly seen in the setting of a composite system. The state space $\S(\hi)_G$ is generically non-trivial and justifies on mathematical grounds that invariance should be stipulated on the observables rather than states (see \cite{Loveridge2017a,Miyadera2015e,loveridge2017relativity} on which the present approach is based). The $G$-equivalent states defined here are similar to the "symmetry-equivalent" states defined in \cite{krumm2021quantum} (Def.18) in the context of finite abelian groups. We note also that there is some similarity with the method of co-invariants, in which states on a $G$-orbit are identified. However, the $G$-equivalence here differs from both invariance and co-invariance, thereby yielding a new class of physical states which have not been studied in full to date. We are also able to avoid the use of distributions/rigged Hilbert spaces which are needed for constructing the physical Hilbert space in the perspective-neutral approach \cite{de2021perspective}, which is defined as the space of invariant (`kinematical') Hilbert space vectors/distributions. We note immediately that by setting $B(\hi)^G$ as the collection of operators on which the equivalence of states is defined, any state on a $G$-orbit is operationally indistinguishable/equivalent to any other. This is as one would expect of gauge transformations: all states related by a gauge transformation are physically equivalent. 
	
	If the group $G$ is compact the $G$-\emph{twirl} (or \emph{incoherent group average}) $\mathcal{G}:B(\hi)\to B(\hi)$ is given by
	\begin{equation}
		\mathcal{G}(A)=\int_G  U(g)A U(g)^*  \, d\mu(g) ,
	\end{equation}
	where $\mu$ is (normalised) Haar measure. $\mathcal{G}$ is a unital normal map with pre-dual $\mathcal{G}_* : \T(\hi) \to \T(\hi)$ taking the form
	$\mathcal{G}_* (\rho)=\int_G U(g)^* \rho U(g)  \, d\mu(g)$. Both $\mathcal{G}$ and $\mathcal{G}_* $ are idempotent, respectively onto the sets $B(\hi)^G$ and $\T(\hi)^G$ of invariant bounded, respectively trace class, operators. The image under $\mathcal{G}_*$ of a state is often interpreted (though not always with clear operational meaning) as an invariant `version' of the given state. If $G$ is not compact the integral does not converge in general on states or operators, and we therefore avoid it in this setting.
	
	{Interestingly, in the case of compact $G$, there is no operational difference between stipulating the invariance requirement on observables or states (or both):
		\begin{proposition}\label{prop:tw1}
			$\tr[\mathcal{G}^*(A) \, \rho]=\tr[A \, \mathcal{G}(\rho)]=\tr[\mathcal{G}^*(A) \, \mathcal{G}(\rho)]$.
		\end{proposition}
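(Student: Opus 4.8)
The plan is to view the three expressions as the three ways of imposing invariance---on the observable, on the state, or on both---and to check that they all give the same number using only two facts already on record above: the adjoint (predual) relation $\tr[\mathcal{G}^*(A)T]=\tr[A\,\mathcal{G}(T)]$ between the operator twirl $\mathcal{G}^*$ (the map written $\mathcal{G}:B(\hi)\to B(\hi)$ above) and the state twirl $\mathcal{G}$ (its predual, written $\mathcal{G}_*$ above), together with the idempotence of both. No new analytic input is required: the interchange of the trace with the Haar integral is already subsumed in $\mathcal{G}$ being a normal map with the stated predual.

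For the left equality, the definition of the predual gives $\tr[\mathcal{G}^*(A)\rho]=\tr[A\,\mathcal{G}(\rho)]$ directly (explicitly, $\int_G\tr[U(g)AU(g)^*\rho]\,d\mu(g)=\int_G\tr[A\,U(g)^*\rho U(g)]\,d\mu(g)$, valid term by term in $g$ by cyclicity of the trace). For the right equality I would use idempotence: since $\mathcal{G}\circ\mathcal{G}=\mathcal{G}$, equivalently $\mathcal{G}^*\circ\mathcal{G}^*=\mathcal{G}^*$, a second use of the predual relation yields $\tr[\mathcal{G}^*(A)\,\mathcal{G}(\rho)]=\tr[\mathcal{G}^*(\mathcal{G}^*(A))\,\rho]=\tr[\mathcal{G}^*(A)\,\rho]$, which closes the loop with the left-hand side. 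Alternatively, and more transparently, one argues from invariance: $\mathcal{G}(\rho)\in\T(\hi)^G$, so $U(g)^*\mathcal{G}(\rho)U(g)=\mathcal{G}(\rho)$ for all $g$, giving $\tr[\mathcal{G}^*(A)\,\mathcal{G}(\rho)]=\int_G\tr[A\,U(g)^*\mathcal{G}(\rho)U(g)]\,d\mu(g)=\tr[A\,\mathcal{G}(\rho)]$ by normalisation of $\mu$.

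There is no genuine obstacle: the statement is a two-line consequence of adjunction and idempotence, and the only points needing a word---the notational identification noted above and the legitimacy of passing the trace through the Haar integral---are covered by the material preceding the proposition. Conceptually, the whole content (that for compact $G$ it makes no operational difference whether one symmetrises the observable, the state, or both) is carried precisely by the idempotence of the twirl together with the adjunction $\tr[\mathcal{G}^*(A)T]=\tr[A\,\mathcal{G}(T)]$.
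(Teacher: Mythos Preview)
Your argument is correct and matches the paper's own treatment, which simply notes that the proof ``uses the invariance of $\mu$ and is straightforward'' without giving details. Your explicit use of the predual/adjoint relation together with idempotence (or, equivalently, the invariance of $\mathcal{G}(\rho)$) is exactly the content behind that remark; you have also correctly flagged the slight notational mismatch between $\mathcal{G}^*/\mathcal{G}$ in the proposition and $\mathcal{G}/\mathcal{G}_*$ in the surrounding text.
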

		
		The proof uses the invariance of $\mu$ and is straightforward. Moreover, by Prop. \ref{generalst}, for $G$ compact there is an isomorphism of state spaces $S (\mathcal{\hi})^G = \mathcal{G}_*(\S(\hi)) \cong \S (\mathcal{\hi})_G $. Furthermore, in this case, the correspondence lifts to the ambient Banach spaces.
		
		\begin{proposition}
			If $G$ is compact, then the Banach spaces $\T(\hi)^G $ and $\T(\hi)_G $ are isometrically isomorphic.
		\end{proposition}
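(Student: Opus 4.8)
The plan is to use the twirl predual $\mathcal{G}_*$ itself as the isometry. First I would collect the structural facts about $\mathcal{G}$ already recorded above: $\mathcal{G}\colon B(\hi)\to B(\hi)$ is a normal unital positive idempotent with $\Im\mathcal{G}=B(\hi)^G$, and $\mathcal{G}_*\colon\T(\hi)\to\T(\hi)$, $\mathcal{G}_*(\rho)=\int_G U(g)^*\rho U(g)\,d\mu(g)$, is an idempotent whose range is exactly $\T(\hi)^G$. A one-line Bochner-integral estimate, $\|\mathcal{G}_*(\rho)\|_1\le\int_G\|U(g)^*\rho U(g)\|_1\,d\mu(g)=\|\rho\|_1$, using normalisation of $\mu$ and unitary invariance of the trace norm, shows that $\mathcal{G}_*$ is a trace-norm contraction.

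Next I would identify the equivalence classes. Since $\Im\mathcal{G}=B(\hi)^G$, the relation $\sim_G$ is $\sim_{\Im\mathcal{G}}$, and normality of $\mathcal{G}$ gives ${}^\perp\Im\mathcal{G}=\ker\mathcal{G}_*$: for $T\in\T(\hi)$ one has $T\in{}^\perp\Im\mathcal{G}$ iff $\tr[\mathcal{G}_*(T)A]=0$ for all $A\in B(\hi)$ iff $\mathcal{G}_*(T)=0$. Hence, as in the proof of Prop.~\ref{prop:iml}, $\T(\hi)_G=\T(\hi)/{}^\perp B(\hi)^G=\T(\hi)/\ker\mathcal{G}_*$ with the quotient norm, and $\mathcal{G}_*$ descends to a well-defined linear map $\overline{\mathcal{G}_*}\colon\T(\hi)_G\to\T(\hi)^G$, $[\rho]_G\mapsto\mathcal{G}_*(\rho)$. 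It is surjective because $\mathcal{G}_*$ is idempotent onto $\T(\hi)^G$ (so $\overline{\mathcal{G}_*}([T]_G)=T$ for $T\in\T(\hi)^G$), and injective by the kernel identification.

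It remains to verify that $\overline{\mathcal{G}_*}$ is isometric, and this is where idempotency does the real work — in particular the subtlety flagged in the footnote to Prop.~\ref{generalst}, that $\Im\Lambda_*$ need not be closed, evaporates here since $\Im\mathcal{G}_*=\ker(\mathrm{id}-\mathcal{G}_*)$ is closed. On the one hand, for every $S\in\ker\mathcal{G}_*$ contractivity gives $\|\mathcal{G}_*(\rho)\|_1=\|\mathcal{G}_*(\rho+S)\|_1\le\|\rho+S\|_1$, hence $\|\overline{\mathcal{G}_*}([\rho]_G)\|_1\le\inf_{S\in\ker\mathcal{G}_*}\|\rho+S\|_1=\|[\rho]_G\|$. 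On the other hand, $S_0:=\mathcal{G}_*(\rho)-\rho$ lies in $\ker\mathcal{G}_*$ by idempotency, and $\rho+S_0=\mathcal{G}_*(\rho)$, so $\|[\rho]_G\|\le\|\mathcal{G}_*(\rho)\|_1=\|\overline{\mathcal{G}_*}([\rho]_G)\|_1$. The two inequalities combine to equality, so $\overline{\mathcal{G}_*}$ is an isometric isomorphism.

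I do not expect a genuine obstacle; the only thing to be careful about is not to over-claim, since the whole argument rests on the existence and idempotency of $\mathcal{G}_*$, which is exactly what is unavailable for noncompact $G$ (and is why the statement is restricted to compact $G$). One could alternatively derive the isometry abstractly from Prop.~\ref{prop:iml} together with the fact that a von Neumann algebra split off by a normal conditional expectation has its predual realised as the range of the predual expectation, but the direct computation above is shorter and self-contained.
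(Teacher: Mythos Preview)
Your proof is correct and follows essentially the same route as the paper: factor $\mathcal{G}_*$ through the quotient $\T(\hi)/\ker\mathcal{G}_*$, use the Bochner estimate $\|\mathcal{G}_*(T)\|_1\le\|T\|_1$ for one inequality, and the idempotency trick $T+(\mathcal{G}_*(T)-T)=\mathcal{G}_*(T)$ for the reverse. Your additional remarks (the explicit verification ${}^{\perp}\Im\mathcal{G}=\ker\mathcal{G}_*$ via normality, and the observation that $\Im\mathcal{G}_*=\ker(\mathrm{id}-\mathcal{G}_*)$ is closed) are helpful glosses but not departures from the paper's argument.
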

		
		\begin{proof}
			We have $ \T(\hi)_G = \T(\hi)/\ker(\mathcal{G}_* ) $ and $\T(\hi)^G = \Im \mathcal{G}_* $. Moreover, $\mathcal{G}_*$ factorizes through a bijective map $\tilde{\mathcal{G}}_*:\T(\hi) /\ker(\mathcal{G}_* ) \to \Im \mathcal{G}_* $. We show that $\tilde{\mathcal{G}}_*$ is an isometry. Notice first, that $\mathcal{G}_* $ is a contraction since
			\[
			|| \mathcal{G}_* (T) ||_1 = \no{
				\int_G g\cdot T  \, d\mu(g)}_1\leq \int_G || g\cdot T ||_1  \, d\mu(g) \leq \int_G ||T||_1   \, d\mu(g)= ||T ||_1,
			\]
			and therefore for all $\sigma \in \ker(\mathcal{G}_* )$ we have
			\[
			|| \tilde{\mathcal{G}}_*(T + \ker(\mathcal{G}_* )) ||_1 = || \mathcal{G}_* (T) ||_1= || \mathcal{G}_* (T + \sigma) ||_1 \leq || T + \sigma ||_1.
			\]
			
			Hence $||\tilde{\mathcal{G}}_*(T+ \ker(\mathcal{G}_* )) ||_1\leq ||T + \ker(\mathcal{G}_* )||_1= \inf_{\mu \in \ker(\mathcal{G}_*)} ||T + \mu ||_1$. Then, since $\mathcal{G}_*$ is idempotent we also have
			\begin{align*}
				||T + \ker(\mathcal{G}_* )||_1&= \inf_{\mu \in \ker(\mathcal{G}_*)} ||T + \mu ||_1 \leq ||T + (\mathcal{G}_* (T) - T ) ||_1\\
				 &= ||\mathcal{G}_* (T) ||_1=||\tilde{\mathcal{G}}_*(T+ \ker(\mathcal{G}_* )) ||_1, 
			\end{align*}

			which shows that $\tilde{\mathcal{G}}_*$ is indeed isometric.
		\end{proof}
		
		The setting of invariant observables and operational state spaces therefore aligns with e.g. the quantum information approach to QRFs \cite{Bartlett2007} in the case of compact groups, but is better suited for generalization to the locally compact case.
		
		\subsection{Localizability}
		
		The localizability property of a POVM described below is used to recover the standard kinematics of quantum mechanics from the relational one presented in the next section (see also \cite{heinonen2003norm,Miyadera2015e,Loveridge2017a}), and in defining frame transformations later. Through the Born rule probabilities (eq. \eqref{eq:born}), for a fixed observable $\E$, a state gives rise to a probability measure on $\Sigma$, often to be denoted by $\mu^\E_\omega$. We will refer to states as being `highly localized' with respect to $\E$ in some (open) set $X$ or around some point $x \in \Sigma$ (i.e., in an open neighbourhood $X$ containing $x$), meaning that the given probability $\mu^\E_\omega(X)$ is close to unity. If $\mathsf{P}: \mathcal{B}(\Sigma) \to B(\hi)$ is a PVM, for any $X \in \mathcal{B}(\Sigma)$ for which $\mathsf{P}(X) \neq 0$, there is a state $\omega$ for which $\tr[\omega \, \mathsf{P} (X)] = 1$ (set $\omega$ to be a projection onto any unit vector in the range of $\mathsf{P}(X)$), and this state, with respect to $\mathsf{P}$, is perfectly localized (with probabilistic certainty) in $X$. For example, if $\mathsf{P}=\mathsf{P}^A$, then any eigenvector of $A$ with  eigenvalue in $X$ can be understood in this way. However, since $A$ may have no eigenvalues (e.g. the position operator on a dense subset of $L^2(\mathbb{R})$), the above characterisation is more general. \emph{A fortiori}, for a POVM, the probability measures described by \eqref{eq:born} are typically not localized in any open set; there exist POVMs for which there is no state satisfying $p_{\omega}^{\E}(X)=1$ for any $X \neq \Sigma$. There are, however, POVMs which `almost' have the localizability property enjoyed by all PVMs:
		
		\begin{definition}[Norm-$1$ property]\label{def:norm1}
			A POVM $\E : \mathcal{B}(\Sigma) \to B(\hi)$ satisfies the \emph{norm-$1$ property} (see e.g. \cite{heinonen2003norm}) if for all $X \in \mathcal{B}(\Sigma)$ for which $\E(X) \neq 0$, it holds that $\no{\E (X)}=1$.  Such POVMs are called \emph{localizable}.
		\end{definition}
		
		\begin{proposition}\label{prop:norm1eq}
			The following are equivalent \cite{heinonen2003norm}:
			\begin{enumerate}
				\item $\E$ is  localizable 
				\item For every $X$ for which $\E(X)\neq 0$, there is a sequence of unit vectors $(\varphi_n)\subset \hi$
				for which $\lim_{n \to \infty}\ip{\varphi_n}{\E(X)\varphi_n}=1$. 
				\item For every $\E(X)\neq 0$ and for any $\epsilon > 0$ there exists a unit vector $\varphi_\epsilon \in \hi$ for which $\ip{\varphi_\epsilon}{\E(X)\varphi_\epsilon}>1-\epsilon$ (this is called the $\epsilon$-\emph{decidability property}).
			\end{enumerate}
		\end{proposition}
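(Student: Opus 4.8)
The plan is to prove the chain of implications $(1) \Rightarrow (2) \Rightarrow (3) \Rightarrow (1)$, since each step is essentially an unwinding of definitions combined with the operator-norm formula for positive operators. Throughout, fix $X \in \mathcal{B}(\Sigma)$ with $\E(X) \neq 0$, and recall that $\E(X)$ is a positive operator with $0 \leq \E(X) \leq \id$, so that $\no{\E(X)} \leq 1$ always; the content of localizability is the reverse inequality.

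First, $(1) \Rightarrow (2)$: assuming $\no{\E(X)} = 1$, I would use the standard characterization of the operator norm of a positive (hence self-adjoint) bounded operator, namely $\no{\E(X)} = \sup\{\ip{\varphi}{\E(X)\varphi} : \varphi \in \hi,\ \no{\varphi} = 1\}$. By definition of the supremum, for each $n$ there exists a unit vector $\varphi_n$ with $\ip{\varphi_n}{\E(X)\varphi_n} > 1 - \tfrac{1}{n}$; combined with $\ip{\varphi_n}{\E(X)\varphi_n} \leq 1$ this forces $\lim_{n\to\infty}\ip{\varphi_n}{\E(X)\varphi_n} = 1$. Next, $(2) \Rightarrow (3)$ is immediate: given the sequence $(\varphi_n)$ from $(2)$ and any $\epsilon > 0$, convergence to $1$ means there is some $N$ with $\ip{\varphi_N}{\E(X)\varphi_N} > 1 - \epsilon$, so $\varphi_\epsilon := \varphi_N$ works. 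Finally, $(3) \Rightarrow (1)$: given the $\epsilon$-decidability property, for every $\epsilon > 0$ the supremum defining $\no{\E(X)}$ exceeds $1 - \epsilon$, hence $\no{\E(X)} \geq 1$; together with $\no{\E(X)} \leq 1$ this gives $\no{\E(X)} = 1$, i.e. localizability.

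I do not anticipate a genuine obstacle here, as the result is a folklore reformulation (attributed to \cite{heinonen2003norm}); the only point requiring a moment's care is the use of the formula $\no{T} = \sup_{\no{\varphi}=1}\ip{\varphi}{T\varphi}$, which is valid precisely because each $\E(X)$ is positive and self-adjoint — for general bounded operators the numerical radius need not equal the norm, so it is worth noting explicitly that positivity of POVM elements is what makes the three conditions collapse onto one another. One could alternatively phrase $(1)\Leftrightarrow(2)$ via the observation that $1 \in \sigma(\E(X))$ iff $\E(X) - \id$ is not boundedly invertible, producing an approximate eigenvector sequence, but the numerical-range argument above is the most direct route and avoids invoking spectral theory.
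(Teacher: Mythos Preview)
Your proof is correct. The paper does not actually give its own proof of this proposition; it simply states the equivalence and attributes it to \cite{heinonen2003norm}, so there is nothing to compare against beyond confirming that your argument is valid---which it is, since the key identity $\no{\E(X)} = \sup_{\no{\varphi}=1}\ip{\varphi}{\E(X)\varphi}$ for positive operators does all the work, exactly as you noted.
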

		
		A useful consequence is the following: 
		\begin{proposition}\label{prop:locseqgen}
			If a POVM $\E : \mathcal{B}(\Sigma) \to B(\hi)$ satisfies the norm-1 property and $\Sigma$ is metrizable, then for any $x \in \Sigma$ there exists a sequence of pure states $(\omega_n)$ such that the sequence of probability measures $(\mu^\E_{\omega_n})$ converges weakly to the Dirac measure $\delta_x $.
		\end{proposition}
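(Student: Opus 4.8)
The plan is to build the localizing sequence of pure states directly from the $\epsilon$-decidability property of Proposition~\ref{prop:norm1eq}, using a countable neighbourhood basis at $x$ furnished by metrizability. First I would fix $x \in \Sigma$ and, using metrizability, choose a decreasing sequence of open balls $B_n := B(x, 1/n)$ forming a neighbourhood basis at $x$. For each $n$, if $\E(B_n) \neq 0$ then by the $\epsilon$-decidability property (item 3 of Proposition~\ref{prop:norm1eq}, applied with $\epsilon = 1/n$) there is a unit vector $\varphi_n$ with $\ip{\varphi_n}{\E(B_n)\varphi_n} > 1 - 1/n$; set $\omega_n := \dyad{\varphi_n}$. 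One should first dispose of the degenerate possibility that $\E(B_n) = 0$ for some $n$: since the $B_n$ are decreasing, if $\E(B_n) = 0$ for some $n$ then $\E(B_m) = 0$ for all $m \geq n$, and then $\E(\{x\}) \leq \E(B_m) = 0$ for all such $m$; moreover by continuity of the measure $\E(\bigcap_m B_m) = \E(\{x\}) = 0$ would force... — actually one just needs that $\E(B_n) \neq 0$ for all $n$ to run the construction. If some $\E(B_n)=0$, then $x$ lies outside the support of $\E$ in a strong sense, and one can note that the statement is then vacuous or handle it by passing to the smallest $n_0$ with $\E(B_{n_0})=0$ and observing no sequence localizes at $x$; I would simply add the hypothesis (implicit in ``localizable'') that this does not occur, or remark that the claim should be read as conditional on $\E(B_n)\neq 0$ for all $n$.

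Next I would verify weak convergence $\mu^\E_{\omega_n} \to \delta_x$. By the portmanteau theorem it suffices to show $\liminf_n \mu^\E_{\omega_n}(O) \geq \delta_x(O) = \mathbbm{1}[x \in O]$ for every open set $O$. So fix an open $O$ containing $x$ (if $x \notin O$ there is nothing to prove since $\delta_x(O) = 0$). Since the $B_n$ form a neighbourhood basis at $x$, there is $N$ with $B_n \subseteq O$ for all $n \geq N$. Then by monotonicity of the POVM, $\E(B_n) \leq \E(O)$, hence for $n \geq N$,
\[
\mu^\E_{\omega_n}(O) = \ip{\varphi_n}{\E(O)\varphi_n} \geq \ip{\varphi_n}{\E(B_n)\varphi_n} > 1 - \tfrac{1}{n}.
\]
Therefore $\liminf_n \mu^\E_{\omega_n}(O) \geq 1 = \delta_x(O)$, which is what was needed. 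This establishes weak convergence to $\delta_x$, and the $\omega_n$ are pure by construction.

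The main obstacle — though it is more a matter of care than of depth — is the handling of the ``$\E(B_n) \neq 0$'' clause: without some such assumption the statement is false (take $\E$ trivial near $x$), so the proof must either silently assume it, restrict attention to $x$ in the support of $\E$, or phrase the conclusion conditionally; I would flag this explicitly. A secondary technical point is the choice of the right formulation of weak convergence of measures: since $\Sigma$ need not be compact or even locally compact here, I would use the measure-theoretic (portmanteau) definition via open sets rather than the functional-analytic one via bounded continuous functions, so that no tightness or regularity argument is required — the open-set inequality above is all that is used, and it follows immediately from monotonicity of $\E$ together with the neighbourhood-basis property. Everything else is routine.
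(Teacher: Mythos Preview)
Your proof is correct and follows essentially the same route as the paper: the same localizing sequence is built from the $\epsilon$-decidability property applied to shrinking balls, and weak convergence is verified via the portmanteau theorem. The only difference is cosmetic---the paper checks the continuity-set criterion ($\lim_n \mu^\E_{\omega_n}(X)=\delta_x(X)$ for $X$ with $\delta_x(\partial X)=0$) via a case split, whereas your open-set $\liminf$ criterion is slightly more direct; the paper also silently assumes $\E(B_n)\neq 0$, so your flagging of that point is a genuine (if minor) improvement in rigour.
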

		
		\begin{proof}
			Fix $x\in \Sigma $ and denote by $B_n $ the open ball centred at $x$ of radius $\frac{1}{n}$. Since $\E$ satisfies the norm-$1$ property, using the $\epsilon$-decidability property of Prop. \ref{prop:norm1eq} we can choose unit vectors $\varphi_n $ such that $\braket{\varphi_n }{\E(B_n) \varphi_n }> 1 - 1/n $. Denoting by $\omega_n $ the associated pure state $\omega_n = \dyad{\varphi_n }{\varphi_n }$, we have $\tr[\omega_n E(B_n )]= \braket{\varphi_n }{\E(B_n) \varphi_n }> 1 - 1/n $, and thus $ \mu^\E_{\omega_n}(B_n ) > 1 - 1/n $.
			
			We will show weak convergence using the portemanteau theorem \cite{billingsley2013convergence}. We must show that for each measurable set $X$ with negligible boundary (i.e., such that $\delta_x (\partial X) = 0 $ for all $x$) we have: $\lim\limits_{n\to \infty} \mu^\E_{\omega_n}(X) = \delta_x (X) $. We compute:
			
			\[\mu^\E_{\omega_n}(X) = \mu^\E_{\omega_n}(X\setminus B_n ) + \mu^\E_{\omega_n}(X\cap B_n ) .\]
			
			For the first term, we have $\mu^\E_{\omega_n}(X\setminus B_n ) \leq \mu^\E_{\omega_n}(\Sigma \setminus B_n ) = 1-\mu^\E_{\omega_n}(B_n ) \leq \frac{1}{n}$, which vanishes as $n$ goes to infinity. \\
			
			For the second term we distinguish two cases:
			
			\begin{itemize}
				\item If $ x\in X $, by hypothesis we may assume that $x\notin \partial X $ so $x\in \mathring{X} $, the interior of $X$. Since $\mathring{X}$ is open, for $n$ large enough we always have $B_n \subseteq \mathring{X} \subseteq X $, so $X\cap B_n = B_n $, and therefore $\mu^\E_{\omega_n}(X\cap B_n )= \mu^\E_{\omega_n}( B_n )> 1 - 1/n $. Thus, the second term goes to $1$ as $n$ goes to infinity.
				\item If $ x\notin X $, by hypothesis we may assume that $x\notin \partial X $ so $x\in \Sigma \setminus \overline{X} $, the complement of the adherence of $X$, which is an open set. Again for $n$ large enough we always have $B_n \subseteq \Sigma \setminus \overline{X} \subseteq \Sigma \setminus X $, hence $X\cap B_n = \emptyset $, leading to $ \mu^\E_{\omega_n}(X\cap B_n )= 0 $. Thus, the second term goes to $0$ as $n$ goes to infinity.
			\end{itemize}
			
			We have thus shown that $\lim\limits_{n\to \infty} \mu^\E_{\omega_n}(X) = \delta_x (X)$ for each measurable set $X$ with negligible boundary, and the portemanteau theorem \cite{billingsley2013convergence} demonstrates the claim.
		\end{proof}
		
		A converse to this theorem is given in \cite{JGthesis}.
		Thus if $\Sigma$ is metrizable and $\E$ is localizable, as will be the case in the sequel, we can approximate the Dirac delta measure centred at any $x \in \Sigma$ with measures of the form $\mu^\E_{\omega_n}(X) = \tr[\E(X)\omega_n(x)]$. We will call $\omega_n(x)$ a \emph{localizing sequence centered at $x$}.

		\subsection{Covariance}
		
		Observables are often characterised by their covariance properties (e.g. \cite{busch1997operational}):
		\begin{definition}\label{def:imp}
			Let $\E:\mathcal{B}(\Sigma) \to B(\hi)$ be a POVM, $G$ a locally compact second countable topological group, $\alpha : G \times \Sigma \to \Sigma$ a continuous transitive action (so that $\Sigma$ is a homogeneous $G$-space) and $U:G \to B(\mathcal{H})$ a strongly continuous projective unitary representation. Then $(U,\E,\hi)$ is called a \emph{system of covariance based on} $\Sigma$ if for all $X \in \mathcal{B}(\Sigma)$ and all~$g \in G$, 
			\begin{equation}\label{eq:covp}
				\E (\alpha (g, X))= U(g) \E (X) U(g)^*.
			\end{equation}
			$\E$ is called a \emph{covariant POVM}, and if $\E$ is projection-valued, then $(U,\E\equiv \mathsf{P},\hi)$ is called a \emph{system of imprimitivity (SOI)}. 
		\end{definition}
		We will often write $g.X$ to stand for $\alpha (g, X)$, and will presume that $U$ as given above is a true unitary representation.

		Systems of imprimitivity are characterised by the imprimitivity theorem, 
		which states that for a closed subgroup $H \subset G$ and $\Sigma = G/H$ with left $G$-action,  there is  (up to unitary equivalence) a one-to-one correspondence between systems of imprimitivity $(U,\mathsf{P},\hi)$ based on $\Sigma$ and unitary representations $U_{\chi}$ of $H$ (e.g. \cite{landsman2006between}). Irreducible systems of imprimitivity (those with no invariant subspaces of $(U,\mathsf{P})$) correspond to irreducible representations of $H$.\footnote{The exact correspondence is as follows. Given a representation 
			$U_{\chi}$ of $H$, construct the SOI $(U^{\chi}, \mathsf{P}^{\chi},\hi^{\chi})$, with 
			$\hi^{\chi}= L^2(G/H)\otimes \hi_{\chi}$ as the carrier space for
			the representation $U^{\chi}$ of $G$ induced by the representation $U_{\chi}$ of $H$, and $\mathsf{P}^{\chi}$ acts as multiplication by the characteristic function on $\hi^{\chi}$. In the other direction, for any SOI $(U, \mathsf{P},\hi)$, there is a unitary representation $U_{\chi}$ of $H$ for which $(U, \mathsf{P},\hi)$ is unitarily equivalent to the one given above. We note that any space $\Sigma$ with a continuous transitive $G$-action can be written as $G/H$, where $H=H_x$ is the stabiliser of some point $x \in \Sigma$.}
		
		For $G$ finite the canonical irreducible system of imprimitivity based on $G$ can be described very explicitly.
		
		\begin{example}
			\normalfont
			Let $L^2(G)$ denote the Hilbert space of the complex-valued functions $G \to \mathbb{C}$ on a finite group $G$ (sometimes denoted $\mathbb{C}[G]$), which has as an orthonormal basis the collection of indicator functions $\delta_g$.  
			These define the rank-$1$ projections $P(g)$, and to make contact with more common notation in the physics literature we write $\delta_g\equiv \ket{g}$ and  $P(g)\equiv \dyad{g}$. The left regular representation is given as $U_L(g)\ket{g'} = \ket{gg'}$. Then with $P: g \mapsto \dyad{g} \in B(L^2(G))$, $(U_L,P,L^2(G))$ is a system of imprimitivity based on $G$, and up to unitary equivalence is the unique irreducible one.
		\end{example}
		
		The simple case above generalises to the following.
		
		\begin{example}\label{ex:classical}
			\normalfont
			Let $G$ be a locally compact group acting transitively from the left on the measure space
			$(\Sigma, \mu)$, with $\mu$ a $\sigma$-finite invariant measure on $\Sigma$,
			and take
			$\mathcal{H}=L^2(\Sigma, \mu)$. Then $(U,P,\mathcal{H})$ as given below is a system of imprimitivity based on $\Sigma$.
			\begin{equation}\label{eq:csi}
				\begin{aligned}
					P(X)f = \chi_X f \\
					(U(g)f)(x)=f(g^{-1}.x).
				\end{aligned}
			\end{equation}
			
			\begin{remark}
				Setting $G=\Sigma$ as a topological space (so that $H=\{e\}$) yields the left-regular representation of $G$, equipped with the canonical (irreducible) system of imprimitivity based on $G$ as a left $G$-space. Further specialising to $G=\mathbb{R}$, with $\mu$ Lebesgue measure and $\mathbb{R}$ understood as the configuration space of a single particle, \eqref{eq:csi} yields the standard Schr\"{o}dinger representation of wave mechanics for $P=P^Q$ ($Q$ position) and $U$ is therefore generated by momentum. The uniqueness (all up to unitary equivalence) of the irreducible representation
				of $\{e\}$ corresponds to the uniqueness of the canonical commutation relation, and the imprimitivity point of view therefore constitutes a generalisation of the Stone-von Neumann theorem. The uniqueness of the shift-covariant spectral measure of position (which follows from the imprimitivity theorem)
				is particular to the sharp case; there exist many shift-covariant unsharp observables which can be obtained through convolution with a Markov kernel (e.g. \cite{Busch2016a}).\footnote{It is possible to be more general in the following sense. Let $(\mathcal{A},G,\alpha)$ be a $W^*$ dynamical system, that is, $\mathcal{A}$ is a $W^*$ algebra, $G$ a locally compact group and $\alpha:G \to Aut(\mathcal{A})$ an ultraweakly continuous homomorphism to the group of automorphisms of $\mathcal{A}$. A covariant representation of $\mathcal{A}$ is a pair $(\pi,U)$, where  $\pi : \mathcal{A} \to B(\hi)$ is
					a linear $*$-homomorphism $\mathcal{A} \to B(\hi)$ and $U$ is a unitary representation for which for all $g$ and $A$,
					\begin{equation}
						\pi(\alpha_g (A))=U(g)\pi(A)U(g)^*.
					\end{equation}
					If $\mathcal{A}=L^{\infty}(G,\mu)$, a representation of $\mathcal{A}$ corresponds exactly to a PVM, and a covariant representation to a system of imprimitivity. Example \ref{ex:classical} is recovered by choosing the representation $f \mapsto M_f$ defined by $(M_f) \varphi = f \varphi$. This suggests that the essential structure of the approach to quantum reference frames given here is displayed at the algebraic level.}
			\end{remark}
		\end{example}
		
		\begin{example}[Systems of coherent states]\label{ex:csspovm}
			\normalfont
			Systems of covariance can be constructed from certain families of generalized coherent states (see e.g. \cite{Perelomov,ali2000coherent} for coherent states, and \cite{de2021perspective} for an example of their use as quantum reference frames). For instance, set $U$ to be a unitary representation of a locally compact group $G$ in $\hi$, with a cyclic vector $\ket{\eta}$ (i.e., the span of $\{U(g) \eta\}$ is dense). Then the orbit $\{\eta_g := U(g) \eta\}$ is called a system of (Perelomov-Gilmore) coherent states, and if they satisfy a certain square integrability condition, they resolve the identity in the sense that
			\begin{equation}\label{eq:pgco}
				\int_G \dyad{\eta_g} d \mu (g) = \lambda \id,
			\end{equation}
			where $\lambda$ is some positive real number and $\mu$ is the normalized Haar measure. Then
			$$\E^{\eta}(X):= \frac{1}{\lambda}\int_X \dyad{\eta_g} d \mu (g)$$
			defines a covariant POVM, and therefore $(U,\E^{\eta},\hi)$ is a system of covariance. $\E^{\eta}$ does not satisfy the norm-$1$ property unless $\lambda = 1$, and rigorously speaking the existence of sharp coherent state systems, understood in the Hilbert space sense, requires $G$ to be discrete.
			
			We note that this definition can be made much more general, and we refer to  \cite{ali2000coherent} for a precise treatment. A minor generalisation is to proceed as above, but set $\Sigma=G/H$, with $H$ the stabilisier (up to a phase) of $\eta$, and define another system of coherent states $\eta_{\sigma(x)} = U(\sigma(x))\eta$, with $\sigma:\Sigma \to G$ any Borel section. Note that the measure $\mu$ in \eqref{eq:pgco} must be replaced by a quasi-invariant measure, and if this is actually invariant, $H$ is compact. This setting is explored in the perspective-neutral approach to quantum reference frames in \cite{de2021perspective} and has some interesting physical consequences, which are beyond the scope of the present work but will be addressed systematically within the framework presented in this paper in the future.
			
		\end{example}
		
		\begin{example}[Canonical Phase]
			\normalfont
			Let $\{\ket{n} \in \hi;n \in \mathbb{N}\}$ be an orthonormal basis of $\hi$, and  $N=\sum_{n \geq 0} n \dyad{n}$ a (densely defined) \emph{number observable}. Then 
			$\E : \mathcal{B}((0,2 \pi]) \to B(\hi)$ is a \emph{covariant phase observable} if it 
			satisfies \eqref{eq:covp} - explicitly, if $e^{iN\theta}\E(X)e^{-iN\theta} = \E(X+\theta)$, where addition is mod-$2\pi$. $U(\theta)=e^{iN\theta}$ is a strongly continuous unitary representation of the circle group, and 
			$(U,\E,\hi)$ forms a system of covariance. $\E$ is used to model the phase (observable) of an electromagnetic field, with the states $\{\ket{n}\}$ corresponding to photon number. These observables are completely characterised and take the form 
			\begin{equation}
				\E(X)= \sum_{n,m=1}^{\infty} c_{n,m}\int_{X}e^{i\theta(n-m)}\dyad{n}{m}\frac{d\theta}{2\pi},
			\end{equation}
			where $(c_{n,m})$ is a positive matrix with $c_{n,n}=1$ for all $n$. The boundedness from below of $N$ means that $\E$ is never sharp \cite{lahti1999covariant}, but the \emph{canonical phase} characterised by $c_{n,m}=1$ for all $n,m$ does satisfy the norm-1 property (\cite{lahti2000characterizations}).
		\end{example}
		
		We note that in general there are obstacles to the existence of covariant POVMs with good localization properties, for example as given in the following proposition:
		\begin{proposition}
			Let $U$ be a strongly continuous unitary representation of $G$ in $\hi$, where $G$ is is compact, abelian, and generated by $N = \sum n \dyad{n}$, and acts on itself from the left. Moreover let $\hi$ have finite dimension $d$. Then, for Haar measure $\mu$, any covariant $E$ and
			$X \in \mathcal{B}(G)$ ($X \neq G$), it holds that
			\begin{equation}
				\tr[\rho \E(X)] \leq d. \mu(X),
			\end{equation}
		\end{proposition}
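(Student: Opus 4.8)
\section*{Proof proposal}

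The plan is to reduce the proposition to the single operator-trace identity $\tr[\E(X)]=d\,\mu(X)$, valid for \emph{any} covariant POVM based on a compact group acting on itself by translations, and then to deduce the stated bound from an elementary operator inequality. The particular form of the generator $N$, and even commutativity of $G$, turn out to be irrelevant to the argument; they merely fix the concrete setting.

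First I would define $\nu\colon\mathcal{B}(G)\to[0,\infty)$ by $\nu(X):=\tr[\E(X)]$. Composing the POVM $\E$ (countably additive in the weak operator topology, hence in norm since $\hi$ is finite-dimensional) with the bounded positive functional $\tr$ shows that $\nu$ is a finite positive Borel measure on $G$, with total mass $\nu(G)=\tr[\E(G)]=\tr[\id]=d$. Next, covariance $\E(g.X)=U(g)\E(X)U(g)^*$ together with cyclicity of the trace gives $\nu(g.X)=\tr[U(g)\E(X)U(g)^*]=\tr[\E(X)]=\nu(X)$ for every $g\in G$, so $\nu$ is left-translation invariant. I would then invoke uniqueness of Haar measure: on a compact metric group every left-invariant finite Borel measure is a scalar multiple of the normalized Haar measure $\mu$, and comparing total masses forces $\nu=d\,\mu$; that is, $\tr[\E(X)]=d\,\mu(X)$ for all $X\in\mathcal{B}(G)$.

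To conclude I would use that any state $\rho$ satisfies $0\le\rho\le\id$ (its eigenvalues lie in $[0,1]$ because $\rho\ge 0$ and $\tr[\rho]=1$): hence $\E(X)^{1/2}\rho\,\E(X)^{1/2}\le \E(X)^{1/2}\E(X)^{1/2}=\E(X)$, and taking the trace of both sides — using $\tr[\E(X)^{1/2}\rho\,\E(X)^{1/2}]=\tr[\rho\,\E(X)]$ — yields $\tr[\rho\,\E(X)]\le\tr[\E(X)]=d\,\mu(X)$, as claimed (in fact the restriction $X\neq G$ is not needed). The one point that warrants care is the appeal to uniqueness of Haar measure, which presupposes regularity of $\nu$; this is automatic here, since $G$ is compact and second countable, hence a compact metric space on which every finite Borel measure is regular. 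Should one prefer to avoid this, for $G=\mathbb{R}/2\pi\mathbb{Z}$ one can instead expand $\nu$ in Fourier modes and note that translation invariance makes all nonzero-frequency coefficients vanish, forcing $\nu=d\,\mu$ directly. I anticipate no further difficulty; the real content is just that $\tr\circ\,\E$ is necessarily $d$ times normalized Haar measure.
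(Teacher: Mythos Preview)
Your proof is correct and follows the same two-step skeleton as the paper: first bound $\tr[\rho\,\E(X)]\le\tr[\E(X)]$, then identify $\tr[\E(X)]=d\,\mu(X)$ via translation invariance and uniqueness of Haar measure. The one genuine difference is in how the second identity is obtained. The paper exploits the specific hypothesis that $G$ is abelian with generator $N$: since each basis vector $\ket{n}$ is an eigenvector of every $U(g)$, covariance gives $\ip{n}{\E(g.X)n}=\ip{n}{\E(X)n}$ for each $n$ separately, whence each diagonal entry is itself Haar measure and summing yields $\tr[\E(X)]=d\,\mu(X)$. You instead apply cyclicity of the trace directly to $\tr[U(g)\E(X)U(g)^*]$, which establishes invariance of the full trace in one stroke and, as you correctly observe, uses neither commutativity of $G$ nor the particular form of the generator. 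Your route is therefore slightly more general and arguably cleaner; the paper's route, on the other hand, extracts the marginally finer statement that every diagonal entry (not just their sum) equals $\mu(X)$, though this is not needed for the proposition as stated.
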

		The proof is straightforward: it always holds that $\tr[\rho \, \E(X)] \leq \tr[\E(X)]$, and since $\ip{n}{\E(X)n} = \ip{n}{\E(g.X)n}$ and therefore 
		$\ip{n}{\E(X)n} = \mu(X)$ and $\tr[E(X)] = d. \mu(X)$ and the result follows. This means that for small $X$, a large $d$ is needed to have localization probability close to~$1$. 
		
		\section{Relational Quantum Kinematics}\label{sec:rqk}
		
		In this section, we introduce further ideas on which this operational approach to quantum reference frames is built. First, quantum reference frames are introduced, defined exactly as systems of covariance. As non-invariant objects, the POVMs defining the frame are \emph{not} observable themselves, but understood as part of the overall  description reflecting the experimental arrangement. This understanding suggests the notion of \emph{framing}, introduced next, which  builds in the choice of frame observable for the given experiment---the framed observables are those that can be realized as joint observables on the system-frame composite upon fixing the frame POVM. After this we recall the \emph{relativization maps} defined by a choice of frame observable, denoted by $\Y^\R$ (with $\R$ the frame), which are understood as yielding frame-relative observables, the collection of which is denoted by $B(\his)^\R \subseteq B(\hisr)^G$, given as the (ultraweak closure of the) image of the relativization map. These observables are both framed and invariant and are referred to as \emph{relative observables}. The relative state spaces are then taken to be the operational (quotient) state spaces associated with $B(\his)^\R$, denoted by $\S(\his)_\R := \S(\hisr)/\hspace{-3pt}\sim_{B(\his)^\R}$ (later further abbreviated to $\S(\hisr)/\hspace{-3pt}\sim_{\R}$). We also provide an equivalent characterization of these state spaces as the images of the predual maps $\Y^\R_*$ of $\Y^\R$, denoted by $\S(\his)_\R \cong \S(\his)^\R \subseteq \S(\his)$; this perspective is in keeping with the notions of relative states considered in other works, but has clear operational motivation and is mathematically precise.
		
		From here we analyze the structure of the spaces of relative states when the state of the frame is fixed; these are referred to as $\omega$-\emph{product relative states}, for $\omega \in \S(\hir)$ (a state of the frame). The relationships between the various spaces of relative observables and relative states are depicted in a dual pair of commuting diagrams, summarising the framework developed so far. Various properties of product relative states are then discussed. Finally, we turn our attention to the case of the state of the reference being highly localized, in order to show that the relational framework reproduces, at the probabilistic level, and in a limiting sense, the predictions of orthodox quantum theory: generalizing previous results, we prove the equivalence between relative and `absolute' kinematical descriptions in the context of arbitrary localizable principal (i.e., defined on the principal $G$-space) frames.
		
		\subsection{Quantum Reference Frames}\label{sec:qrfs}
		
		Given the presence of a symmetry corresponding to a locally compact group $G$ and unitary representation $U$, we recall that `truly' observable quantities must be invariant under $U$. Given that typically non-invariant quantities of some system $\Sy$ are used in the description of laboratory experiments, where one would at least envisage that e.g. Galilei symmetry should be respected, there should be an adequate explanation of the apparent contradiction thus arising. We propose that the answer is that those non-invariant quantities are notational shorthand for the true, invariant quantities, which actually describe system $\Sy$ and frame $\R$, but that typically the frame has been implicitly externalised and is not part of the theoretical description. Here, we include the frame explicitly, noting that the physical system serving as the frame should satisfy some requirements beyond simply being a quantum physical object. One such desideratum is that the frame admits a POVM which is covariant under the representation $U$ of $G$ on the frame. There are good physical reasons for such a demand; for instance the presence of a non-trivial observable paired with a state gives a probabilistic assignment of frame `orientations', and the covariance 
		mirrors what one would expect for classical coordinate changes, including e.g. active-passive duality. In other words, as anticipated, quantum reference frames will be described by the systems of covariance already introduced. Moreover, as we shall see, the choice of covariant frame observables allows, in a natural way, for `observable-relative-to-frame' to be cast as in invariant quantity. (See also \cite{de2021perspective} and references therein for further motivation). It is worth having in mind the relative position observable $Q_{\S} \otimes \mathbb{1}_{\R} - \mathbb{1}_{\S} \otimes Q_{\R}$; this is a shift-invariant (unbounded) self-adjoint operator, with $Q_{\R}$ having a covariant spectral measure corresponding to the frame, relative to which $Q_{\S}$ attains its operational meaning. The covariant frame observable appears as part of the overall invariant description, and does not represent a truly observable quantity in its own right, pointing to the relational flavour of the entire scheme. As we shall see, we may fix a state of the frame, recovering a typically non-invariant description of $\Sy$ alone; the correspondence between the two descriptions will be analysed.

		We are now in a position to give the definition of a quantum reference frame suited to the purposes of this paper.
		
		\begin{definition}
			A \emph{quantum reference frame} $\R$ is a system of covariance $\R=(U_\R,\E_\R,\hir)$ on a homogeneous (left) $G$-space $\Sigma_\R$.
		\end{definition}
		
		When referring to a frame $\R$ we assume $G$ is given and often refer to $\E_\R$ as the frame (observable) when $U_\R$ and $\hir$ are understood, and drop the $\R$ subscript in $\Sigma_\R$ which we often identify with a quotient space $G/H$. When $H$ is trivial, such an observable has a direct interpretation as an `orientation' observable. It is useful to be able to know when two quantum reference frames have no essential physical distinction, captured by the following notion of equivalence:
		
		\begin{definition}
			Two frames $\R_1 = (U_1,\E_1,\hit)$ and $\R_2=(U_2,\E_2,\hit)$ defined on the  value space $\Sigma$ are called \emph{(unitarily) equivalent} if there is a unitary map $U:\hit \to \hit$ such that for any $X \in \mathcal{B}(\Sigma)$ we have
			\[
			\E_2(X) = U\E_1(X)U^*.
			\]
			
		\end{definition}

		There are various special classes of frames:
		
		\begin{definition}
			~
			
			\begin{itemize} 
				\item A frame  $\R$ is called \emph{principal} if $\Sigma_\R$ is a principal homogeneous space, \emph{non-principal} otherwise.
				\item A frame  $\R$ is called \emph{sharp} if $\E_\R$ is sharp, unsharp otherwise.
				\item A frame  $\R$ is called \emph{ideal} if it is principal and sharp.
				\item A frame  $\R$ is called \emph{localizable} if $\E_\R$ is localizable.
				\item A frame  $\R$ is called \emph{complete} if there is no (non-trivial) subgroup $H_0 \subseteq G$ acting trivially on all the effects of $\E_\R$, and \emph{incomplete} otherwise. Such an $H_0$ will be called an \emph{isotropy subgroup} for $\E_{\R}$.
				\item A frame $\R$ is called a \emph{coherent state frame} if $\E_\R$ is constructed from a coherent states system as~in~\eqref{ex:csspovm}.
			\end{itemize}
			
		\end{definition}
		Some remarks are in order. We have chosen to define quantum reference frames on a homogeneous space rather than $G$ itself as is common elsewhere (see e.g., \cite{de2021perspective}, and \cite{mazzucchi2001observables} for a construction of covariant frame observables on the Poincar\'e group). The definition of completeness is reminiscent of that appearing in other works (e.g. \cite{Bartlett2007,de2021perspective}), and agrees with that of \cite{de2021perspective} in the case of coherent state frames. This can be contrasted with the non-principal setting, which is not considered in other works, but analysed for homogeneous spaces in \cite{glowacki2023quantum,fewster2024quantum}, in which there is a non-trivial isotropy group $H<G$ for the action of $G$ on $\Sigma$. General connections between incomplete frames and non-principal frames is the topic of current investigation.\footnote{We note here that for a localizable frame and any $h \in H_0$ in the isotropy subgroup, the value space of $\Sigma_{\R} = G/H$, and $(\omega_n)$ a sequence localizing at the identity coset $eH \in G/H$, we have
			\begin{align*}
				\delta_{hH}(X) &= \lim_{n \to \infty} \mu_{\omega_n(hH)}^{\E_\R}(X) = \lim_{n \to \infty}\tr[\omega_n.h^{-1} \, \E_\R] =
				\lim_{n \to \infty} \tr[\omega_n \, h^{-1}.\E_\R(X)]\\
				&= \lim_{n \to \infty} \tr[\omega_n \, \E_\R(X)]  = \delta_{eH}(X),
			\end{align*}
			so that $hH=eH$ for any $h \in H_0$, and we can conclude that $H_0 < H$. In particular, localizable principal frames are complete, and in general, a localizable frame observable on $\Sigma_\R = G/H$ factorizes through $B(\hir)^H$, i.e., we can write
			\[
			\E_\R : \mathcal{B}(G/H) \to B(\hir)^H \hookrightarrow B(\hir).
			\]
			}
		Sharp frames are described by systems of imprimitivity, which by the imprimitivity theorem, are unitarily equivalent to $L^2(G/H)$. Fixing $H=\{e\}$ and demanding that $(U,\mathsf{P},\hi)$ is irreducible yields (up to unitary equivalence) the left regular representation. The term `ideal' frames (and also `perfect'  frames \cite{Bartlett2007}) is used in the literature to mean something similar but not identical, the typical provision being that there is a collection of states in $\hi$ indexed by $g \in G$ (for example coherent state systems as in \cite{de2021perspective}), and these states must be orthogonal (or `perfectly distinguishable') for $g \neq g'$ for the frame to be ideal. However, we avoid this usage since the notion of orthogonality is not  compatible with the Hilbert space framework within which we are working - $G$ is generally continuous, yet $\hi$ is assumed to be separable.  We note also that elsewhere, coherent states themselves are understood, through their $G$-dependence, as encoding `frame orientations'; for us, this comes probabilistically through the measure on $G$ defined by an arbitrary state paired with $\E_{\R}$, through the Born rule \eqref{eq:born}, which we view as being more operationally motivated. There is no operational distinction between sharp frames and localizable frames based on the same space, and this is manifested clearly in the next subsection in which standard and relative observables are operationally compared. 
		There are various other interesting frames one can consider that we do not touch upon here, for instance \emph{informationally complete} frames, in which $\E$ is an informationally complete POVM, which is necessarily unsharp (e.g. \cite{riera2024uncertainty}).

		\subsection{Framing}

		After fixing a frame observable $\E_\R$, and for now \textit{not} imposing a symmetry constraint, one can consider \emph{joint observables} of system and frame (e.g. \cite{busch1997operational}). From here, one can compute conditional probabilities, i.e., the probability of observing the system observable $\E_\S:\mathcal{B}(\Sigma_\S) \to B(\his)$ to have value in some $Y \subseteq \Sigma_\S$
		given that the frame observable is measured to have value in some $X \subseteq \Sigma_\R$. This captures the idea that a frame is part of an experimental arrangement, and those observables of system-plus-frame which can be measured in that experiment are limited by the choice of frame. In the sequel, we will describe such objects in a relational way. The simplest joint observables are of the form $M(X,Y):=\E_\R(X) \otimes \E_\S(Y)$.
		In order to allow for mixing we take the closed convex hull of the 
		above joint observables, for a fixed frame but arbitrary system observable, whence:
		
		\begin{definition}
			Given a frame $\R$ and a system $\S$, the effects in
			\[
			\Eff(\hisr)^{\E_\R} := \rm{conv}\left\{\F_\S \otimes \E_\R(X) \hspace{3pt} | \hspace{3pt}\hspace{3pt} X \in \mathcal{B}(\Sigma_\R), \F_\S \in \Eff(\his)\right\}^{\rm cl},
			\]
			with $\rm{conv}$ denoting the convex hull, will be called $\E_\R$-\emph{framed effects}, while the elements of the Banach space 
			\[
			B(\hisr)^{\E_\R} := \rm{span}\left\{\F_\S \otimes \E_\R(X) \hspace{3pt} | \hspace{3pt}\hspace{3pt} X \in \mathcal{B}(\Sigma_\R), \F_\S \in \Eff(\his)\right\}^{\rm cl}
			\]
			will be referred to as $\E_\R$-\emph{framed observables}. We denote by $\sim_{\E_\R}$ the operational equivalence relation on $\T(\hisr)$ taken with respect to $\Eff(\hisr)^{\E_\R}$ (or, equivalently, with respect to $B(\hisr)^{\E_\R}$). The space of $\E_\R$-\emph{framed trace class operators} is given by
			\[
			\T(\his)_{\E_\R} := \T(\hisr)/\hspace{-3pt}\sim_{\E_\R},
			\]
			while the operational state space of $\E_\R$-\emph{framed states} is defined to be
			\[
			\S(\his)_{\E_\R} := \S(\hisr)/\hspace{-3pt}\sim_{\E_\R}.
			\]
		\end{definition}
		
		The $\E_\R$-framed (or just `framed' when the frame is clear from the context) states are then precisely the equivalence classes of density operators which cannot be distinguished by framed observables.
		
		Prop. \ref{prop:statespace} and \ref{prop:iml} give the following.
		\begin{proposition}
			There is a Banach space isomorphism
			\[
			B(\his)^{\E_\R}_* \cong \T(\his)_{\E_\R}.
			\]
			Moreover, the subset $\S(\his)_{\E_\R} \subset \T(\his)^{\rm sa}_{\E_\R}$ is a total convex state space.
		\end{proposition}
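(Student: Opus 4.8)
The statement to prove is a direct application of Propositions \ref{prop:iml} and \ref{prop:statespace} to the specific set $\mathcal{O} = B(\hir \otimes \his)^{\E_\R}$ (equivalently $\Eff(\hir \otimes \his)^{\E_\R}$). The plan is essentially to verify the hypotheses of those two general propositions are met and then invoke them, with the only genuine content being a small bookkeeping check about closures and spans.

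First I would observe that $B(\hir \otimes \his)^{\E_\R}$ is, by its very definition, already an ultraweakly closed subspace of $B(\hirs)$, namely the ultraweak closure of the span of $\{\E_\R(X) \otimes \F_\S\}$. Hence $\rm{span}\{B(\hir \otimes \his)^{\E_\R}\}^{\rm cl} = B(\hir \otimes \his)^{\E_\R}$, and the operational equivalence relation $\sim_{\E_\R}$ on $\T(\hirs)$ taken with respect to this space coincides with the one taken with respect to the generating set $\{\E_\R(X)\otimes \F_\S\}$ or with respect to the convex hull $\Eff(\hir \otimes \his)^{\E_\R}$; this is exactly the remark after Prop. \ref{prop:iml} that passing to span, convex hull, or ultraweak closure does not change the equivalence classes. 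Prop. \ref{prop:iml} applied with $\mathcal{O} = B(\hir \otimes \his)^{\E_\R}$ then gives immediately
\[
B(\his)^{\E_\R}_* = \big[\rm{span}\{B(\hir \otimes \his)^{\E_\R}\}^{\rm cl}\big]_* \cong \T(\hirs)/\hspace{-3pt}\sim_{\E_\R} = \T(\his)_{\E_\R},
\]
which is the first claim (with the uniqueness of the predual, also from Prop. \ref{prop:iml}, justifying that the isomorphism is canonical).

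For the second claim, I would apply Prop. \ref{prop:statespace} with the same $\mathcal{O}$. That proposition states verbatim that $\S(\hirs)/\hspace{-3pt}\sim_{\mathcal{O}}$ is a total convex state space inside the real Banach space $\T(\hirs)^{\rm sa}/\hspace{-3pt}\sim_{\mathcal{O}}$, and is closed in the quotient operational topology. Since $\S(\his)_{\E_\R}$ and $\T(\his)^{\rm sa}_{\E_\R}$ are by definition precisely $\S(\hirs)/\hspace{-3pt}\sim_{\E_\R}$ and $\T(\hirs)^{\rm sa}/\hspace{-3pt}\sim_{\E_\R}$, the conclusion is immediate. The only point worth spelling out, which is already implicit in the proof of Prop. \ref{prop:statespace}, is that the effects separating points of the quotient are exactly those in $\Eff(\hirs) \cap B(\hir \otimes \his)^{\E_\R}$, and these are nonempty and do the separating by construction of the equivalence relation.

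There is no real obstacle here; the statement is a corollary packaging two earlier general results. If I had to name the most delicate step, it is making sure the identification $\rm{span}\{\mathcal{O}\}^{\rm cl} = \mathcal{O}$ is legitimate, i.e., that the defining closure in the definition of $B(\hir \otimes \his)^{\E_\R}$ is genuinely the ultraweak closure and that span and convex-hull operations are harmless — but this is exactly what the remarks following Prop. \ref{prop:iml} establish, so it reduces to a one-line citation. Accordingly I would write the proof as little more than "Apply Prop. \ref{prop:iml} and Prop. \ref{prop:statespace} with $\mathcal{O} = B(\hir \otimes \his)^{\E_\R}$, noting that this space is already ultraweakly closed and that $\sim_{\E_\R}$ is the associated operational equivalence relation."
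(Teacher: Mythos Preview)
Your proposal is correct and matches the paper's own proof, which simply states that the result follows from Prop.~\ref{prop:iml} and Prop.~\ref{prop:statespace}. Your additional bookkeeping about the ultraweak closure and the equivalence of taking $\sim_{\E_\R}$ with respect to the span, convex hull, or generating set is accurate and makes explicit what the paper leaves implicit.
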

		
		We finish by noting that, upon the choice of the frame observable and imposing framing, the states $[\omega]_{\E_\R} \in \S(\hir)/\hspace{-3pt}\sim_{\E_\R}$ are in bijection with probability measures $\mu^{\E_\R}_\omega$ they give rise to upon evaluation of the frame observable. In this sense the frames, when considered in isolation, can be described in rather classical terms; it is this sort of `classicality' that leads to a frame change that differs from others appearing in the literature.
		
		\subsection{Relative Observables and States}\label{sec:relobs}
		
		Quantum reference frames are introduced, in analogy to classical physics, to study properties of some system relative to the given frame. Together they are described as a compound system.
		The main principle put forward in \cite{Loveridge2017a,Loveridge2020a,Miyadera2015e,loveridge2017relativity} is that truly observable quantities are invariant under gauge/symmetry transformations, understood here as a diagonal action on the compound system. We note that other choices are possible; as we have mentioned, the quantum information approach \cite{Bartlett2007} puts the invariance on the states, which is problematic for non-compact groups, and the perspective-neutral approach demands that the relevant Hilbert space is formed from invariant vectors. We reflect further on this when we compare our frame change maps to those appearing in other approaches in  Sec. \ref{sec:fcp}.
		
		In this work, we make a further distinction between the invariant algebra $B(\hisr)^G$, which contains observables/effects which, as we will see, may be understood without reference to a frame, and the observables in the image of the relativization map (defined below), which, as well as being invariant, are also framed, and therefore reflect some internal structure of the joint systems, and the choice of frame observable. They will be called \emph{relative observables/operators}.
		
		\subsubsection{Relativization and relative observables}

		Let $\R=(U_\R, \E_\R, \hir)$ be a principal frame and $U_\S$ be a strongly continuous, unitary representation of $G$ in $\his$. As has been shown in \cite{Loveridge2017a}, there is a map $\Y^{\R}: B(\his)\to B(\hisr)$ defined by
		\begin{equation}
			\Y^\R(A_ \S) := \int_G U_\S(g)A_\S U_\S(g)^* \otimes d\E_\R(g),
		\end{equation}
		called the relativization map, which has the following properties (not all of which are independent): $\Y^\R$ is linear, unital, adjoint-preserving, preserves effects, bounded (thus continuous), completely positive, a contraction, normal, injective (and hence isometric) if $\E_\R$ satisfies the norm-1 property (see  \ref{prop:yeninjective} below), and multiplicative exactly when $\E_\R$ is projection valued (sharp). In the setting that  $\E_\R$ is sharp, the image $\rm{Im}(\Y^\R)$ is a von Neumann algebra isomorphic to $B(\his)$, making $\Y^\R$ a faithful representation of $B(\his)$ in $B(\hisr)$. Crucially, the operators in the image of $\Y^\R$ are framed (see Prop. 3.7.1. in \cite{JGthesis} for the proof), and invariant $\Y^\R(B(\his)) \subset B(\hisr)^{G}$ (which can be verified by direct computation).
		
		The $\Y^{\R}$ construction \cite{Loveridge2017a} was developed as a generalization, and making rigorous, of the $\$$ map \cite{Bartlett2007}, and was used to construct particular invariants;
		the physical difference between observables in the image of $\Y^\R$ and general invariants was not discussed at that time but will play an important role here. The twirl map arises as a special case of this construction when $\hir$ is taken to be $\mathbb{C}$, with (necessarily) trivial $G$ action. Indeed, the notion of a covariant POVM then coincides with that of a normalized invariant measure (Haar measure), and thus there is exactly one when $G$ is compact, and none otherwise. Another simple example is when $\R$ is taken to be the canonical irreducible system of imprimitivity for a finite group $G$, in which case $\Y^{\R}$ reads
		\[
		\Y^{\R}(A_\S) = \sum_{g \in G}  U_{\S}(g)A_\S U_{\S}(g)^* \otimes \dyad{g}_{\R}.
		\]
		The relativization in some sense replaces the naive `invariantization' given by the twirl and applies to the general case of locally compact groups and arbitrary (strongly continuous, unitary) representations. The definition of a relational Dirac observable as in \cite{de2021perspective} is recovered through the twirl map by $ \Y^{\R}(g.A_S)$ (for $g \in G$) when $\E_\R$ is a covariant POVM associated to a coherent state system. Thus in the case of coherent state frames, the set of relational Dirac observables and relative self-adjoint operators are the same.
		
		Physically, $\Y^\R$ is understood as providing an explicit inclusion of the frame $\R = (U_\R, \E_\R, \hir)$ into the description of $\S$; we view $\Y^\R(A_\S)$ as a \emph{relativization} of $A_\S$. This naturally extends to POVMs by composition. For instance, for $G=\mathbb{R}$, and fixing the appropriate frame observable to be the spectral measure of the position operator, $\Y^\R$ takes the position observable $Q_\S $ to $Q_\S  \otimes \id_\R - \id_\S \otimes Q_R$ (which can be proven at the level of the respective spectral measures); it also produces unsharp relative time observables, relative angle and phase observables for $G=U(1)$  (\cite{loveridge2012quantum,Loveridge2017a,loveridge2019relative}). We take the view that what is measured is the relation between system and frame, or more precisely, the relative observable, such as the relative position. This is in line with, but stronger than, the requirement that observable quantities are (gauge-)invariant. The covariant POVM of the frame is never directly measured (as a non-invariant object), but as we shall see, can be justified as being a crucial ingredient in the description.
		
		\begin{definition}\label{def:Yrelobs}
			An operator $A \in B(\hisr)^G$ is called $\R$-\emph{relative} (or just 'relative' when the frame is clear from the context) if it belongs to the ultraweak closure of the image of $\Y^\R$. We write $$B(\his)^\R := \Y^\R(B(\his))^{\rm cl} \subset B(\hisr)^G$$ for the set of $\R$-relative operators. As usual, the term `relative observable' may be used for self-adjoint relative operators, POVMs of the form $\Y^\R \circ \E$, or as a collection the entire image of $\Y^\R$, as the situation dictates.
		\end{definition}
		
		By definition, $B(\his)^\R$ is ultraweakly closed, and hence also norm closed and therefore an \emph{operator space} (e.g. \cite{arveson2008operator}).  We note that for a sharp frame, $\Y^\R$ is a normal homomorphism and therefore the 
		image is automatically closed. Relative operators are clearly invariant but do not exhaust the invariant operators in general. In \cite{fewster2024quantum}, an \emph{extended relativization map} is constructed, which does yield all invariants, but is not uniquely defined by $\Y^{\R}$. See also \cite{glowacki2024relativization} for a recent  categorification of the $\Y^{\R}$ map and \cite{riera2024uncertainty} for the case of a projective representation of $\mathbb{R}^2$ acting as phase space translations.
		
		For any $A_\S \in B(\his)^G$, and any principal frame $\R$ we have $\Y^\R(A_\S) = A_\S \otimes \mathbb{1}_\R$, and thus $\Y^\R(B(\his)^G) \cong B(\his)^G$, whatever the frame. The invariant algebra can thus be understood as a frame-independent description -- this perspective will be strengthened when we consider states. Interesting examples of relative observables arise in the context of two frames.

		\begin{definition}\label{def:relorobs}
			For a pair of principal frames $\R_1$ and $\R_2$ define the \emph{observable of relative orientation} (of $\R_2$ with respect to $\R_1$), denoted by $\E_2*\E_1$, via the relativization map\footnote{
				The notation is chosen to reflect the observation that when $\his$ and $\hir$ are both taken to be $\mathbb{C}$, the formula reproduces standard convolution of measures on $G$.}
			\begin{equation}
				\E_2 * \E_1 :=\Y^{\R_1} \circ \E_2 = \int_G g.\E_2(\cdot) \otimes d\E_1(g).
			\end{equation}
		\end{definition}
		
		Indeed, many important examples of relative observables---position, time, phase and angle (\cite{loveridge2012quantum, Loveridge2017a})---are relative orientation observables. The name can be further justified by the following.
		\begin{proposition}
			For a pair of principal localizable frames $\R_1$ and $\R_2$ and corresponding localizing sequences $(\omega_n)$ and $(\rho_m)$ (centred at $e \in G$), writing $\Omega_{n,m}(h) := \rho_m.h^{-1} \otimes \omega_n \in \S(\hit \otimes \hio)$ with $h \in G$ we have
			\[
			\lim_{n,m\to\infty} \mu_{\Omega_{n,m}(h)}^{\E_2 * \E_1} = \delta_h.
			\]
		\end{proposition}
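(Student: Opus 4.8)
The plan is to compute the limit of the measures $\mu_{\Omega_{n,m}(h)}^{\E_2 \ast \E_1}$ by unpacking the definition of the relative orientation observable and using the localizability of both frames via Prop.~\ref{prop:locseqgen}. First I would evaluate, for a Borel set $X \subseteq G$,
\[
\mu_{\Omega_{n,m}(h)}^{\E_2 \ast \E_1}(X) = \tr\!\left[(\omega_n \otimes h^{-1}.\rho_m)\,(\E_2 \ast \E_1)(X)\right] = \tr\!\left[(\omega_n \otimes h^{-1}.\rho_m)\int_G d\E_1(g)\otimes g.\E_2(X)\right],
\]
which, by the definition of the operator-valued integral against $\E_1$ paired with the product state, reduces to an ordinary scalar integral
\[
\mu_{\Omega_{n,m}(h)}^{\E_2 \ast \E_1}(X) = \int_G \tr\!\big[h^{-1}.\rho_m\; g.\E_2(X)\big]\, d\mu^{\E_1}_{\omega_n}(g) = \int_G \tr\!\big[\rho_m\; (hg).\E_2(X)\big]\, d\mu^{\E_1}_{\omega_n}(g),
\]
using covariance of $\E_2$ and cyclicity of the trace to move the $h^{-1}$ adjoint action. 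Writing $f_m(g) := \tr[\rho_m\,(hg).\E_2(X)] = \mu^{\E_2}_{(hg).\rho_m}(X)$ (or equivalently $\mu^{\E_2}_{\rho_m}((hg)^{-1}.X)$ depending on the convention for the action on states), the double limit becomes an iterated limit of $\int_G f_m(g)\, d\mu^{\E_1}_{\omega_n}(g)$.

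For the inner limit in $m$: since $(\rho_m)$ is a localizing sequence centred at $e$ and $\E_2$ is localizable on the principal $G$-space $\Sigma_{\R_2} \cong G$, Prop.~\ref{prop:locseqgen} gives $\mu^{\E_2}_{\rho_m} \to \delta_e$ weakly, and by covariance (translating by $hg$) we get $f_m(g) \to \delta_e((hg)^{-1}.X) = \chi_X(hg)$, at least for $X$ whose boundary is $\delta$-negligible. For the outer limit in $n$: $\mu^{\E_1}_{\omega_n} \to \delta_e$ weakly, so $\int_G f_m(g)\, d\mu^{\E_1}_{\omega_n}(g) \to f_m(e) = \mu^{\E_2}_{h.\rho_m}(X)$, and then the $m$-limit of that is $\delta_e(h^{-1}.X) = \chi_X(h) = \delta_h(X)$. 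The conclusion $\lim_{n,m}\mu_{\Omega_{n,m}(h)}^{\E_2\ast\E_1} = \delta_h$ (weak convergence of measures) then follows by the portmanteau theorem, checking the statement on sets $X$ with $\delta_h(\partial X) = 0$.

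\textbf{Main obstacle.} The delicate point is the interchange and control of the \emph{double} limit: weak convergence $\mu^{\E_2}_{\rho_m} \to \delta_e$ controls integrals of bounded continuous functions, but $g \mapsto f_m(g)$ and its would-be limit $\chi_X(hg)$ need not be continuous, so I cannot naively apply weak convergence of $\mu^{\E_1}_{\omega_n}$ to the indicator limit directly. The clean way around this is \emph{not} to pass to a pointwise-in-$g$ limit first, but to argue directly at the level of the measures $\nu_{n,m} := \mu_{\Omega_{n,m}(h)}^{\E_2\ast\E_1}$ on $G$: fix an open neighbourhood $V$ of $h$ and show $\nu_{n,m}(V) \to 1$, and for any closed set $C$ not containing $h$ show $\nu_{n,m}(C) \to 0$, choosing first $n$ (so $\mu^{\E_1}_{\omega_n}$ concentrates in a small neighbourhood $W_n$ of $e$ with $hW_n \subseteq V$) and then $m$ large (so $\mu^{\E_2}_{\rho_m}$ concentrates near $e$, forcing $(hg).$-translates of $\E_2$-mass into $V$ for all $g \in W_n$), using uniform bounds $0 \le f_m \le 1$ throughout. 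This ordered $n$-then-$m$ estimate sidesteps any genuine exchange-of-limits issue, and the covariance of both frames is exactly what makes the translates line up. The remaining steps — rewriting the paired operator integral as a scalar integral, and the portmanteau bookkeeping — are routine given Prop.~\ref{prop:locseqgen} and the covariance relation \eqref{eq:covp}.
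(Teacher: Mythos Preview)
Your approach is essentially the same as the paper's: unpack $(\E_2*\E_1)(X)$ against the product state, reduce to the scalar integral $\int_G f_m(g)\,d\mu^{\E_1}_{\omega_n}(g)$ with $f_m(g)=\tr[h^{-1}.\rho_m\, g.\E_2(X)]$, and take the iterated limit $n\to\infty$ then $m\to\infty$, invoking Prop.~\ref{prop:locseqgen} twice. The paper does exactly this in two lines.

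The one point where you take a detour is your ``main obstacle''. You write that $g\mapsto f_m(g)$ ``need not be continuous'', and this leads you to the more elaborate open/closed-set argument. In fact $f_m$ \emph{is} continuous: $U_2$ is strongly continuous, so $g\mapsto g.\E_2(X)=U_2(g)\E_2(X)U_2(g)^*$ is ultraweakly continuous, and tracing against the fixed trace-class operator $h^{-1}.\rho_m$ is an ultraweakly continuous functional. This is precisely the observation used (explicitly) in the proof of Thm.~\ref{th:con1}, and the paper relies on it here as well. With continuity and boundedness of $f_m$, weak convergence $\mu^{\E_1}_{\omega_n}\to\delta_e$ immediately gives $\int f_m\,d\mu^{\E_1}_{\omega_n}\to f_m(e)=\tr[\rho_m\,\E_2(h^{-1}.X)]$, and then the second application of Prop.~\ref{prop:locseqgen} (on sets with $\delta_h$-negligible boundary, via portmanteau) finishes it. Your open/closed-set estimate is correct and would also work, but it is not needed once you recognise the continuity of $f_m$.
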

		
		\begin{proof}
			We calculate
			\begin{align*}
				&\lim_{n,m\to\infty} \tr[(\rho_m.h^{-1} \otimes \omega_n) \int_G g.\E_2(X) \otimes d\E_1(g)]
				= \lim_{m\to\infty} \tr[\rho_m.h^{-1} \E_2(X)]\\
				&= \lim_{m\to\infty} \tr[\rho_m \E_2(h^{-1}.X)]
				= \delta_e(h^{-1}.X) = \delta_h(X),
			\end{align*}
			where we have used Prop. \ref{prop:locseqgen} twice.
		\end{proof}
		
		Thus when the first frame is localized at the identity, and the second at $h \in G$, the relative orientation observable gives a probability distribution localized at $h$, which is understood as their relative orientation.\footnote{We have $\lim_{m \to \infty}\mu^{\E_2}_{\rho_m.h^{-1}.} = \delta_h$, so that the states $\rho_m.h^{-1}$ approximate localization at $h$ for large $m$. From here, we see that $\lim_{n \to \infty} \mu_{\Omega_{n,m}(h)}^{\id_2 \otimes \E_1} = \delta_e$ and $\lim_{m \to \infty} \mu_{\Omega_{n,m}(h)}^{\E_2 \otimes \id_1} = \delta_h$.} Note that since $\E_2 * \E_1$ is~invariant, we could just as well evaluate it on e.g. $(\rho_m \otimes \omega_n.h)$, approximating the first frame localized at $h^{-1} \in G$, and the second at the identity, with the same result, and this the relative orientation is preserved under an overall (gauge) transformation. Notice also that if the roles of the frames are swapped, the probability distribution of the relative orientation observable evaluated on $\omega_n \otimes \rho_m.h^{-1}$ is localized at $h^{-1}$ as one would expect. Indeed, a simple calculation gives
		\[
		\E_2 * \E_1 (X) = {\rm SWAP} \circ \E_1 * \E_2 (X^{-1}),
		\]
		where ${\rm SWAP}$ switches the tensor product factors as in \cite{de2020quantum}, i.e., for $A_1 \otimes A_2 \in B(\hio \otimes \hit)$ we have ${\rm SWAP}(A_1 \otimes A_2) = A_2 \otimes A_1 \in B(\hit \otimes \hio)$, and $X^{-1}:= \{g \in G \hspace{2pt}| \hspace{2pt} g^{-1} \in X\}$.
		
		We view the fixing of the frame observable in direct analogy to choosing a pointer observable in the quantum theory of measurement (e.g. \cite{busch1997operational}), in which a particular observable of the apparatus is fixed, and by varying the initial pointer state, various different system observables may be measured, which may be equal to that which was intended, or may only be some approximation of it (the connection between quantum reference frames and quantum measurements under symmetry is explored in \cite{Loveridge2020a}). From a historical perspective, we view the fixing of a frame as being in line with the epistemology of Bohr \cite{bohr1996discussion}, who stressed that quantum phenomena must be understood in terms of the `entire experimental arrangement'. This gives some classical-like `window' into the quantum world, or more precisely, gives one manifestation of the relational world described by system-frame composite; different frames afford different possibilities (c.f.~\cite{landsman2017foundations}). 
		
		In more prosaic terms, it is only the relative observables that admit an interpretation of `system relative to  frame', which are therefore special amongst the invariant observables. For instance, the relative position or angle take the form of difference operators, which allow for clear separation of system and frame, and more generally the relative observables are exactly those for which system and frame have clear meaning. By contrast, the larger, invariant algebra (which contains operators that are not relative/difference operators), is exactly the collection of operators that may be defined without an external frame, and does not refer to `internal structure'.

		\subsubsection{Relative state spaces}
		
		Despite playing an essential role, the notion of `state relative to a frame' is left implicit/taken as primary in \cite{giacomini2019quantum}, where the `modern' notion of quantum reference frame transformation was first introduced. One objective of that work was, as we understand it, to provide frame changes without recourse to any external or classical reference. A definition of relative state is provided in \cite{de2020quantum} and the frame changes there are in line with \cite{giacomini2019quantum} (though applied also in more general settings); however, no operational justification is provided, and strictly speaking, the procedure applies only to countable $G$, since the frame state is assumed to be perfectly localized at $e \in G$. We now seek to rectify these various shortcomings, by providing an operationally motivated notion of relative state, which rigorously extends that of \cite{de2020quantum} to continuous (locally compact) groups, and provides an alternative conceptual foundation for the notion of relative state as it arises in the theory of QRFs; in particular we show that the localization at/near $e \in G$ is required to view states as relative states.
		
		\begin{definition}\label{def:Yrel}
			Given a frame $\R$ and a system $\S$, we will denote by $\sim_{\R}$ the operational equivalence relation on $\T(\hisr)$ taken with respect to the set of relative operators $B(\his)^\R$. The space of $\R$-relative (or just 'relative' when the frame is clear from the context) trace class operators is given by
			\[
			\T(\his)_\R := \T(\hisr)/\hspace{-3pt}\sim_{\R};
			\]
			the operational state space of $\R$-\emph{relative states} is defined to be
			\[
			\S(\his)_{\R} := \S(\hisr)/\hspace{-3pt}\sim_{\R}.
			\]
		\end{definition}
		
		In analogy to the $G$-equivalence relation, we see that the standard duality 
		$\T(\hi)^*\cong B(\hi)$ appears when these objects are replaced by their relative counterparts.
		
		\begin{proposition}\label{def:Yrelst}
			The Banach space $B(\his)^{\R}$ has the uniquely given predual 
			\[
			B(\his)^{\R}_* \cong \T(\his)_\R.
			\]
		\end{proposition}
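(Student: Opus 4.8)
The plan is to recognise this as a direct instance of Proposition~\ref{prop:iml}, which asserts that for any subset $\mathcal{O} \subseteq B(\mathcal{H})$ one has the Banach-space isomorphism $\big[\mathrm{span}\{\mathcal{O}\}^{\rm cl}\big]_* \cong \T(\mathcal{H})/\hspace{-3pt}\sim_{\mathcal{O}}$, with the predual unique. First I would set $\mathcal{H} = \hirs$ and take $\mathcal{O}$ to be the image $\Y^\R(B(\his))$ of the relativization map. The $\sim_\R$ equivalence relation on $\T(\hirs)$ is, by Definition~\ref{def:Yrel}, exactly the operational equivalence relation $\sim_{\Y^\R(B(\his))}$, so $\T(\his)_\R = \T(\hirs)/\hspace{-3pt}\sim_{\Y^\R(B(\his))}$ on the nose.

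The one thing that needs checking is that $\mathrm{span}\{\Y^\R(B(\his))\}^{\rm cl} = B(\his)^\R$. Here I would invoke linearity of $\Y^\R$ (stated in the paragraph following the definition of the relativization map), which gives $\mathrm{span}\{\Y^\R(B(\his))\} = \Y^\R(B(\his))$ since $B(\his)$ is already a linear space and $\Y^\R$ is linear; taking ultraweak closures then yields $\mathrm{span}\{\Y^\R(B(\his))\}^{\rm cl} = \Y^\R(B(\his))^{\rm cl} =: B(\his)^\R$, the last equality being Definition~\ref{def:Yrelobs}. Plugging this into Proposition~\ref{prop:iml} gives $B(\his)^\R_* \cong \T(\his)_\R$, and the uniqueness of the predual is part of the statement of Proposition~\ref{prop:iml} (it applies because $\hirs$ is separable, being a tensor product of separable Hilbert spaces, and $B(\his)^\R$ is an ultraweakly closed subspace of $B(\hirs)$).

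There is essentially no hard step here: the proposition is a specialisation of Proposition~\ref{prop:iml} together with the bookkeeping identity $\mathrm{span}\{\Y^\R(B(\his))\}^{\rm cl} = B(\his)^\R$. If anything, the only subtlety worth a sentence is that the quotient norm on $\T(\his)_\R$ is the one that makes the isomorphism isometric (again furnished by Proposition~\ref{prop:iml} via the pre-annihilator description ${}^\perp B(\his)^\R = \ker \Y^\R_*$, using normality of $\Y^\R$), so that "the uniquely given predual" is meant up to isometric isomorphism. I would write the proof in three or four lines: identify $\mathcal{O}$, note the span-closure identity, cite Proposition~\ref{prop:iml}.

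\begin{proof}
By Definition~\ref{def:Yrel}, $\sim_\R$ is the operational equivalence relation $\sim_{\mathcal{O}}$ on $\T(\hirs)$ with $\mathcal{O} = \Y^\R(B(\his))$, so $\T(\his)_\R = \T(\hirs)/\hspace{-3pt}\sim_{\mathcal{O}}$. Since $\Y^\R$ is linear and $B(\his)$ is a linear subspace, $\mathrm{span}\{\mathcal{O}\} = \Y^\R(B(\his))$, and hence $\mathrm{span}\{\mathcal{O}\}^{\rm cl} = \Y^\R(B(\his))^{\rm cl} = B(\his)^\R$ by Definition~\ref{def:Yrelobs}. As $\hirs$ is separable and $B(\his)^\R$ is ultraweakly closed, Proposition~\ref{prop:iml} applies and yields the isometric Banach space isomorphism
\[
B(\his)^\R_* = \big[\mathrm{span}\{\mathcal{O}\}^{\rm cl}\big]_* \cong \T(\hirs)/\hspace{-3pt}\sim_{\mathcal{O}} = \T(\his)_\R,
\]
with the predual unique up to isometry.
\end{proof}
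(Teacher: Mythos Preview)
Your proposal is correct and is precisely the paper's approach: the paper's proof is the single line ``Use Prop.~\ref{prop:iml},'' and you have simply spelled out the bookkeeping (choice of $\mathcal{O}$, the span-closure identity, separability of $\hirs$) that makes that citation go through. One cosmetic point: Definition~\ref{def:Yrel} literally takes $\sim_\R$ with respect to $B(\his)^\R = \Y^\R(B(\his))^{\rm cl}$ rather than the uncl osed image, but since the paper already records that $\sim_{\mathcal{O}} = \sim_{\mathcal{O}^{\rm cl}}$ this is harmless.
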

		\begin{proof}
			Use Prop. \ref{prop:iml}. 
		\end{proof}
		
		Denoting the image of the predual map $\Y^\R_*$ restricted to states by
		\[
		\S(\his)^\R:=\Y^\R_*(\S(\hisr)) \subseteq \S(\his),
		\]
		we have the following.
		
		\begin{proposition}\label{prop:y}
			The set of $\R$-relative states is a total convex state space in $\T(\hisr)^{\rm{sa}}/\hspace{-3pt}\sim_{\R}$ and the map
			\[
			y_\R: S(\his)^\R \ni \Omega^\R \equiv \Y^\R_*(\Omega) \mapsto [\Omega]_\R \in \S(\his)_\R
			\]
			is well-defined and establishes the state space isomorphism $S(\his)^\R \cong S(\his)_\R$.
		\end{proposition}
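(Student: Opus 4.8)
The statement is a direct application of Proposition \ref{generalst} to the relativization map $\Y^\R$, together with the duality already established in Proposition \ref{def:Yrelst}. The plan is to verify the hypotheses of Proposition \ref{generalst} for $\Lambda = \Y^\R : B(\his) \to B(\hirs)$ and then to read off both the total-convexity and the state-space isomorphism from what has already been proven. First I would recall from the paragraph following Definition \ref{def:Yrelobs} (and the properties of $\Y^\R$ listed there) that $\Y^\R$ is normal, positive, and unital; these are exactly the standing assumptions of Proposition \ref{generalst}. Since $\Y^\R$ is normal, its image has pre-annihilator ${}^\perp \Im \Y^\R = \ker \Y^\R_*$, so the $B(\his)^\R$-operational equivalence relation $\sim_\R$ on $\T(\hirs)$ coincides with $\sim_{\Im \Y^\R}$; here one uses the fact, noted after Definition \ref{def:Yrelobs}, that passing to the ultraweak closure does not change the induced operational equivalence, so quotienting by $B(\his)^\R = \Y^\R(B(\his))^{\rm cl}$ is the same as quotienting by $\Im \Y^\R$.

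\textbf{Key steps.} With this identification in hand, Proposition \ref{generalst} applied to $\Lambda = \Y^\R$ gives a state space isomorphism $\S(\hirs)/\hspace{-3pt}\sim_{\Im \Y^\R} \;\cong\; \Y^\R_*(\S(\hirs))$, i.e. precisely $\S(\his)_\R \cong \S(\his)^\R$; unwinding the proof of Proposition \ref{generalst}, the isomorphism is implemented by the map induced by $\Y^\R_*$ on equivalence classes, whose inverse sends $\Omega^\R = \Y^\R_*(\Omega)$ to $[\Omega]_\R$ — this is exactly the map $y_\R$ in the statement, so well-definedness and bijectivity of $y_\R$ are immediate from that proposition. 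It remains only to record that $\S(\his)_\R$ is a \emph{total} convex state space inside $\T(\hirs)^{\rm sa}/\hspace{-3pt}\sim_\R$: this is Proposition \ref{prop:statespace} applied with $\mathcal{O} = B(\his)^\R$ (an ultraweakly closed operator space, so in particular a subspace of $B(\hirs)$), which yields total convexity and closedness in the quotient operational topology. One should note in passing that $B(\his)^\R$ separates the classes by construction, which is what makes the convex structure total; the relevant separating effects are those in $\Eff(\hirs) \cap B(\his)^\R$, exactly as in the proof of Proposition \ref{prop:statespace}.

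\textbf{Main obstacle.} There is no serious obstacle: the entire content is bookkeeping, matching the present hypotheses to Propositions \ref{prop:statespace}, \ref{generalst}, and \ref{def:Yrelst}. The one point that deserves a sentence of care is the identification ${}^\perp(\Y^\R(B(\his))^{\rm cl}) = \ker \Y^\R_*$, i.e. that closing the image ultraweakly before quotienting is harmless; this follows because for any $T \in \T(\hirs)$ the functional $A \mapsto \tr[TA]$ is ultraweakly continuous, so it annihilates $\Y^\R(B(\his))$ iff it annihilates its ultraweak closure, and $T \in \ker\Y^\R_*$ iff $\tr[T\,\Y^\R(A)] = 0$ for all $A \in B(\his)$. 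Beyond that, one simply cites the earlier propositions; the footnote in Proposition \ref{generalst} already flags the only subtlety (that the isomorphism need not extend to the ambient Banach spaces because $\Im\Y^\R_*$ need not be closed), which is why the statement is phrased at the level of state spaces rather than Banach spaces.
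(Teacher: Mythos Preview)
Your proposal is correct and follows essentially the same approach as the paper: the paper's own proof is the single line ``Use Prop.~\ref{prop:statespace} and \ref{generalst} for $\Lambda=\Y^\R$,'' and you have simply unpacked this by verifying the hypotheses (normality, positivity, unitality of $\Y^\R$) and spelling out the identification ${}^\perp(\Im\Y^\R)^{\rm cl}=\ker\Y^\R_*$. No further work is needed.
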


		\begin{proof}
			Use Prop. \ref{prop:statespace} and \ref{generalst} for $\Lambda=\Y^\R$.
		\end{proof}
		
		Since this characterization of relative states will be used frequently,
		it is worth being explicit about how we use the above notation. Given an arbitrary $\Omega \in \S(\hisr)$, we write $\Omega^\R = \Y^\R_*(\Omega) \in \S(\his)$, and $[\Omega]_\R$ for the corresponding equivalence class on $\S(\hisr)$, i.e., the element of $\S(\hisr)/{\sim _{\R}}$.
		
		As a direct consequence of the invariance of the image of $\Y^\R$, the predual $\Y^\R_*$ map factorizes through $\T(\hisr)/\hspace{-3pt}\sim_G$ and thus
		\[
		\S(\his)^\R = \Y^\R_*(\S(\hisr)/\hspace{-3pt}\sim_G).
		\]
		For $G$ compact we then have $\Y^\R_*(\Omega)=\Y^\R_*(\mathcal{G}_* (\Omega))$; $\mathcal{G}_* (\omega \otimes \rho )$ is sometimes referred to as the 'relational encoding' of $\rho$ in the older quantum reference frames literature (c.f. \cite{palmer2014changing}). If the action of $G$ is understood as a gauge transformation, applying $\Y_*^\R$ to the state on $B(\hisr)^G$ can thus be viewed as providing a relative description whilst maintaining (gauge) invariance. We finally remark that the $\R$-equivalence relation is strictly stronger than the $G$-equivalence, leading to another distinct set of physical states that may have interesting properties.
		
		\subsection{Restriction and Conditioning}
		
		We now recall the restriction map (e.g. \cite{loveridge2017relativity}), which allows for the conditioning of observables of the system-plus-reference with a specified state of the reference.
		
		\begin{definition}\label{def:rest}
			Let $\omega \in \S(\hir)$ be any (normal) state of the reference. The \emph{restriction map}\footnote{Versions of this map have appeared elsewhere in the literature, often in a rather ill-defined form. Usually the implicit desired map is $B(\hisr) \ni A \mapsto \ip{\cdot \otimes \varphi_R}{A \cdot \otimes \varphi_R}: \his \times \his \to \mathbb{R}$ for $A=A^*$, which as a bounded real quadratic form defines uniquely a bounded self-adjoint $A_\S \in B(\his)$ which for all $\varphi_\S$ satisfies $\ip{\varphi_\S \otimes \varphi_\R  }{A \varphi_\S \otimes \varphi_\R }= \ip{\varphi_\S }{A_\S \varphi_\S} =: \ip{\varphi_\S}{\Phi_{\varphi_\R}(A) \varphi_\S }$. The generalisation of $\Phi_{\varphi_\R}$ to mixtures is then given by $\Gamma_{\omega}$ defined in \eqref{eq:res}.} $\Gamma_{\omega}: B(\hisr)\to B(\his)$ is given by
			\begin{equation}\label{eq:res}
				\tr[\rho \, \Gamma_{\omega}(A)] = \tr[(\rho \otimes \omega) \, A],
			\end{equation}
			holding for all $A \in B(\hisr)$ and $\rho \in \mathcal{S}(\his)$. Equivalently, it is the continuous linear extension to $B(\hisr)$ of
			\[
			\Gamma_{\omega}: A_\S \otimes A_\R \mapsto \tr[\omega \, A_\R] A_\S,
			\]
			or again equivalently the dual of the isometric embedding $\mathcal{V}_{\omega} : \rho  \mapsto \rho \otimes \omega $, 
			referred to as the $\omega$-\emph{product map}. 
		\end{definition}
		
		The restriction map $\Gamma_{\omega}$ is normal, completely positive and trace-preserving (hence trace norm continuous), and is a noncommutative conditional expectation (e.g., \cite{takesaki1972conditional}; see \cite{kadison2004non} for a comprehensive review). It is equivariant (covariant) exactly when $\omega$ is invariant. It is understood as providing a description of the system contingent on a particular state of the frame.
		
		\subsubsection{Conditioned relative observables}
		
		We now explore how fixing a state of the frame affects the description of $\Sy$ in terms of relative operators.
		
		\begin{definition}
			The map
			\[
			\Y^\R_\omega:= \Gamma_\omega \circ \Y^\R: B(\his) \to B(\his),
			\]
			will be called the $\omega$-\emph{conditioned $\R$-relativization map}.
		\end{definition}
		
		This map gives a (typically non-invariant) description inside $B(\his)$ of the (invariant) relative description in terms of $B(\his)^{\R}$, contingent on the state $\omega$ of the reference.
		As a composition of such maps, $\Y^\R_\omega$ is unital, normal and completely positive. The $\omega$-conditioned relative operators take the form:
		\begin{equation}\label{eq:croi}
			\Y^\R_\omega(A_\S) = \int_G U_\S(g) A_\S U_\S(g)^* \, d\mu^{\E_\R}_\omega(g),
		\end{equation}
		representing a `weighted average' of the operator $A_\S$ with respect to the probability distribution given through the evaluation of $\omega$ on the frame observable.
		Note that different $\omega$ may give rise to the same measure $\mu^{\E_\R}_\omega$, which in turn gives rise to the same $\omega$-conditioned relativization. Notice also that if $\omega$ is invariant, 
		$\Y^\R_\omega(A_\S) = \mathcal{G}(A)$, so that the $\omega$-conditioned relative operators are also invariant, leading to a symmetry constraint on system observables.
		
		\begin{definition}
			The operator space given by the ultraweak closure of ${\rm Im}\Y^\R_\omega$, i.e.,
			\[
			B(\his)^\R_\omega := \Y^\R_\omega(B(\his))^{\rm cl} \subseteq B(\his),
			\]
			is called the space of $\omega$-\emph{conditioned} $\R$-\emph{relative operators}.
		\end{definition}
		
		\subsubsection{Product relative states}
		
		Conditioned relative operators give rise to the corresponding operational state spaces, with the state-observable duality resulting in the notion of product relative states.
		
		\begin{definition}
			We will denote by $\sim_{(\R,\omega)}$ the operational equivalence relation on $\T(\his)$ taken with respect to  $B(\his)^\R_{\omega}$. The space of $\omega$-\emph{product} $\R$-\emph{relative trace class operators} is given by
			\[
			\T(\his)_\R^{\omega}:= \T(\his)/\hspace{-3pt}\sim_{(\R,\omega)},
			\]
			while the operational state space of $\omega$-\emph{product} $\R$-\emph{relative states} is defined to be
			\[
			\S(\his)_\R^{\omega} := \S(\his)/\hspace{-3pt}\sim_{(\R,\omega)}.
			\]
		\end{definition}
		For brevity we will refer to `product relative states' when $\R$ and $\omega$ are understood from context.

		\begin{proposition}
			For any (normal) state $\omega \in \S(\hir)$ the $\omega$-conditioned relative observables form a Banach space with the unique Banach predual given by
			\begin{equation}\label{eq:adrr}
				[B(\his)^\R_\omega]_* \cong \T(\his)_\R^{\omega}.
			\end{equation}
			Moreover, the set of $\omega$-product relative states $S(\his)^{\omega}_{\R}$ is a total convex state space in $\T(\his)_\R^{\omega}$, and there is a state space isomorphism
			\[
			S(\his)^{\omega}_{\R} \cong \mathcal{P}^\R_\omega (\S(\his)) \subseteq \S(\his).
			\]
		\end{proposition}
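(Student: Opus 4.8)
The plan is to obtain this statement as a direct application of the general machinery already established, namely Propositions \ref{prop:iml}, \ref{prop:statespace} and \ref{generalst}, specialised to the map $\Lambda = \Y^\R_\omega$ and the set $\mathcal{O} = B(\his)^\R_\omega = \Im\Y^\R_\omega{}^{\rm cl}$. First I would note that the operational equivalence relation $\sim_{(\R,\omega)}$ was \emph{defined} with respect to $B(\his)^\R_\omega$, which is ultraweakly closed by construction, and that it coincides with the equivalence relation with respect to $\Im\Y^\R_\omega$ and with respect to $\rm{span}\{\Im\Y^\R_\omega\}^{\rm cl}$ (the latter two being equal, since the image of the linear map $\Y^\R_\omega$ is already a subspace, and ultraweak closure does not change the induced equivalence). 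Hence $\T(\his)_\R^\omega = \T(\his)/\hspace{-3pt}\sim_{\Im\Y^\R_\omega}$, and Proposition \ref{prop:iml} immediately gives the Banach predual identification $[B(\his)^\R_\omega]_* \cong \big[\rm{span}\{\Im\Y^\R_\omega\}^{\rm cl}\big]_* \cong \T(\his)_\R^\omega$, establishing \eqref{eq:adrr}; uniqueness of the predual is part of Proposition \ref{prop:iml} and needs only that $\his$ is separable, which is our standing assumption.

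Next I would invoke Proposition \ref{prop:statespace} with $\mathcal{H} = \his$ and $\mathcal{O} = B(\his)^\R_\omega$ to conclude that $\S(\his)^\omega_\R = \S(\his)/\hspace{-3pt}\sim_{(\R,\omega)}$ is a total convex state space sitting inside the real Banach space $\T(\his)^{\rm sa}/\hspace{-3pt}\sim_{(\R,\omega)} = \T(\his)_\R^\omega{}^{\rm sa}$ (and, as a bonus, operationally closed therein). The only thing to check here is that $B(\his)^\R_\omega$ has the properties required by that proposition --- but it is simply an arbitrary subset of $B(\his)$, which is all Proposition \ref{prop:statespace} needs.

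Finally, for the state space isomorphism $\S(\his)^\omega_\R \cong \mathcal{P}^\R_\omega(\S(\his))$, I would apply Proposition \ref{generalst} with $\Lambda = \Y^\R_\omega : B(\his) \to B(\his)$. This is legitimate precisely because $\Y^\R_\omega = \Gamma_\omega \circ \Y^\R$ is, as remarked just before the statement, normal, positive and unital (being a composition of the relativization map, which has these properties, with the restriction map $\Gamma_\omega$, which is a normal completely positive conditional expectation, in particular unital). Proposition \ref{generalst} then yields $\S(\his)/\hspace{-3pt}\sim_{\Im\Y^\R_\omega} \cong (\Y^\R_\omega)_*(\S(\his))$, and since $(\Y^\R_\omega)_* = \P^\R_\omega$ in the paper's notation, and $\sim_{\Im\Y^\R_\omega}$ is the same as $\sim_{(\R,\omega)}$, this is exactly the claimed isomorphism onto $\mathcal{P}^\R_\omega(\S(\his)) \subseteq \S(\his)$.

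I do not anticipate a genuine obstacle: the result is a corollary of the general framework, and the proof is essentially a matter of checking that the hypotheses of the three cited propositions are met for $\Lambda = \Y^\R_\omega$. The one point warranting a sentence of care is the identity ${}^\perp\Im\Y^\R_\omega = \ker (\Y^\R_\omega)_* = \ker\P^\R_\omega$, used implicitly via Proposition \ref{generalst}; this is just normality of $\Y^\R_\omega$ together with the defining relation $\tr[T\,\Y^\R_\omega(A)] = \tr[\P^\R_\omega(T)\,A]$, and it is what guarantees that the quotient $\T(\his)/\hspace{-3pt}\sim_{(\R,\omega)}$ coincides with $\T(\his)/\ker\P^\R_\omega$, on which $\P^\R_\omega$ descends to an injection onto $\Im\P^\R_\omega$ restricting to the desired convexity-preserving bijection on states. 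As in the footnote to Proposition \ref{generalst}, one should not claim this lifts to an isomorphism of the ambient Banach spaces, since $\Im\P^\R_\omega$ need not be norm-closed --- but no such claim is made.
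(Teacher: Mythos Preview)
Your proposal is correct and follows exactly the paper's own approach: the paper's proof is the single line ``Use Prop.~\ref{prop:iml}, \ref{prop:statespace} and \ref{generalst} with $\Lambda = \Y^\R_\omega$,'' and you have simply unpacked this, verifying that $\Y^\R_\omega$ is normal, positive and unital so that the three cited propositions apply. Your added remarks on why $\sim_{(\R,\omega)}$ coincides with $\sim_{\Im\Y^\R_\omega}$ and the caveat about the ambient Banach spaces are sound and in the spirit of the paper's earlier discussion.
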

		
		\begin{proof}
			Use Prop. \ref{prop:iml}, \ref{prop:statespace} and \ref{generalst} with $\Lambda = \Y^\R_\omega$.
		\end{proof}
		
		The $\omega$-conditioned relative observables are understood as follows. 
		If we view the reference $\R$ as initially external to the system, meaning that it is not explicitly realised in the theoretical description but nevertheless plays some role in the observed phenomena/probability distributions, the relativization map $\Y^\R$ describes how some  $A_\S \in B(\his)$ is realized as an invariant  $\Y^\R(A_\S) \in B(\hisr)^G$ (naturally extending to POVMs) with respect to the frame $\R$. The restriction map then provides a description of $\S$ contingent on the given state preparation of $\R$, after which $\R$ can be viewed as external, arriving at the corresponding subset of observables of $\S$. We note (and discuss further later) that not all of $B(\his)$ is typically in the image of the $\omega$-conditioned relativization, and therefore the definition is not vacuous. Eq. \eqref{eq:adrr} captures again a duality between states and observables, now in the conditioned relative setting.
		
		Before we describe the properties of product relative states, we
		summarize graphically the maps and spaces thus far encountered as a dual pair of commutative diagrams. The "$i$" maps denote the inclusions of (ultraweakly closed) operator spaces, while "$\pi$" maps denote the corresponding quotient projections between the relevant predual spaces. We typically use `hooked' arrows for injective maps, and double-headed arrows for surjective maps.
		
		\vspace{10pt}
		\begin{center}
			\begin{tikzcd}
				&    &   & B(\hisr) &   \\
				&   & B(\his)^\R \arrow[rd, "\Gamma_\omega \circ i^\R"] \arrow[rr, "i^G_\R", hook] \arrow[ru, "i^\R", hook] &    & B(\hisr)^G \arrow[lu, "i^G"', hook'] \arrow[ld, "\Gamma_\omega \circ i^G"] \\
				& B(\his) \arrow[ru, "\Y^\R"] \arrow[rr, "\Y^\R_\omega", two heads] &  & B(\his)^\R_\omega &  
			\end{tikzcd}
		\end{center}
		\vspace{10pt}
		
		The predual of the above diagram, restricted to state spaces for the convenient representation of $\S(\his)_\R^\omega \subseteq \S(\his)$, reads
		\vspace{10pt}
		\begin{center}
			\begin{tikzcd}
				&     &   & \S(\hisr) \arrow[ld, "\pi_\R"', two heads] \arrow[rd, "\pi_G", two heads] &   \\
				&  & \S(\his)_\R \arrow[ld, "\Y^\R_*"', hook'] &    & \S(\hisr)_G \arrow[ll, "\pi_G^\R"', two heads] \\
				& \S(\his) &  & \S(\his)_\R^\omega \arrow[lu, "\pi_\R \circ \mathcal{V}_\omega"'] \arrow[ll, "(\Y^\R_\omega)_*"', hook'] \arrow[ru, "\pi_G \circ \mathcal{V}_\omega"]&     
			\end{tikzcd}
		\end{center}
		
		\subsubsection{Properties of product relative states}
		The $\omega$-product relative states take the form
		\begin{equation}
			\rho ^{(\omega)} := \Y^\R_*(\rho \otimes \omega) = \int_G U_\S(g)^*\rho U_\S(g) \, d\mu _{\omega}^{\E_\R}(g),
		\end{equation}
		where the measure $\mu_{\omega}^{\E_\R}$ is given as usual by $\mu_{\omega}^{\E_\R}(X):= \tr[\E_\R(X) \, \omega]$, and we have introduced the notation $\rho^{(\omega)}$ to indicate the particular frame state ($\omega$) that is used for conditioning. The product relative states arise from product states on the composite system; in the case of a localizable frame in a localized state they generalize the alignable states of \cite{krumm2021quantum} (we recall that $\Y^{\R}_*$ factors through the $G$-action). Notice again, as in Eq. \eqref{eq:croi}, that the state $\rho ^{(\omega)}$ depends only on the measure $\mu _{\omega}^{\E_\R}$ once $U_{\S}$ and $\rho$ are given.
		
		\begin{proposition}\label{prop:sco}
			Product relative states satisfy the following symmetry condition
			\begin{equation}
				\rho^{(\omega.h)} = (\rho.h^{-1})^{(\omega)}
			\end{equation}
		\end{proposition}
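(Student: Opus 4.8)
The plan is to prove the identity by direct computation, starting from the explicit formula for product relative states
\[
    \rho^{(\omega)} = \int_G d\mu_\omega^{\E_\R}(g)\, U_\S(g)^* \rho\, U_\S(g),
\]
and examining how the measure $\mu_\omega^{\E_\R}$ transforms when $\omega$ is replaced by $h.\omega$. Recall that we use the conventions $g.A = U(g)AU(g)^*$ for operators but $g.T = U(g)^* T U(g)$ for trace class operators, so $h.\omega = U_\R(h)^* \omega U_\R(h)$. By covariance of the frame observable, $\E_\R(\alpha(h,X)) = U_\R(h)\E_\R(X)U_\R(h)^*$, and therefore
\[
    \mu_{h.\omega}^{\E_\R}(X) = \tr[\E_\R(X)\, U_\R(h)^* \omega U_\R(h)] = \tr[U_\R(h)\E_\R(X)U_\R(h)^*\, \omega] = \tr[\E_\R(h.X)\, \omega] = \mu_\omega^{\E_\R}(h.X).
\]
So the effect of $h.\omega$ is to push forward the measure $\mu_\omega^{\E_\R}$ along the map $g \mapsto h^{-1}.g$ (i.e. $\mu_{h.\omega}^{\E_\R}$ is the image of $\mu_\omega^{\E_\R}$ under left translation by $h$ in the sense that integrating a function $f$ against $\mu_{h.\omega}^{\E_\R}$ equals integrating $g \mapsto f(h.g)$ — wait, I should be careful here — equals integrating $g \mapsto f(h^{-1}.g)$ against $\mu_\omega^{\E_\R}$; I will pin down the direction carefully in the writeup).

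The key step is then the change of variables in the integral. Writing
\[
    \rho^{(h.\omega)} = \int_G d\mu_{h.\omega}^{\E_\R}(g)\, U_\S(g)^* \rho\, U_\S(g)
\]
and substituting $g = h.g'$ (so that $\mu_{h.\omega}^{\E_\R}$ transported back becomes $\mu_\omega^{\E_\R}$ in the variable $g'$), we get
\[
    \rho^{(h.\omega)} = \int_G d\mu_{\omega}^{\E_\R}(g')\, U_\S(h g')^* \rho\, U_\S(h g').
\]
Using $U_\S(hg') = U_\S(h)U_\S(g')$, the integrand becomes $U_\S(g')^* U_\S(h)^* \rho\, U_\S(h) U_\S(g') = U_\S(g')^* (h.\rho)\, U_\S(g')$, recognising $h.\rho = U_\S(h)^* \rho\, U_\S(h)$. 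Hence
\[
    \rho^{(h.\omega)} = \int_G d\mu_\omega^{\E_\R}(g')\, U_\S(g')^* (h.\rho)\, U_\S(g') = (h.\rho)^{(\omega)}.
\]
Finally, since we want $(h^{-1}.\rho)^{(\omega)}$ on the right-hand side as stated, I will need to check the direction of the action carefully: depending on whether the transitive action $\alpha$ and the operator convention combine to give left or right translation, the identity may come out as $\rho^{(h.\omega)} = (h.\rho)^{(\omega)}$ or $\rho^{(h.\omega)} = (h^{-1}.\rho)^{(\omega)}$. The statement in the proposition fixes the convention; I will make the bookkeeping of $h$ versus $h^{-1}$ explicit and consistent with the $g.T := U(g)^* T U(g)$ convention used throughout, so that the signs match.

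The main obstacle is precisely this bookkeeping: getting the direction of the group action right in the measure transformation and in the change of variables, and making sure the operator-versus-trace-class conventions ($g.A$ vs $g.T$) are tracked consistently, since a single inversion error flips the conclusion. A secondary, more technical point is justifying the change of variables at the level of the $B(\his)$-valued (Bochner-type) integral — but since everything is controlled in trace norm and $U_\S$ is a strongly continuous unitary representation, this is routine and can be reduced to scalar integrals by pairing with an arbitrary $A_\S \in B(\his)$ and using the definition of $\Y^\R_*$ via $\tr[\rho^{(\omega)} A_\S] = \tr[(\omega \otimes \rho)\Y^\R(A_\S)]$, which sidesteps vector-valued integration entirely.
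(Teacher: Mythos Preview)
Your approach is exactly the paper's: derive $\mu_{h.\omega}^{\E_\R}(X)=\mu_\omega^{\E_\R}(h.X)$ from covariance, then change variables in the integral. The only slip is the direction of the substitution. From $\mu_{h.\omega}^{\E_\R}(X)=\mu_\omega^{\E_\R}(hX)$ one sees that $\mu_{h.\omega}^{\E_\R}$ is the pushforward of $\mu_\omega^{\E_\R}$ under $g'\mapsto h^{-1}g'$, so
\[
\int_G f(g)\,d\mu_{h.\omega}^{\E_\R}(g)=\int_G f(h^{-1}g')\,d\mu_\omega^{\E_\R}(g'),
\]
i.e.\ the correct substitution is $g=h^{-1}g'$ (equivalently $g'=hg$, as the paper writes), not $g=hg'$. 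With that fix your computation gives $U_\S(h^{-1}g')^*\rho\,U_\S(h^{-1}g')=g'.(h^{-1}.\rho)$ and hence $(h^{-1}.\rho)^{(\omega)}$, matching the statement. You already flagged this bookkeeping as the main hazard; once resolved, nothing else is missing.
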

		
		\begin{proof}
			We calculate:
			\begin{align*}
				\rho^{(\omega.h)} &= \int_G \rho.g \, d\mu^{\E_\R}_{h.\omega}(g)= \int_G \rho.g \, d\mu^{\E_\R}_{\omega}(hg)
				= \int_G \rho.(h^{-1}g') \, d\mu^{\E_\R}_{\omega}(g')\\
				&= \int_G (\rho.h^{-1}).g' \, d\mu^{\E_\R}_{\omega}(g') = (\rho.h^{-1})^{(\omega)}.
			\end{align*}
			We have changed the integration variable $g'=hg$ and used the fact that $\mu^{\E_\R}_{\omega.h}(g) = \mu^{\E_\R}_{\omega}(hg)$ which follows directly from the covariance of $\E_\R$.
		\end{proof}
		
		Thus reorienting the frame by $h \in G$ is equivalent
		to reorienting the system by $h^{-1} \in G$. This represents a type of `active-versus-passive' equivalence for quantum frame reorientation.
		
		The next proposition demonstrates the plausible claim that invariant system states are defined without reference to an external frame or, more precisely, are independent from the chosen reference. 
		\begin{proposition}\label{prop:foai}
			Let $\rho \in \S(\his)^G$. Then $\rho ^{(\omega)} = \rho$ for any $\omega$, and any choice of frame $\R$.
		\end{proposition}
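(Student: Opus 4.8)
The plan is to compute $\rho^{(\omega)}$ directly from its integral representation and use the $G$-invariance of $\rho$ to collapse the integral. Recall from the excerpt that for an $\omega$-product relative state we have
\begin{equation}\label{eq:proofdisplay1}
\rho^{(\omega)} = \Y^\R_*(\omega \otimes \rho) = \int_G d\mu_\omega^{\E_\R}(g)\, U_\S(g)^* \rho\, U_\S(g).
\end{equation}
By hypothesis $\rho \in \S(\his)^G$, which means $g.\rho = U_\S(g)^* \rho\, U_\S(g) = \rho$ for every $g \in G$. Substituting this into the integrand, the operator no longer depends on $g$, so it factors out of the integral, leaving $\rho^{(\omega)} = \rho \cdot \int_G d\mu_\omega^{\E_\R}(g)$. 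Since $\mu_\omega^{\E_\R}$ is a probability measure (because $\E_\R$ is a normalized POVM and $\omega$ a state, so $\mu_\omega^{\E_\R}(\Sigma_\R) = \tr[\E_\R(\Sigma_\R)\omega] = \tr[\id\,\omega] = 1$), the remaining integral equals $1$, giving $\rho^{(\omega)} = \rho$. Since no property of $\omega$ beyond being a normal state was used, and the representation \eqref{eq:proofdisplay1} holds for an arbitrary principal frame $\R$, the conclusion is independent of the choice of $\omega$ and $\R$.

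There is essentially no obstacle here; the only point requiring minor care is the justification that pulling a $g$-independent operator out of an operator-valued (Bochner/weak) integral against a probability measure is legitimate, which is immediate since $\int_G f(g)\, d\mu(g) \cdot A = \int_G f(g) A\, d\mu(g)$ for a scalar measure and a fixed bounded operator $A$, and here $f \equiv 1$. Alternatively, and even more cleanly, one can argue at the level of expectation values: for any $\sigma \in \S(\his)$,
\begin{equation}\label{eq:proofdisplay2}
\tr[\sigma\, \rho^{(\omega)}] = \tr[(\omega \otimes \sigma)\, \Y^\R(\rho)] \quad \text{is not the right pairing};
\end{equation}
rather one uses $\tr[A_\S\, \rho^{(\omega)}] = \tr[\Y^\R(A_\S)(\omega\otimes\rho)]$ for $A_\S \in B(\his)$, expands $\Y^\R(A_\S) = \int_G d\E_\R(g)\otimes g.A_\S$, and uses invariance of $\rho$ in the form $\tr[(g.A_\S)\rho] = \tr[A_\S (g^{-1}.\rho)] = \tr[A_\S \rho]$ to get $\tr[A_\S\,\rho^{(\omega)}] = \tr[A_\S\,\rho]\int_G d\mu_\omega^{\E_\R}(g) = \tr[A_\S\,\rho]$, whence $\rho^{(\omega)} = \rho$ by duality.

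I would present the first, direct version as the main argument since it is shortest, perhaps with a one-line parenthetical remark that the integrand is constant by $G$-invariance of $\rho$ and $\mu_\omega^{\E_\R}$ is a probability measure. The statement is really a sanity check confirming that the formalism behaves as expected on invariant states, so a terse two- or three-line proof is appropriate.
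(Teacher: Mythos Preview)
Your main argument is correct and is essentially identical to the paper's proof: expand $\rho^{(\omega)}$ via its integral representation, use $G$-invariance of $\rho$ to make the integrand constant, and invoke normalization of $\mu_\omega^{\E_\R}$. The paper's proof is exactly your first paragraph, stated in one displayed line. Your alternative expectation-value argument is also fine but unnecessary here; do remove the stray display \eqref{eq:proofdisplay2} with the self-commentary ``is not the right pairing'' before submitting, as it reads as an abandoned thought rather than part of the proof.
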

		
		\begin{proof}
			We calculate:
			\[
			\rho ^{(\omega)} = \int_G U_\S(g)^*\rho  U_\S(g) \, d\mu _{\omega}^{\E_\R}(g) = \int_G \rho \, d\mu _{\omega}^{\E_\R}(g) = \rho,
			\]
			where we only needed to use invariance of $\rho$ and normalization of $\E_\R$.
		\end{proof}
		We recall that the invariance of $\rho$ typically requires $G$ compact. It follows directly from Prop. \ref{prop:sco} that $\rho ^{(\omega.h)} = \rho^{(\omega)}$ if $\rho$ is invariant, showing directly how the relative description of $\S$ in terms of product relative states is not sensitive to frame reorientations if $\rho$ is invariant, as one would expect (invariant objects `look the same from all vantage points'). We note that the frame-independence of invariant states follows directly from our framework, and does not need to be stated as an assumption. Observe that Prop. \ref{prop:foai} can be immediately generalized in order to pertain to  non-compact $G$, by considering the class of $G$-indistinguishable density operators.
		\begin{proposition}
			Let $\rho \in \S(\his)$. Then $[\rho^{(\omega)}]_G=[\rho]_G$ for any $\omega$, and any choice of frame $\R$.
		\end{proposition}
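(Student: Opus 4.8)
The plan is to reduce the claim to the already-established Proposition~\ref{prop:foai} by quotienting out the $G$-action, exploiting that the map $\Y^\R_*$ factors through $\S(\hirs)/\hspace{-3pt}\sim_G$. Recall that $[\rho]_G$ denotes the $\sim_G$-equivalence class of $\rho \in \S(\his)$, i.e. the identification of states indistinguishable by the invariant algebra $B(\his)^G$; so to prove $[\rho^{(\omega)}]_G = [\rho]_G$ it suffices to show $\tr[\rho^{(\omega)} A] = \tr[\rho A]$ for every $A \in B(\his)^G$.

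First I would write out the left-hand side using the explicit form $\rho^{(\omega)} = \int_G d\mu^{\E_\R}_\omega(g)\, U_\S(g)^* \rho\, U_\S(g)$, giving
\[
\tr[\rho^{(\omega)} A] = \int_G d\mu^{\E_\R}_\omega(g)\, \tr[U_\S(g)^*\rho\, U_\S(g) A] = \int_G d\mu^{\E_\R}_\omega(g)\, \tr[\rho\, U_\S(g) A\, U_\S(g)^*],
\]
using cyclicity of the trace. Since $A \in B(\his)^G$ we have $U_\S(g) A\, U_\S(g)^* = g.A = A$ for all $g$, so the integrand is constant and equal to $\tr[\rho A]$; normalization of $\E_\R$ (so that $\mu^{\E_\R}_\omega$ is a probability measure) then yields $\tr[\rho^{(\omega)} A] = \tr[\rho A]$. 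This is essentially the computation of Proposition~\ref{prop:foai} with the stronger conclusion $\rho^{(\omega)}=\rho$ replaced by the weaker, but now unconditional, conclusion about $G$-equivalence classes; the only change is that invariance is used on $A$ rather than on $\rho$.

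Since this holds for all $A \in B(\his)^G$, by definition of $\sim_G$ we conclude $\rho^{(\omega)} \sim_G \rho$, i.e. $[\rho^{(\omega)}]_G = [\rho]_G$, independently of $\omega$ and of the choice of frame $\R$. There is no real obstacle here: the argument is a one-line trace manipulation, and the only thing to be slightly careful about is the interchange of trace and integral, which is justified exactly as in the derivation of Eq.~\eqref{eq:croi} and the proof of Proposition~\ref{prop:foai} (the integrand is bounded and the measure is finite). One could equivalently phrase the proof as: $\Y^\R_*$ factors through $\pi_G$, so $\rho^{(\omega)} = \Y^\R_*(\omega\otimes\rho)$ lies in $\S(\his)^\R \subseteq \S(\his)$ and $\pi_G(\rho^{(\omega)})$ depends only on $\pi_G(\omega\otimes\rho) = [\omega\otimes\rho]_G$; combining this with Proposition~\ref{prop:foai} applied on $G$-equivalence classes gives the statement. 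I would present the direct trace computation as the cleanest route.
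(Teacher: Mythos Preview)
Your proof is correct and follows essentially the same route as the paper's own argument: take $A \in B(\his)^G$, expand $\tr[\rho^{(\omega)} A]$ via the integral formula, use cyclicity of the trace to move the conjugation onto $A$, invoke invariance of $A$ to make the integrand constant, and conclude by normalization of $\mu^{\E_\R}_\omega$. The only difference is presentational---you add some surrounding commentary and an alternative phrasing via the factorization of $\Y^\R_*$ through $\pi_G$---but the core computation is identical.
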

		\begin{proof}
			Set $A\in B(\his)^G$. Then 
			\begin{align*}
				\tr[\rho^{(\omega)}A] &=\tr [ \int_G U_\S(g)^*\rho U_\S(g) A \, d\mu _{\omega}^{\E_\R}(g)]\\
				&= \int_G \tr [\rho \, U_\S(g) A U_\S(g)^*] \, d\mu _{\omega}^{\E_\R}(g) = \tr [\rho \, A], 
			\end{align*} 
			using the cyclicity of the trace, the invariance of $A$ and again the normalisation of the measure. 
		\end{proof}

		Another plausible intuition---that a reference in an invariant state can only give rise to invariant relative states---is confirmed by the following proposition.
		
		\begin{proposition}\label{Prop:soai}
			Suppose $\omega \in \S(\hir)^G$. Then $\rho ^{(\omega)} = \mathcal{G}(\rho )$ for any $\rho \in \S(\his)$.
		\end{proposition}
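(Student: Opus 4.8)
The plan is to reduce the claim to the observation that, when $\omega$ is $G$-invariant, the probability measure $\mu_\omega^{\E_\R}$ entering the explicit form of $\rho^{(\omega)}$ is the normalised Haar measure, after which the identity is just the definition of the twirl. I will use that, for a principal frame $\R$, one has $\rho^{(\omega)} = \Y^\R_*(\omega \otimes \rho) = \int_G d\mu_\omega^{\E_\R}(g)\, U_\S(g)^*\rho\, U_\S(g)$ with $\mu_\omega^{\E_\R}(X) = \tr{\E_\R(X)\omega}$, and that the frame being principal means $\Sigma_\R$ carries a free and transitive $G$-action.

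First I would check that $\mu_\omega^{\E_\R}$ is $G$-invariant: for $h \in G$ and $X \in \mathcal{B}(\Sigma_\R)$, covariance of $\E_\R$ gives $\E_\R(h.X) = U_\R(h)\E_\R(X)U_\R(h)^*$, so by cyclicity of the trace and invariance of $\omega$, $\mu_\omega^{\E_\R}(h.X) = \tr{\E_\R(X)\,U_\R(h)^*\omega\,U_\R(h)} = \tr{\E_\R(X)\omega} = \mu_\omega^{\E_\R}(X)$. Fixing a base point identifies $\Sigma_\R$ with $G$ as a left $G$-space, whereupon $\mu_\omega^{\E_\R}$ becomes a left-invariant probability measure on $G$; this forces $G$ compact (in keeping with the fact that $\S(\hir)^G \neq \emptyset$ and a covariant $\E_\R$ on $\Sigma_\R \cong G$ coexist only in that case) and pins $\mu_\omega^{\E_\R}$ down as the unique normalised Haar measure $\mu$. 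Substituting back gives $\rho^{(\omega)} = \int_G d\mu(g)\, U_\S(g)^*\rho\, U_\S(g) = \mathcal{G}(\rho)$, where the right-hand side is literally $\mathcal{G}_*(\rho)$ and agrees with $\mathcal{G}(\rho)$ as an operator by the inversion-invariance of Haar measure on a compact group. A slicker variant, staying closer to the earlier development, sidesteps computing the measure: for $G$ compact $\Y^\R_* = \Y^\R_* \circ \mathcal{G}_*$, and when $\omega$ is invariant $\mathcal{G}_*(\omega \otimes \rho) = \int_G d\mu(g)\,(g.\omega)\otimes(g.\rho) = \omega \otimes \mathcal{G}_*(\rho)$; hence $\rho^{(\omega)} = (\mathcal{G}_*(\rho))^{(\omega)}$, and since $\mathcal{G}_*(\rho) \in \S(\his)^G$, Prop. \ref{prop:foai} gives $(\mathcal{G}_*(\rho))^{(\omega)} = \mathcal{G}_*(\rho) = \mathcal{G}(\rho)$.

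I do not expect a real obstacle: the substance is the covariance-plus-cyclicity computation together with the uniqueness of the invariant probability measure on a principal homogeneous $G$-space (identified with normalised Haar measure). The only subtlety worth flagging is that the hypothesis $\omega \in \S(\hir)^G$ already confines us to the compact regime in which $\mathcal{G}$ makes sense, which is precisely why the twirl appears on the right-hand side.
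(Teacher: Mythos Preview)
Your primary argument is exactly the paper's: show $\mu_\omega^{\E_\R}$ is the normalised Haar measure and substitute into the integral form of $\rho^{(\omega)}$; you simply spell out the covariance-plus-cyclicity step and the uniqueness of the invariant probability measure that the paper leaves as a one-line assertion. Your ``slicker variant'' via $\Y^\R_* = \Y^\R_* \circ \mathcal{G}_*$ and Prop.~\ref{prop:foai} is a correct and genuinely different route---it trades the measure-theoretic identification for the structural fact that $\Y^\R_*$ is constant on $G$-orbits, at the cost of invoking an extra proposition; either way the compactness of $G$ (needed for $\mathcal{G}$ to exist at all) is what makes the statement meaningful, as you rightly flag.
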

		
		\begin{proof}
			If $\omega$ is invariant, then $\mu^{\E_\R}_{\omega}$ is Haar measure, denoted $d\mu$ as before. We then have
			\begin{equation*}
				\rho ^{(\omega)} = \int_G U_\S(g)^*\rho U_\S(g) \, d\mu(g)= \mathcal{G}(\rho). 
			\end{equation*} 
		\end{proof}
		Thus if the reference is in an invariant state, the only relative states defined with respect to it are also invariant.
		
		Now consider $\hir = \mathbb{C}$. As already noted, any unitary representation
		on $\mathbb{C}$ is trivial, the only state is trivially invariant, and a covariant POVM $\E: \mathcal{B}(G) \to B(\mathbb{C}) \simeq \mathbb{C}$ is a measure $\mu_\E$ on $G$ satisfying $\mu_\E(g.X) = g.\mu_\E(X) = \mu_\E(X)$, and is thus the normalized Haar measure. There is a unique such covariant POVM iff $G$ is compact, and in that case, $\Y^\R_* = \mathcal{G}$.  Therefore, there will only be invariant states in the image. From this we conclude that in general not all states are relative states, and cannot be approximated by relative states.

		\subsection{Localizing the Reference}

		The structure of the $\omega$-conditioned $\R$-relative observables is dictated by the frame $\R$ and the state $\omega$, as has been investigated in
		e.g. \cite{Loveridge2017a,Miyadera2015e}. Those works analyzed the agreement between the standard description
		of $\S$ and the relative one given on the compound system $\S$ and $\R$, in which it was found that the localization/delocalization of $\omega$ was the key ingredient (for good/not good agreement). The main positive result in this vein (i.e., addressing the question of good agreement between relative/non-relative quantities) is Thm. 1. in \cite{Loveridge2017a}, which we now generalize to arbitrary locally compact $G$:
		\begin{theorem}\label{th:con1}
			Let $\R = (U_\R,\E_\R,\hir)$ be a localizable principal frame and $(\omega_n)$ a localizing sequence centered at $e \in G$. Then for any $A_\S \in B(\his)$ we have
			\begin{equation}\label{eq:ftqf}
				\lim_{n \to \infty}(\Gamma_{\omega_n}\circ \Y^\R)(A_\S) = A_\S,
			\end{equation}
			where the limit is understood as usual in the ultraweak sense.
		\end{theorem}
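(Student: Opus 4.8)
The plan is to evaluate the ultraweak limit pairwise against an arbitrary normal state $\rho\in\S(\his)$, i.e.\ to show $\lim_n \tr[\rho\,\Y^\R_{\omega_n}(A_\S)] = \tr[\rho\,A_\S]$ for all $A_\S\in B(\his)$, and then invoke the fact that ultraweak convergence of a \emph{uniformly bounded} net is detected by the normal states (the $\Y^\R_{\omega_n}(A_\S)$ are all bounded by $\no{A_\S}$ since each $\Y^\R_{\omega_n}$ is unital and positive, hence a contraction). So first I would reduce from general $A_\S$ to the case $A_\S$ self-adjoint (by linearity) and indeed, by a further polarization/positivity argument, one may even think of effects, though it is cleanest to just keep $A_\S$ self-adjoint and bounded.

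Second, I would write out the pairing explicitly using the closed form \eqref{eq:croi} of the conditioned relativization:
\begin{equation*}
\tr[\rho\,\Y^\R_{\omega_n}(A_\S)] = \int_G \tr[\rho\, U_\S(g) A_\S U_\S(g)^*]\, d\mu^{\E_\R}_{\omega_n}(g) = \int_G f(g)\, d\mu^{\E_\R}_{\omega_n}(g),
\end{equation*}
where $f(g) := \tr[\rho\, U_\S(g) A_\S U_\S(g)^*]$. The key analytic facts to establish about $f$ are that it is \emph{bounded} ($|f(g)|\le\no{A_\S}$) and \emph{continuous} on $G$ — continuity follows from strong continuity of $U_\S$ together with the trace-norm density argument (approximate $\rho$ by finite-rank operators, on which $g\mapsto U_\S(g)A_\S U_\S(g)^*$ is weakly continuous, and use uniform boundedness to pass to the limit). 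Then since $\R$ is localizable and $\Sigma_\R = G$ is metrizable (principal homogeneous $G$-space for $G$ locally compact second countable), Prop.~\ref{prop:locseqgen} gives that $\mu^{\E_\R}_{\omega_n}$ converges weakly to $\delta_e$, so $\int_G f\, d\mu^{\E_\R}_{\omega_n} \to f(e) = \tr[\rho\, A_\S]$ by the definition of weak convergence applied to the bounded continuous function $f$.

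Third, I would assemble the pieces: for every normal state $\rho$, $\tr[\rho\,\Y^\R_{\omega_n}(A_\S)]\to\tr[\rho\,A_\S]$, and since $\sup_n\no{\Y^\R_{\omega_n}(A_\S)}\le\no{A_\S}<\infty$, this pointwise-on-states convergence of a bounded net is exactly ultraweak convergence $\Y^\R_{\omega_n}(A_\S)\to A_\S$, which is \eqref{eq:ftqf}.

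The main obstacle I expect is the continuity of $f$: one must be a little careful that strong continuity of $U_\S$ alone gives weak (not norm) continuity of $g\mapsto U_\S(g)A_\S U_\S(g)^*$, and then that pairing against a fixed trace-class $\rho$ restores continuity of the scalar function $f$ — this needs the $\varepsilon$-approximation of $\rho$ by finite-rank operators in trace norm, controlled uniformly in $g$ by $\no{A_\S}$. A secondary point worth flagging is the legitimacy of the interchange $\tr[\rho\int_G(\cdots)d\mu] = \int_G\tr[\rho(\cdots)]d\mu$ in \eqref{eq:croi}, which is justified by normality of the trace and the fact that the operator-valued integral is defined weakly; and one should confirm that a localizing sequence at $e$ exists, which is precisely guaranteed by the hypothesis that $\R$ is localizable together with metrizability of $G$.
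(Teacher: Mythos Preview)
Your proposal is correct and follows essentially the same route as the paper's proof: reduce to pairing against an arbitrary state $\rho$, identify the resulting expression as $\int_G f\,d\mu^{\E_\R}_{\omega_n}$ for the bounded continuous function $f(g)=\tr[\rho\,g.A_\S]$, and invoke Prop.~\ref{prop:locseqgen} (weak convergence $\mu^{\E_\R}_{\omega_n}\to\delta_e$) to conclude. Your additional remarks on uniform boundedness, the continuity argument for $f$, and the trace--integral interchange are all points the paper leaves implicit, so your write-up is if anything more careful; the reduction to self-adjoint $A_\S$ is unnecessary but harmless.
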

		
		\begin{proof}
			It is enough to check the agreement of expectation values of both sides of Eq. \eqref{eq:ftqf}. Thus take $\rho \in \S(\his)$ and calculate
			\[
			\tr[\rho \, (\Gamma_{\omega_n}\circ \Y)(A_\S)] = \tr[\rho \int_G g.A_\S \, d\mu^{\E_\R}_{\omega_n}(g) ] = 
			\int_G \tr[\rho \, (g.A_\S)] \, d\mu^{\E_\R}_{\omega_n}(g)
			\]
			
			The function $g\mapsto \tr[\rho (g.A_\S)]$ is continuous and bounded, and by Prop. \ref{prop:locseqgen} the sequence of measures $(\mu^{\E_\R}_{\omega_n})$ converges weakly to $\delta_e $, so by the portemanteau theorem we have: 
			\[\lim_{n \to \infty} \int_G \tr[\rho \, (g.A_\S)] \, d\mu^{\E_\R}_{\omega_n}(g) =  \tr[\rho \, A_\S] ; \]
			therefore the sequence of operators $ (\Gamma_{\omega_n}\circ \Y^\R)(A_\S) $ converges to $A_\S $ in the ultraweak topology.
		\end{proof}
		
		Note that if $G$ is finite and the frame is ideal, no limiting procedure is needed and the agreement is exact, i.e., we can choose  $\omega^\R = \dyad{e}$ and it holds that $(\Gamma_{\omega} \circ \Y^{\R})(A_\S) = A_\S$. Thm. \ref{th:con1} above allows us to prove the following.
		\begin{proposition}\label{prop:yeninjective}
			The relativization maps $\Y^\R: B(\his) \to B(\hisr)^G$ for localizable frames $\R$ are isometric. 
		\end{proposition}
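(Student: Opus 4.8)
The plan is to use Theorem \ref{th:con1} to show that $\Y^\R$ preserves the operator norm, which for a completely positive unital map is equivalent to isometry. First I would recall that $\Y^\R$ is already known (from the list of properties stated after its definition) to be a unital, completely positive, normal contraction, so that $\no{\Y^\R(A_\S)} \leq \no{A_\S}$ for all $A_\S \in B(\his)$. Hence it suffices to establish the reverse inequality $\no{\Y^\R(A_\S)} \geq \no{A_\S}$.

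For the reverse inequality, fix $A_\S \in B(\his)$ and let $(\omega_n)$ be a localizing sequence centered at $e \in G$, which exists because $\R$ is localizable (and $\Sigma_\R$, being a principal homogeneous $G$-space for a locally compact second countable group, is metrizable, so Prop. \ref{prop:locseqgen} applies). Each restriction map $\Gamma_{\omega_n}$ is completely positive and unital, hence also a contraction: $\no{\Gamma_{\omega_n}(B)} \leq \no{B}$ for all $B \in B(\hirs)$. Applying this with $B = \Y^\R(A_\S)$ gives $\no{(\Gamma_{\omega_n} \circ \Y^\R)(A_\S)} \leq \no{\Y^\R(A_\S)}$ for every $n$. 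By Theorem \ref{th:con1}, $(\Gamma_{\omega_n} \circ \Y^\R)(A_\S) \to A_\S$ ultraweakly. Since the operator norm is lower semicontinuous with respect to the ultraweak topology (the norm is a supremum of the ultraweakly continuous functionals $B \mapsto |\tr[TB]|$ over trace class $T$ of norm $\le 1$), we obtain
\[
\no{A_\S} \leq \liminf_{n \to \infty} \no{(\Gamma_{\omega_n} \circ \Y^\R)(A_\S)} \leq \no{\Y^\R(A_\S)}.
\]
Combining with the contraction property yields $\no{\Y^\R(A_\S)} = \no{A_\S}$, so $\Y^\R$ is isometric.

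The only subtlety — and the step I would be most careful about — is the lower semicontinuity of the operator norm in the ultraweak topology; this is standard (the unit ball of $B(\hi)$ is ultraweakly closed by Banach--Alaoglu/Krein--Smulian, equivalently the norm is the supremum of a family of ultraweakly continuous seminorms), but it is worth stating explicitly since the convergence in Theorem \ref{th:con1} is only ultraweak and not in norm. Everything else is a direct assembly of facts already in hand: the contraction property of $\Y^\R$ and of each $\Gamma_{\omega_n}$ (both being unital completely positive maps), the existence of a localizing sequence, and Theorem \ref{th:con1}.
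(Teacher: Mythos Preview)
Your proof is correct and follows essentially the same approach as the paper: both use Theorem \ref{th:con1} together with the contraction property to establish the reverse inequality $\no{A_\S} \leq \no{\Y^\R(A_\S)}$. The paper computes $|\tr[\rho A_\S]| = \lim_n |\tr[(\omega_n \otimes \rho)\Y^\R(A_\S)]| \leq \no{\Y^\R(A_\S)}$ directly and then takes the supremum over $\rho$, whereas you package the same reasoning as ``$\Gamma_{\omega_n}$ is a contraction'' plus ``the operator norm is ultraweakly lower semicontinuous''; these are two phrasings of the same argument.
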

		
		\begin{proof}
			Given $A\in B(\his)$ it is shown in \cite{Loveridge2017a} that $||\Y^{\R}(A_{\S})||\leq ||A_{\S}||$, so we only need to prove the converse inequality. By Th. \ref{th:con1} there is a localizing sequence of states $(\omega_n)$ such that for all $\rho\in\S(\his)$
			\begin{align*}
				|\tr[\rho \, A_{\S} ]|&= \lim_{n\to \infty} |\tr[\rho \,(\Gamma_{\omega_n } \circ \Y^{\R}) (A_{\S})]|=\lim_{n\to \infty} |\tr[(\rho \otimes \omega^{n}) \, \Y^{\R} (A_{\S})]|\\
				&\leq \sup_{\Omega \in \S(\hisr)} |\tr[\Omega \, \Y^{\R} (A_{\S})]|=||\Y^{\R} (A_{\S})||.
			\end{align*}
			We then have $||A_{\S}||= \sup_{\rho}|\tr[\rho \, A_{\S} ]| \leq ||\Y^{\R} (A_{\S})||$.
		\end{proof}
		
		Consider now a finite group $G$ and the frame $\E_\R$ given by
		\[
		G \ni g \mapsto P(g) = \dyad{g} \in B(L^2(G)).
		\]
		Then any state is an $\ket{e}$-product relative state since 
		\begin{equation}
			\Y^\R_*(\rho \otimes \dyad{e}) = \sum_{g \in G}\tr[P(g)\dyad{e}]g.\rho = \sum_{g \in G} \delta_{ge}g.\rho_S = \rho.
		\end{equation}
		
		This is the setting considered in \cite{de2020quantum}, and states of the form $\rho \otimes \dyad{e}$ are called ``aligned" (to the identity) \cite{krumm2021quantum}. Since $\Y^\R_*$ is constant on orbits, the above calculation is not sensitive to whether the state is aligned or only alignable \cite{krumm2021quantum}. In this work, we can make the intuition considered in \cite{giacomini2019quantum,de2020quantum} (of the frame state being `localized at the identity') rigorous in the general context of locally compact groups through the use of localizing sequences. Dualizing Th. \ref{th:con1} for a localizing sequence $(\omega_n)$ centred at $e \in G$ and an arbitrary $\rho \in \S(\his)$, we have
		\begin{equation} \label{eq:ftsr}
			\lim_{n\to \infty}\Y^\R_*(\rho \otimes \omega_n)
			=\lim_{n\to \infty} \int_G g.\rho \, d\mu^{\E_\R}_{\omega_n}(g) = \rho.
		\end{equation}
		Thus we arrive at the following proposition:
		\begin{proposition}\label{cor:locrelst}
			Let $\R$ be a localizable frame. Then $\S(\his)^\R$
			is dense in the operational topology on $\S(\his)$. 
		\end{proposition}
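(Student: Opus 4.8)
The plan is to show that every state $\rho \in \S(\his)$ lies in the operational closure of $\S(\his)^\R = \Y^\R_*(\S(\hirs))$, by exhibiting an explicit sequence in $\S(\his)^\R$ converging to $\rho$ in the operational topology. The natural candidate is already essentially written down in Eq. \eqref{eq:ftsr}: fix a localizing sequence $(\omega_n)$ for $\E_\R$ centred at $e \in G$ (which exists because $\R$ is localizable and, as used throughout, $\Sigma_\R$ is metrizable, so Prop. \ref{prop:locseqgen} applies), and set $\rho_n := \Y^\R_*(\omega_n \otimes \rho)$. Each $\rho_n$ is manifestly an element of $\S(\his)^\R$ since $\omega_n \otimes \rho \in \S(\hirs)$.

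First I would record that $\rho_n = \int_G d\mu^{\E_\R}_{\omega_n}(g)\, U_\S(g)^* \rho\, U_\S(g)$, which is the $\omega_n$-product relative state $\rho^{(\omega_n)}$ in the notation of the paper. Then convergence $\rho_n \to \rho$ in the operational topology means precisely that $\tr[\rho_n A] \to \tr[\rho A]$ for every $A \in B(\his)$. But this is exactly the dual statement of Thm. \ref{th:con1}: for any $A \in B(\his)$,
\[
\tr[\rho_n A] = \tr[(\omega_n \otimes \rho)\, \Y^\R(A)] = \tr[\rho\, (\Gamma_{\omega_n} \circ \Y^\R)(A)] \longrightarrow \tr[\rho A],
\]
using the duality $\tr[T \Lambda(A)] = \tr[\Lambda_*(T) A]$ for the channel $\Y^\R$ and then the ultraweak convergence $(\Gamma_{\omega_n}\circ \Y^\R)(A) \to A$ furnished by Thm. \ref{th:con1}. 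Hence $\rho = \lim_n \rho_n$ with $\rho_n \in \S(\his)^\R$, so $\rho$ is in the operational closure of $\S(\his)^\R$. Since $\rho$ was arbitrary, $\S(\his)^\R$ is operationally dense in $\S(\his)$.

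Strictly speaking the statement is about $\S(\his)^\R$ viewed inside $\S(\his)$, and via the state space isomorphism $y_\R$ of Prop. \ref{prop:y} one could equivalently phrase everything on $\S(\his)_\R$; but since Prop. \ref{cor:locrelst} is stated for $\S(\his)^\R \subseteq \S(\his)$, working directly with $\Y^\R_*(\omega_n \otimes \rho) \in \S(\his)$ is cleanest and avoids any detour. The only genuine content beyond bookkeeping is the appeal to Thm. \ref{th:con1}, so I do not anticipate a real obstacle here; the one point to state carefully is that operational convergence in $\S(\his)$ is by definition pointwise convergence of $\tr[\,\cdot\, A]$ over $A \in B(\his)$ (equivalently over effects, which span $B(\his)$), so that the ultraweak convergence of the conditioned relativizations translates directly into operational convergence of the corresponding product relative states.
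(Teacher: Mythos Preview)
Your proof is correct and follows essentially the same approach as the paper: it is exactly the dualization of Thm.~\ref{th:con1} via the sequence $\rho_n=\Y^\R_*(\omega_n\otimes\rho)$, which is precisely the content of Eq.~\eqref{eq:ftsr} that the paper invokes immediately before stating the proposition. Your write-up simply spells out the duality step $\tr[\rho_n A]=\tr[\rho\,(\Gamma_{\omega_n}\circ\Y^\R)(A)]$ more explicitly than the paper does.
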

		
		Hence, for localizable frames, any state of $\S$ can be arbitrarily well approximated by a sequence of relative states of the form $\Y^\R_*(\rho \otimes \omega_n)$, and therefore (to arbitrary approximation) `all states are relative states' in this regime. This justifies the free use of statements of the form `the state relative to the frame is...' in other work, showing that the localizability of the frame, and the localization of the frame state is needed.  Furthermore
		the understanding that $\rho$ relative to $\dyad{e}$ takes the form $\rho \otimes \dyad{e}$ (e.g. \cite{de2020quantum}) has now been given a firm operational basis.

		We understand the story so far as follows. For localizable frames, the standard kinematics of quantum mechanics is recovered in the operational sense as a limiting procedure of localizing the state of the reference. Given that typically a large Hilbert space dimension is required for localizability, this points to a sort of classicality requirement on the frame. There is also a kind of converse to this statement proven in \cite{Miyadera2015e}: if $\R$ is not localizable, there is a lower bound on the difference between an effect $\F_\S$ and $(\Gamma_{\omega} \circ \Y^{\R})(\F_\S)$. Hence (see also \cite{Loveridge2017a} for a more detailed account):
		in the presence of a `good' (localizable) frame, the ordinary framework of quantum mechanics based on $\his$ and $B(\his)$ (with its states and observables) captures arbitrarily accurately the true, relative world described by $B(\his)^{\R}$, upon localizing the frame state $\omega$, thereby identifying $B(\his)^{\R}_{\omega}$ with $B(\his)$. In other words, the \emph{fact} that the states and (non-invariant) observables of $B(\his)$ can be used by physicists to accurately empirically describe quantum physical experiments is accounted for by the equality of probability distributions that can be generated by the relative description inside $B(\hisr)$, given a localized state of the frame. Thus the description of a quantum system given in terms of $\his$ and $B(\his)$ is understood as being relative to a well localized external quantum reference frame. That this description well approximates the true, relative one is, we suspect, is the reason that $B(\his)$ (with its states and observables) was discovered to be the correct description of a quantum system $\S$ is the first place---sufficiently good frames
		are typically used in quantum experiments. Therefore $B(\his)$ with its state space \emph{actually} describes the relation between a quantum system and a good frame. Given that, by contrast, in the setting that the frame is not localizable, or the state is not localized, there are elements of $B(\his)$ which can never capture experimentally realizable measurement statistics for that given experimental arrangement, the description afforded by $\his$ and $B(\his)$ is understood as being in the setting of a hypothetical well localized external QRF. It is this that makes the whole of $B(\his)$ operationally relevant.
		
		\section{Quantum Reference Frame Transformations}\label{sec:fcp}
		
		In this section, we provide the quantum reference frame transformations founded on the ideas developed thus far. The starting point is the frame-independent invariant algebra $B(\hi_\T)^G$ ($\hi_\T$ is the total Hilbert space of all systems under consideration), which is the `universe of discourse' for the internal quantum reference frames program as envisioned in this work.
		
		The goal is to choose some subsystem (the `initial frame') with respect to which the other subsystems can be described, and find a means by which to change the description relative to that subsystem to a description relative to another subsystem, also internal to the given setup. Therefore it is assumed that $\hi_\T$ is of the form $\hi_\T \cong \his \otimes \hio \otimes \hit$, on which we have a strongly continuous unitary representation $U_\T: G \to B(\hi_\T)$, such that $U_\T = U_{\S} \otimes U_1\otimes U_2$. This tensor factorization of spaces and actions can be understood relative to a sufficiently good external frame, but once this is understood and the symmetry imposed, no external systems need be considered.
		
		The frame-independent algebra $B(\hi_\T)^G$ contains within it the various collections of relative observables, defined by the frames $\R_1$ and $\R_2$. The observables relative to both frames $\R_1$ and $\R_2$ give the operator space needed for the frame transformations, i.e., we consider 
		\[
		B(\hi_\mathcal{T})^{\R_1,\R_2} = B(\his \otimes \hit)^{\R_1} \cap B(\his \otimes \hio)^{\R_2} \subset B(\hi_\mathcal{T})^G.
		\]
		Then, in the case of localizable frames, switching from $[\Y^{\R_1}_*(\Omega)]_{\E_2}$ to $[\Y^{\R_2}_*(\Omega)]_{\E_1}$, where $\Omega \in \S(\hi_\T)$, is a matter of representing the state $[\Omega]_{\R_1,\R_2}$ as a class of states in $\S(\his \otimes \hit)^{\R_1}$ or in $\S(\his \otimes \hit)^{\R_2}$.
		
		In this paper we analyze the quantum reference frame transformations in the case in which the initial frame $\R_1$ is localizable. A central observation of the perspective-neutral approach (see e.g.  \cite{de2021perspective}) is that to transform between relative descriptions, one must pass through a `global' description, given as the physical Hilbert space comprising the system and all frames; the framework here follows this point of view, but differs in the implementation. We will also see that the frame change we provide captures the intuition given in \cite{de2020quantum} - that to pass from the relative state to the corresponding `global' state we need to `attach the identity state'. Whilst the states $\ket{e}$ used in \cite{de2020quantum} are not available as normal states in the case of continuous groups, we can make the idea precise in the context of  locally compact groups, using localizing sequences as we have seen at various points already. Indeed, we may use Eq. \eqref{eq:ftsr}, which we recall states that for a localizing sequence $(\omega_n)$ centred at the identity $e \in G$, we have
		\begin{equation}\label{arblift}
			\lim_{n \to \infty}\Y^\R_*\left(\Omega^\R \otimes \omega_n \right) = \lim_{n \to \infty}\int_G g.\Omega^\R \, d\mu_{\omega_n}^{\E_\R}(g) = \Omega^\R,
		\end{equation}
		to see that in the case of a localizable principal frame $\R$, any $\R$-relative state can be obtained as the limit of a sequence of $\omega_n$-product $\R$-relative states as above. This is the idea behind the lifting construction, which we introduce in the next subsection. After that, we explain how operational equivalence is employed in the context of a pair of frames, defining the framed relative descriptions, to finally be able to provide the localizable frame change transformations as state space maps between the relevant operational state spaces. 
		
		\subsection{Lifting}
		
		Consider once again the Hilbert space representation $U_\S \otimes U_\R: G \to B(\hisr)$. The operators in $B(\hisr)^G$ may be conditioned upon states of the reference system by applying the restriction maps. Assuming that $\R$ is a frame with frame observable $\E_\R: \mathcal{B}(G) \to B(\hir)$, we can relativize the restricted operators, arriving at a subset of the relative ones. 
		
		Thus we introduce the map 
		\begin{equation}\label{restrel}
			\Gamma^\R_\omega := \Y^\R \circ \Gamma_\omega \circ i^G: B(\hisr)^G \to B(\his)^\R,
		\end{equation}
		where $i^G$ denotes the inclusion $B(\hisr)^G \hookrightarrow B(\hisr)$.
		As a composition of such maps $\Gamma^\R_\omega$ is unital, normal, and completely positive. It may be understood as providing a relative `version' of a given invariant. 
		
		The discrepancy between an invariant effect $\F$ and its relative version $\Gamma^\R_\omega(\F)$, given a frame prepared in the state $\omega$, is quantified as $||\Gamma^\R_\omega(\F)-\F||$, which is an operational measure since for an effect $\F \in \mathcal{E}(\hisr)^G$, $||\F||= \sup_{\rho \in \S(\hisr)}|\tr[\rho \F]|$, i.e., the discrepancy has a probabilistic interpretation.
		The best case is estimated as $\inf_{\omega}||\Gamma^\R_\omega(\F)-\F||$. If $\R$ is localizable and $\F$ is already relative, it is readily seen that this discrepancy can be made arbitrarily small.

		\begin{definition}
			The predual of $\Gamma^\R_\omega$ pre-composed with the state space isomorphism $y_\R$ under which $\S(\his)^\R \cong \S(\his)_\R$ (c.f. \ref{prop:y}) is denoted by
			\[
			\mathcal{L}^\R_\omega:= (\Gamma^\R_\omega)_* \circ y_\R: \S(\his)^{\R}  \to \S(\hisr)_G
			\]
			and called the $\R$-\emph{relative} $\omega$-\emph{lifting map} (or just `lifting map' when the context is understood).
		\end{definition}
		
		The lifting procedure allows for the `attachment' of the state of the frame to the state of the system, whilst respecting the
		symmetry-induced operational equivalence class structure. In the case of a localizable frame $\R$, using $\mathcal{L}^\R_{\omega}$ we can lift an arbitrary $\R$-relative state to a state on the invariant algebra (as the $G$-equivalence classes), which gives back the initial relative state to arbitrary precision, upon applying $\Y^\R_*$. Indeed, for $\Omega \in \S(\hisr)$ and $\Omega^\R = \Y^\R_*(\Omega)$ we have
		\[
		\mathcal{L}^\R_\omega : \S(\his)^\R \ni \Omega^\R \mapsto [\Omega^\R \otimes \omega ]_G \in \S(\hisr)_G,
		\]
		and given a localizing sequence $(\omega_n)$ centred at $e \in G$ we get (see \eqref{arblift})
		\[
		\lim_{n \to \infty} \Y^\R_* \circ \mathcal{L}^\R_{\omega_n} (\Omega^\R) = \lim_{n \to \infty} \Y^\R_*[\Omega^\R \otimes \omega_n]_G= \Omega^\R.
		\]
		Thus in the case of a localizable frame $\R$, the lifting maps taken with respect to a localizing sequence of frame states provide an approximate right inverse to the predual $\Y^\R_*$ map, which gives rise to the relative states represented as states of the system. This is in analogy to the inverse of the reduction map of the perspective-neutral approach \cite{de2021perspective}. Since $y_\R: \S(\his)^\R \xrightarrow{\sim} \S(\his)_\R$ is a state space isomorphism, which in general does not lift to an isomorphism at the level of Banach spaces (see \ref{prop:y}), we
		are only permitted to lift states (and not general trace class operators). The map \eqref{restrel} can be depicted on the following diagram
		
		\vspace{10pt}
		\begin{center}
			\begin{tikzcd}
				B(\hisr)^G \arrow[rr, "\Gamma^\R_\omega"] \arrow[rd, "\Gamma_\omega \circ i^G"] &   & B(\his)^\R \\
				& B(\his) \arrow[ru, "\Y^\R"] &          \end{tikzcd},
		\end{center}
		\vspace{10pt}
		
		with the predual, restricted to state spaces and including also  the isomorphism $y_\R$ and the lifting map, taking the~form
		
		\vspace{10pt}
		\begin{center}
			\begin{tikzcd}
				\S(\hisr)_G &                                                       & \S(\his)_\R \arrow[ll, "(\Gamma_\omega)_*"'] \arrow[ld, "\Y^\R_*"', hook'] &  & \S(\his)^\R \arrow[ll, "y_\R"'] \arrow[llll, "\mathcal{L}_\omega^\R"', bend right] \arrow[llld, "i_\R", hook'] \\
				& \S(\his) \arrow[lu, "\pi_G \circ \mathcal{V}_\omega"'] &    &  &                   
			\end{tikzcd}.
		\end{center}
		
		\subsection{Framed Relative Observables and States}
		
		In order for the operational frame transformations to be well-defined and invertible, we must make simultaneous use of relativization and framing. More precisely, we need to consider a pair of frames and restrict the domain of relativization to the operators framed with respect to the other frame. This then respects the role of both frame observables and invariance. In the context of localizable frames, such a setup turns out to be symmetric in the roles played by the frames, which gives rise to a frame transformation map analogous to those found in other works. Thus the following definition.
		
		\begin{definition}
			Given a pair of frames, $\R_1$ and $\R_2$, and a system $\S$, we will denote by
			\begin{equation}\label{relframeff}
				B(\his \otimes \hit)^{\R_1,\E_2} = {\rm span}\{\Y^{\R_1} (\F_\S \otimes \E_2(X)) \hspace{3pt} | \hspace{3pt} X \in\mathcal{B}(G), \hspace{3pt} \F_\S \in \Eff(\his)\}^{\rm cl}.
			\end{equation}
			the (ultraweakly closed) operator space of $\R_1$-\emph{relative} $\E_2$-\emph{framed operators}.
		\end{definition}
		Note that without loss of generality, the span may be taken inside the parenthesis and $\Y^{\R_1}$ (provided the image is closed at the end), to ensure that the domain of the $\R_1$ relativization map is the space of $\E_2$-framed operators. The equivalence classes of states taken with respect to such relative framed operators can be mapped to the corresponding objects with the roles of the frames interchanged, invertibly under the assumption of localizability of the frames.
		
		\begin{definition}
			The $(\R_1,\E_2)$-equivalence relation on $\T(\his \otimes \hio \otimes \hit)$, denoted $T \sim_{(\R_1,\E_2)} T'$, is defined as  the operational equivalence with respect to $B(\his \otimes \hit)^{\R_1,\E_2}$. Elements of the corresponding operational state space
			\[
			\S(\his \otimes \hit)_{\R_1,\E_2} := \S(\his \otimes \hio \otimes \hit)/\hspace{-3pt}\sim_{(\R_1,\E_2)} 
			\]
			will be called  $\E_2$-\emph{framed} $\R_1$-\emph{relative states}.
		\end{definition}
		
		We omit here the obvious definition of framed relative trace class operators and the corresponding duality result since they won't be needed. Notice the change of the order -- we apply relativization to framed operators, while we frame relative states. Crucially, the $\E_2$-framed $\R_1$-relative state space can equivalently be seen as the $\sim_{\E_2}$-quotient of the $\R_1$-relative state space:
		\begin{equation}\label{eq:rafd}
			\S(\his \otimes \hit)^{\R_1}_{\E_2} := \S(\his \otimes \hit)^{\R_1}/\hspace{-3pt}\sim_{\E_2} \cong \S(\his \otimes \hit)_{\R_1,\E_2},
		\end{equation}
		where $\S(\his \otimes \hit)^{\R_1} \subseteq \S(\his \otimes \hit)$. We then consider states in $\S(\his \otimes \hit)$ that live in the image of $\Y^{\R_1}_*$, so $R_1$-relative states on the composite system $\his \otimes \hit$, and then identify those of them that cannot be distinguished with $\R_2$-framed operators, thereby acknowledging the choice of the second frame observable. This perspective is convenient in the context of frame transformations, since it allows for the application of the lifting procedure. 
		
		The following notation will turn out to be useful. The affine quotient projection corresponding to the framing with respect to the second frame will be denoted by
		\[
		\pi^{\R_1}_{\E_2}: \S(\his \otimes \hit)^{\R_1} \twoheadrightarrow \S(\his \otimes \hit)^{\R_1}_{\E_2},
		\]
		while $[\Omega^{\R_1}]_{\E_2}$ refers to the $\E_2$-equivalence class of the $\R_1$-relative state $\Y^{\R_1}_*(\Omega) \in \S(\his \otimes \hit)$.
		
		As a side comment, notice that in the absence of the system $\S$, we have
		\[        \S(\hit)^{\R_1}_{\E_2} = \S(\hit \otimes \hio)/\hspace{-3pt}\sim_{\E_1 * \E_2},
		\]
		where $\sim_{\E_2 * \E_1}$ denotes operational equivalence with respect to the relative orientation observable $\E_2 * \E_1 = \Y^{\R_1} \circ \E_2$. Thus in this situation, the $\E_2$-framed $\R_1$-relative states only allow for the measurement of the relative orientation observable. It is easy to see that in the general case, the framed relative states also allow one to separate states with respect to the observable $\mathbb{1}_\S \otimes \E_2 * \E_1$. Indeed $\Y^\R(\mathbb{1}_\S \otimes \E_2(X)) = \mathbb{1}_\S \otimes \E_2 * \E_1 (X) \in B(\his \otimes \hit)^{\R_1,\E_2}$. We finally note that on the left hand side of Eq.\eqref{eq:rafd}, instead of framing with respect to $\E_2$, one could instead consider the smaller class of observables defined by relativising observables of $\S$ with respect to $\E_2$ and then taking the quotient on the states thus arising. Any pair of states that cannot be distinguished by $\E_2$-framed observables also cannot be distinguished by the $\R_2$-relative observables constructed by relativising those of $\S$, and therefore the approach afforded by framing, particularly the operational identification of states, also respects the point of view that only relative observables may be measured. 
		
		\subsection{Changing Reference}\label{subsec:frmchange}
		
		An operational quantum reference frame transformation takes $\E_2$-framed $\R_1$-relative states to the $\E_1$-framed $\R_2$-relative ones, i.e., it is a map
		\begin{equation}\label{frmchmap}
			\Phi_{1 \to 2}:  \S(\his \otimes \hit)^{\R_1}_{\E_2} \to \S(\his \otimes \hio)^{\R_2}_{\E_1}.
		\end{equation}
		
		In analogy to the perspective-neutral approach \cite{de2021perspective}, the frame change map is constructed by passing through the `global' description which involves the system and both frames, whilst making sure to respect the symmetry, which is given in terms of  $\S(\his \otimes \hio \otimes \hit)_G$. We first employ the lift $\mathcal{L}_{\omega}^{\R_1}$, followed by the application of $\Y^{\R_2}_*$, giving a state relative to $\R_2$. Imposing the $\E_1$ and $\E_2$ equivalences on top of that and localizing the lifting gives a consistent notion of invertible operational frame transformations for localizable frames. Indeed, we obtain the following (for brevity whenever we write `localizing sequence' without further qualification, this is centred at $e \in G$):

		\begin{definition}
			Consider a pair of principal frames $\R_1$ and $\R_2$ with $\R_1$ localizable and denote by $\E_1$ and $\E_2$ their frame observables. For a system $\S$ the map 
			\[
			\Phi^{\rm loc}_{1 \to 2}:= \lim_{n \to \infty} \pi^{\R_2}_{\E_1} \circ \Y^{\R_2}_* \circ \mathcal{L}^{\R_1}_{\omega_n}: \S(\his \otimes \hit)^{\R_1}_{\E_2} \to \S(\his \otimes \hio)^{\R_2}_{\E_1},
			\]
			where $(\omega_n)$ is any localizing sequence for $\E_1$, will be called an \emph{(internal) localized frame transformation}.
		\end{definition}
		
		Concretely, the localized frame transformation evaluated on an $\E_2$-equivalence class of $\R_1$-relative states gives
		
		\begin{equation}\label{eq:fcm}
			\Phi^{\rm loc}_{1 \to 2}: [\Omega^{\R_1}]_{\E_2} \mapsto \lim_{n \to \infty}[\Y^{\R_2}_* \circ \mathcal{L}^{\R_1}_{\omega_n}(\Omega^{\R_1})]_{\E_1} = \lim_{n \to \infty}[\Y^{\R_2}_* (\Omega^{\R_1} \otimes \omega_n)]_{\E_1}.
		\end{equation}
		
		\begin{theorem}\label{locfrtrans1}
			
			Localized frame transformations are well-defined state space maps  making the following diagram commute
			
			\vspace{10pt}
			\begin{center}
				\begin{tikzcd}
					& \mathcal{S}(\his \otimes \hio \otimes \hit)_G \arrow[ld, "\pi^{\R_1}_{\E_2} \circ \Y^{\R_1}_*"'] \arrow[rd, "\pi^{\R_2}_{\E_1} \circ \Y^{\R_2}_*"] &   \\
					\mathcal{S}(\his \otimes \hit)^{R_1}_{\E_2}  \arrow[rr,"\Phi_{1 \to 2}^{\rm loc}"] &                                                                             &  \mathcal{S}(\his \otimes \hio)^{R_2}_{\E_1}.
				\end{tikzcd}
			\end{center}
			If $\R_2$ is also localizable, the analoguosly defined map $\Phi_{2 \to 1}^{\rm loc}$ provides the inverse, i.e.,
			\[
			\Phi_{2 \to 1}^{\rm loc} \circ \Phi_{1 \to 2}^{\rm loc} = \text{\emph{Id}}_{\mathcal{S}(\his \otimes \hit)^{R_1}_{\E_2}}.
			\]
		\end{theorem}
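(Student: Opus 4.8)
The plan is to verify well-definedness first, then commutativity of the triangle, then invertibility. For \emph{well-definedness}, the key point is that although $\Phi^{\rm loc}_{1\to 2}$ is a priori defined on representatives $\Omega^{\R_1} \in \S(\hit \otimes \his)^{\R_1}$ via a limit, the construction factors through the $\E_2$-equivalence class. First I would observe that by Eq. \eqref{arblift}, for a localizing sequence $(\omega_n)$ at $e\in G$ and any $\Omega \in \S(\hirs)$ we have $\lim_n \Y^{\R_1}_*(\omega_n \otimes \Omega^{\R_1}) = \Omega^{\R_1}$ in the operational topology; hence by continuity of $\Y^{\R_2}_*$ and of the quotient projection $\pi^{\R_2}_{\E_1}$, the limit in \eqref{eq:fcm} exists and equals $[\Y^{\R_2}_*(\Omega^{\R_1})]_{\E_1}$ whenever that makes sense — but since $\R_1$-relative operators are framed (hence $\E_2$-framed), passing to the $\E_2$-quotient on the domain introduces no new identifications beyond those already respected. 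More carefully, I would show: if $\Omega^{\R_1} \sim_{\E_2} \widetilde{\Omega}^{\R_1}$, then the two limits agree, because any $\R_2$-relative effect evaluated after the lift and $\Y^{\R_2}_*$ can be pulled back, via adjoints, to an $\R_1$-relativized $\E_2$-framed observable on which the two states already agree. This uses that $\Y^{\R_2}(\E_1(\cdot)\otimes \F_\S)$, restricted with $\omega_n$ and then relativized by $\Y^{\R_1}$, lands in $B(\hit\otimes\his)^{\R_1,\E_2}$ in the limit. That the map is convexity-preserving is then immediate since every constituent ($\mathcal{L}^{\R_1}_{\omega_n}$, $\Y^{\R_2}_*$, $\pi^{\R_2}_{\E_1}$) is affine and limits of affine maps are affine.

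For \emph{commutativity of the diagram}, I would start from $\Xi \in \S(\hio \otimes \hit \otimes \his)_G$ and compute both composite paths. Going left-then-across: $\pi^{\R_1}_{\E_2}\circ\Y^{\R_1}_*(\Xi)$ is the $\E_2$-class of the $\R_1$-relative state $\Xi^{\R_1} := \Y^{\R_1}_*(\Xi) \in \S(\hit\otimes\his)^{\R_1}$; applying $\Phi^{\rm loc}_{1\to 2}$ gives $\lim_n [\Y^{\R_2}_*(\omega_n \otimes \Xi^{\R_1})]_{\E_1}$. Going right: $\pi^{\R_2}_{\E_1}\circ\Y^{\R_2}_*(\Xi) = [\Y^{\R_2}_*(\Xi)]_{\E_1}$. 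So commutativity reduces to showing $\lim_n [\Y^{\R_2}_*(\omega_n \otimes \Y^{\R_1}_*(\Xi))]_{\E_1} = [\Y^{\R_2}_*(\Xi)]_{\E_1}$. The natural route is to test against $\E_1$-framed $\R_2$-relative observables $\Y^{\R_2}(\E_1(X)\otimes\F_\S)$, use the $\R_1$-$\R_2$ commutation structure (both relativizations act on disjoint tensor factors together with the shared $\his$, so the relevant operators in $B(\hi_\T)^G$ can be written with $\E_1 * \E_1$ and $\E_2 * \E_2$ blocks), and then invoke Theorem \ref{th:con1} / Eq. \eqref{eq:ftsr} to collapse the $\omega_n$-average onto $e\in G$. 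Here I expect to use that $\Y^{\R_1}_*$ factors through $\S(\hi_\T)_G$ and that $\omega_n \otimes \Y^{\R_1}_*(\Xi)$ approximates, in the operational topology on $\S(\hir \otimes \his)$, the partial trace data of $\Xi$ itself up to the $G$-equivalence — so that after applying $\Y^{\R_2}_*$ and passing to $\E_1$-classes, the dependence on $\omega_n$ washes out.

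For \emph{invertibility}, assuming $\R_2$ is also localizable, I would apply the now-established commuting triangle in both directions: $\Phi^{\rm loc}_{1\to 2}$ fits into the triangle over $\S(\hi_\T)_G$ with legs $\pi^{\R_1}_{\E_2}\circ\Y^{\R_1}_*$ and $\pi^{\R_2}_{\E_1}\circ\Y^{\R_2}_*$, and $\Phi^{\rm loc}_{2\to 1}$ over the same vertex with the same two legs swapped. Composing, $\Phi^{\rm loc}_{2\to 1}\circ\Phi^{\rm loc}_{1\to 2}$ applied to $[\Omega^{\R_1}]_{\E_2}$: pick any $\Xi \in \S(\hi_\T)_G$ with $\pi^{\R_1}_{\E_2}\circ\Y^{\R_1}_*(\Xi) = [\Omega^{\R_1}]_{\E_2}$ — such a $\Xi$ exists because one may take $\Xi = \mathcal{L}^{\R_1}_{\omega_n}(\Omega^{\R_1})$ for large $n$, or argue via surjectivity of $\Y^{\R_1}_*$ onto $\S(\hit\otimes\his)^{\R_1}$ composed with $\pi^{\R_1}_{\E_2}$ — then the triangle for $1\to 2$ gives $\Phi^{\rm loc}_{1\to2}([\Omega^{\R_1}]_{\E_2}) = \pi^{\R_2}_{\E_1}\circ\Y^{\R_2}_*(\Xi)$, and the triangle for $2\to 1$ gives $\Phi^{\rm loc}_{2\to1}$ of that equals $\pi^{\R_1}_{\E_2}\circ\Y^{\R_1}_*(\Xi) = [\Omega^{\R_1}]_{\E_2}$. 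The only subtlety is that $\Xi$ is not unique, so I must check the composite does not depend on the chosen lift; this follows because both legs of the triangle are well-defined \emph{functions} on $\S(\hi_\T)_G$ and the left leg is surjective onto the domain of $\Phi^{\rm loc}_{1\to 2}$, so $\Phi^{\rm loc}_{1\to2}$ is forced to be exactly "transport along the right leg of any preimage", making the two-sided composite the identity. \textbf{The main obstacle} I anticipate is the well-definedness/commutativity step: controlling the limit $\lim_n[\cdot]_{\E_1}$ rigorously requires knowing that the $\E_1$-quotient projection is operationally continuous (true, as it is a quotient by a pre-annihilator) \emph{and} that the double-relativization operators $\Y^{\R_2}\circ\Gamma_{\omega_n}\circ\Y^{\R_1}$ converge appropriately — essentially a two-frame version of Theorem \ref{th:con1} — together with bookkeeping that the localizing limit of $\R_1$ interacts correctly with the fixed (possibly non-localizable) structure of $\R_2$ on the shared factor $\his$; getting the order of limits and the tensor-factor bookkeeping right is where the real work lies.
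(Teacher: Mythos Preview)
Your plan for well-definedness and commutativity is essentially the paper's route: test against $\E_1$-framed effects $\E_1(X)\otimes\F_\S$, dualize through $\Y^{\R_2}$ and the restriction $\Gamma_{\omega_n}$, and use the localizing limit to collapse $\mu^{\E_1}_{\omega_n}(g.X)\to\chi_X(g^{-1})$. One imprecision to fix: in the well-definedness step you say the pulled-back operator, ``restricted with $\omega_n$ and then relativized by $\Y^{\R_1}$, lands in $B(\hit\otimes\his)^{\R_1,\E_2}$.'' There is no further $\Y^{\R_1}$ to apply here. The states $\Omega^{\R_1},\widetilde{\Omega}^{\R_1}$ already live in $\S(\hit\otimes\his)$, so what you need is that $\Gamma_{\omega_n}\bigl(\Y^{\R_2}(\E_1(X)\otimes\F_\S)\bigr)\in B(\hit\otimes\his)$ is, in the limit, $\E_2$-framed (not $\R_1$-relativized-$\E_2$-framed). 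Concretely the paper gets $\int_G d\E_2(g)\,\chi_X(g^{-1})\otimes g.\F_\S$, which lies in $B(\hit\otimes\his)^{\E_2}$, and this is exactly what the $\E_2$-equivalence hypothesis lets you swap. Your opening appeal to Eq.~\eqref{arblift} and ``continuity of $\Y^{\R_2}_*$'' is also a false start, since $\Y^{\R_2}_*$ does not act on $\S(\hit\otimes\his)\ni\Omega^{\R_1}$; you seem to notice this and pivot, so just excise that paragraph.

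Your invertibility argument is genuinely different from the paper's and cleaner. The paper proves $\Phi^{\rm loc}_{2\to 1}\circ\Phi^{\rm loc}_{1\to 2}=\text{Id}$ by a third direct calculation with both localizing sequences. You instead observe that once both triangles commute (the $1\to 2$ triangle and, using localizability of $\R_2$, the $2\to 1$ triangle), composing gives $\Phi^{\rm loc}_{2\to 1}\circ\Phi^{\rm loc}_{1\to 2}\circ p_1=p_1$ where $p_1=\pi^{\R_1}_{\E_2}\circ\Y^{\R_1}_*$; since $p_1$ is surjective onto $\S(\hit\otimes\his)^{\R_1}_{\E_2}$ by definition, the identity follows. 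This diagram chase buys you invertibility for free from commutativity and avoids a second delicate double-limit computation; the paper's direct route, by contrast, is self-contained and does not require invoking the mirror triangle, but is longer. Your worry about non-uniqueness of the lift $\Xi$ is not an issue: once $\Phi^{\rm loc}_{1\to 2}$ is established as a function on the quotient, the value $\Phi^{\rm loc}_{1\to 2}([\Omega^{\R_1}]_{\E_2})$ is unambiguous, and the mirror triangle applies to it directly.
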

		
		\begin{proof}
			See Appendix \ref{proof1}.
		\end{proof}

		As anticipated, the formula $\lim_{n \to \infty} \pi^{\R_2}_{\E_1} \circ \Y^{\R_2}_* \circ \mathcal{L}^{\R_1}_{\omega_n}$ gives a well-defined map $\S(\his \otimes \hit)^{\R_1} \to \S(\his \otimes \hio)^{\R_2}_{\E_1}$, so that we could consider transforming $\R_1$-relative states to $\E_1$-framed $\R_2$-relative states, although in an inevitably irreversible way. Moreover, without assuming localizability of $\R_1$ we can define an obvious non-localized but instead \emph{state-dependent} analogue of the localized frame transformations
		\begin{equation}
			\Phi_{1 \to 2}^\omega := \pi^{\R_2}_{\E_1} \circ \Y^{\R_2}_* \circ \mathcal{L}^{\R_1}_\omega: \S(\his \otimes \hit)^{\R_1} \to \S(\his \otimes \hio)^{\R_2}_{\E_1}.
		\end{equation}
		Analyzing properties of this construction is left for future work. In case $\R_2$ is not localized, the invertibility is lost, pointing to a form of decoherence.
		
		With three frames, i.e., with $\hi_\T \cong \his \otimes \hio \otimes \hit \otimes \hith$, diagonal $G$-action on $\hi_\T$ and three covariant POVMs $\E_1,\E_2$ and $\E_3$ as frame observables, the frame transformations compose in the following sense:
		
		\begin{theorem}\label{locfrtrans2}
			Assume $\R_1$ and $\R_2$ 
			to be localizable principal frames and let $\R_3$ be an arbitrary principal frame. Then 
			\begin{equation}\label{eq:come}
				\pi^{\R_3}_{\E_2} \circ \Phi^{\rm loc}_{1 \to 3} = \Phi^{\rm loc}_{2 \to 3} \circ \Phi^{\rm loc}_{1 \to 2},
			\end{equation}
			where $\pi^{\R_3}_{\E_2}$ denotes the $\E_2$-framing projection from $\S(\his \otimes \hio \otimes \hit)^{\R_3}_{\E_1}$.
		\end{theorem}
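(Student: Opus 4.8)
The plan is to reduce the composition law to the corresponding statement about lifts and relativizations at the level of the `global' state space $\S(\hio\otimes\hit\otimes\hith\otimes\his)_G$, using the commuting triangle established in Theorem \ref{locfrtrans1} for each of the three pairs of frames, together with the approximate right-inverse property of the lifting maps proved in Eq.~\eqref{arblift}. Concretely, let $(\omega_n)$ be a localizing sequence for $\E_1$ and $(\rho_m)$ one for $\E_2$, both centred at $e\in G$. Starting from a class $[\Omega^{\R_1}]_{\E_2}$ in $\S(\hit\otimes\his)^{\R_1}_{\E_2}$, the left side of \eqref{eq:come} applies $\Y^{\R_3}_*\circ\mathcal{L}^{\R_1}_{\omega_n}$ and then projects with $\pi^{\R_3}_{\E_1,\E_2}$, while the right side first performs $\pi^{\R_2}_{\E_1}\circ\Y^{\R_2}_*\circ\mathcal{L}^{\R_1}_{\omega_n}$ into $\S(\hio\otimes\his)^{\R_2}_{\E_1}$, then lifts with $\mathcal{L}^{\R_2}_{\rho_m}$ and applies $\Y^{\R_3}_*$ before the final projection. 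So I would first unwind both sides to the explicit forms involving $\Y^{\R_3}_*\!\big(\omega_n\otimes\Omega^{\R_1}\big)$ and $\Y^{\R_3}_*\!\big(\rho_m\otimes\Y^{\R_2}_*(\omega_n\otimes\Omega^{\R_1})\big)$ respectively, modulo the relevant framing equivalences, using \eqref{eq:fcm} and the analogous formula for $\Phi^{\rm loc}_{1\to 3}$.

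The key computational step is then to show that, inside the $\E_2$-framed $\R_3$-relative state space, the limit over $m$ of $\Y^{\R_3}_*\big(\rho_m\otimes\Y^{\R_2}_*(\Xi)\big)$ coincides with $\Y^{\R_3}_*(\Xi)$ for $\Xi=\Y^{\R_2}_*(\omega_n\otimes\Omega^{\R_1})$ viewed back as an $\R_2$-relative state on $\hio\otimes\his$. Here I would invoke \eqref{eq:ftsr}/\eqref{arblift} for the \emph{second} frame: since $\R_2$ is localizable and $(\rho_m)$ localizes at $e$, attaching $\rho_m$ to an $\R_2$-relative state and relativizing recovers the original state in the operational limit; what survives after the $\E_2$-framing projection $\pi^{\R_3}_{\E_2}$ is exactly the image under $\Y^{\R_3}_*$. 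This is the place where localizability of $\R_2$ is used essentially, and where the additional $\E_2$-quotient on the codomain is needed — before quotienting, $\Y^{\R_3}_*(\rho_m\otimes\,\cdot\,)$ and $\Y^{\R_3}_*(\,\cdot\,)$ need not agree, but all relative observables defining the quotient are $\E_2$-framed, so Theorem \ref{th:con1} applied to $\R_2$ forces the difference of expectation values to vanish in the limit. I would phrase this via a portemanteau/weak-convergence argument on $\mu^{\E_2}_{\rho_m}$ exactly as in the proof of Theorem \ref{th:con1}, testing against the continuous bounded functions $g\mapsto \tr[\Theta\,\Y^{\R_3}(g.A)]$ for $\Theta$ a state and $A$ a framed effect.

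Having reduced both sides to $\lim_n [\Y^{\R_3}_*(\omega_n\otimes\Omega^{\R_1})]_{\E_1,\E_2}$, I would check that the iterated limit in $n$ and $m$ can be taken consistently — this is where the order of limits matters. Since after the $m$-limit the dependence on $\rho_m$ has disappeared, no genuine double limit remains; the $n$-limit is then literally the defining limit of $\Phi^{\rm loc}_{1\to 3}$ composed with the extra projection $\pi^{\R_3}_{\E_2}$. The remaining bookkeeping is to verify that all the intervening maps ($\Y^{\R_2}_*$, $\mathcal{L}^{\R_2}_{\rho_m}$, the framing projections) are well-defined affine maps between the stated operational state spaces and commute with taking equivalence classes, which follows from normality/invariance of the relativization maps and from the fact, already noted in the excerpt, that $\Y^{\R_1}_*$ and $\Y^{\R_2}_*$ factor through the $G$-equivalence.

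\textbf{Main obstacle.} I expect the delicate point to be the interchange of the two localizing limits and, relatedly, making precise in which topology the intermediate equality $\lim_m \Y^{\R_3}_*(\rho_m\otimes\Y^{\R_2}_*(\Xi_n))=\Y^{\R_3}_*(\Xi_n)$ holds \emph{after} the $\E_2$-framing projection but \emph{before} the $n$-limit — i.e. ensuring the estimates are uniform enough in $n$ that the $m$-limit genuinely commutes through. The cleanest route is probably to avoid a true double limit altogether: apply the $m$-limit first on representatives (legitimate because for each fixed $n$, $\Xi_n$ is a bona fide $\R_2$-relative state and Theorem \ref{th:con1} for $\R_2$ applies verbatim), collapse the $\rho_m$-dependence, and only then take $n\to\infty$, so that the identity \eqref{eq:come} becomes an equality of two expressions that are manifestly the same limiting family.
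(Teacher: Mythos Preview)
Your proposal is correct and structurally matches the paper's proof: both unfold the composition to $\Y^{\R_3}_*\big(\rho_m\otimes\Y^{\R_2}_*(\omega_n\otimes\Omega^{\R_1})\big)$, take the $m$-limit first for each fixed $n$ (using localizability of $\R_2$) to collapse this to $[\Y^{\R_3}_*(\omega_n\otimes\Omega^{\R_1})]_{\E_2}$, and only then take $n\to\infty$---your diagnosis of the limit-exchange issue and its resolution is exactly what the paper does when it writes ``exchanging the order of limits and taking $m\to\infty$''. The only difference is packaging: the paper carries out the $m$-limit by a direct computation against test effects $\E_1(X)\otimes\E_2(Y)\otimes\F_\S$ (change of variable $g'=hg$, swap the two integrals, evaluate $\int d\E_2(h)\,\chi_{g'.Y}(h)=g'.\E_2(Y)$), whereas you recognize this step as precisely the commuting-triangle statement of Theorem~\ref{locfrtrans1} applied to the pair $(\R_2,\R_3)$, which is a cleaner invocation of the same fact and saves repeating the calculation. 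One notational slip to fix: where you write $\Xi=\Y^{\R_2}_*(\omega_n\otimes\Omega^{\R_1})$ you need $\Xi_n=\omega_n\otimes\Omega^{\R_1}$ as a global state, so that both $\Y^{\R_2}_*(\Xi_n)$ and $\Y^{\R_3}_*(\Xi_n)$ are well-typed.
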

		The map $\pi^{\R_3}_{\E_2}$ is required in order to account for the choice of the second frame observable.
		
		\begin{proof}
			See Appendix \ref{proof2}.
		\end{proof}
		
		\subsection{Comparison to other work}\label{subsec:comparison}
		It is worth making a brief comparison to the works \cite{giacomini2019quantum,de2020quantum} (which have since been dubbed `purely perspectival' approaches), and the perspective-neutral approach (e.g. \cite{de2021perspective}). We begin with the former. For convenient comparison, we employ here the convention adapted in other works and write the Hilbert spaces (and states) of the frames on the left side of the tensor product.
		
		In \cite{giacomini2019quantum} three copies of $L^2(\mathbb{R})$ are considered, with the compound system given as $L^2(\mathbb{R})_{\R_1}\otimes L^2(\mathbb{R})_{\R_2} \otimes L^2(\mathbb{R})_{\R_3}$; we will denote the factors by $\hi_{\R_i}$ to simplify notation. A unitary map
		\[
		U := e^{i Q_{\R_1} \otimes P_{\R_2}}: \hi_{\R_1}\otimes \hi_{\R_2} \to \hi_{\R_1}\otimes \hi_{\R_2}
		\]
		is introduced, giving $(U \varphi_1 \otimes \phi_2)(x,y)= \varphi_1(x)\phi_2(y+ x)$;
		this is composed with the isometric bijection $V: \hi_{\R_1} \to  \hi_{\R_3}$ defined by $(V \varphi_1) (x) = \varphi_1 (-x)=:\varphi_3(x)$, and the $SWAP_{1,3}$ exchanging the order of tensor factors. The frame change map of \cite{giacomini2019quantum} is then the composition
		\[
		F_{3 \to 1}:= SWAP_{1,3} \circ (V \otimes \id_{\R_2}) \circ U: \hi_{\R_1} \otimes \hi_{\R_2} \to \hi_{\R_2} \otimes \hi_{\R_3},
		\]
		which on product states gives
		\begin{equation}\label{eq:fc1}
			\hi_{\R_1} \otimes \hi_{\R_2} \ni \varphi_1(x) \phi_2(y) \mapsto \phi_2(y+x)\varphi_3(-x) \in \hi_{\R_2} \otimes \hi_{\R_3}.
		\end{equation}
		The interpretation given is that the state on the left hand side is the state relative to $\mathcal{R}_3$, which is always assumed to be perfectly localized (though not part of the Hilbert space description) and the right hand side the transformed state, relative to $\R_1$, which again is not part of the Hilbert space description, but is assumed to be perfectly localized. We mention again that motivation for the choice of localized frame state is not given in \cite{giacomini2019quantum}.  
		Generically, the right hand side of \eqref{eq:fc1} is entangled; indeed, it appears that the only setting in which no entanglement is `generated' through the frame change is for a product of perfectly localized `position eigenstates'. The work of \cite{de2020quantum} explicitly includes the state of $\R_1$ as the $\ket{0}$ in the theoretical description (later generalised to $\ket{e}; e \in G$). This is stated to be a conventional choice in \cite{de2020quantum}.
		As we have commented, the perfectly position-localized states are not part of the Hilbert-space framework of quantum theory, and nor are the $g$-localized states for $g \in G$ for typical (locally compact) $G$, which we have taken care to avoid, replacing such objects with localizing sequences. The use of such sequences highlights another important aspect missing from the above works: the sequences give rise to localized probability measures with respect to the frame observable; in the perspectival approaches the observables and the probability distributions they give rise to are not considered, and therefore only half the picture is presented.
		To connect to our work, we note that $\mathcal{R}_2$ above is our $\S$, 
		and $\mathcal{R}_3$ is our $\mathcal{R}_1$ (the initial frame). 
		
		It is important to compare the probability distributions arising in \cite{giacomini2019quantum,de2020quantum} with those arising in this work in order to make a concrete connection to the framework we have provided, in which we have introduced a symmetry principle on the observables, combined with an operational justification of the (more general, here) notion of relative state.
		
		Since the perfectly localized states are possible only for discrete $G$, we fix a finite $G$
		and change convention for the $G$-actions/representations on the frames to be given by 
		$U(g)\ket{h}:=\ket{hg^{-1}}$ (noting that on $G$ this is a left action). Such states can be used to construct a frame observable $P(g)=\dyad{g^{-1}}$, which is easily seen to be a covariant PVM. We set $\his \cong \hit \cong \hit \cong l^2(G)$.\footnote{Note that each frame observable $P_i$ generates a commutative (`classical') subalgebra through $\{P(g)\}'' \cong diag(M_n(\mathbb{C})) \cong C(G) \subset B(L^2(G)),$
			where $'$ denotes commutant and $C(G)$ is the set of functions $G\to \mathbb{C}$ equipped with pointwise multiplication (noting the distinction between $C(G)$ and $L^2(G)$ which we view only as a linear space). The pure states of $C(G)$ are in bijection with elements of $G$ and therefore also the frame projections $P(g)$.} To undertake the comparison, we begin with the classical situation, which motivates the frame change prescriptions in \cite{giacomini2019quantum,de2020quantum}. At this level the localized frame change transformation given here, when understood as a map between the \emph{classical} pure states, identified with the projections on the $G$-basis, yields (noting that the relevant classes are trivial in this case)
		\begin{align*}
			\left( \dyad{h}_{\R_2} \otimes \dyad{g}_{\S}\right)^{\dyad{e}_{\R_1}} &\mapsto 
			\dyad{e}_{\R_1}\otimes \dyad{h}_{\R_2}\otimes \dyad{g}_{\S}\\
			 &\mapsto \left(\dyad{h^{-1}}_{\R_1} \otimes \dyad{gh^{-1}}_\S\right)^{\dyad{e}_{\R_2}},
		\end{align*}
		which is identical to that given in \cite{de2020quantum}. In \cite{de2020quantum}, this map is unitarily extended at the Hilbert space level (all phases are set~to~$1$) by the ``principle of coherent change of reference system" (which was assumed also implicitly in \cite{giacomini2019quantum}). To see how the construction here and that of \cite{de2020quantum} differs on the quantum level, consider $\omega \in \S(\hit)$ and $\rho \in \S(\his)$. Writing $U_{1 \to 2}$ for the frame change unitary corresponding to that given in \cite{de2020quantum}, we find
		\[
		U_{1 \to 2}(\omega \otimes \rho)U^*_{1 \to 2} = \sum_{g,h} \bra{g} \omega \ket{h} \dyad{g^{-1}}{h^{-1}} \otimes U_\S (g) \rho U^*_\S (h).
		\]
		On the other hand, the localized frame change transformation given in \eqref{eq:fcm}, adapted to this setting, gives
		\[
		\Phi_{1 \to 2}^{\rm loc} (\omega \otimes \rho) = \sum_{g} \bra{g} \omega \ket{g} \left[\dyad{g^{-1}} \otimes U_\S (g) \rho U^*_\S (g)\right]_{\E_1}.
		\]
		Perhaps surprisingly, it turns out that the two states resulting from these procedures are operationally equivalent, i.e,
		\[
		U_{1 \to 2}(\omega \otimes \rho)U^*_{1 \to 2} \in \Phi_{1 \to 2}^{\rm loc} (\omega \otimes \rho).
		\]
		Our procedure is then perfectly compatible with the transformations of \cite{de2020quantum} whenever the latter is rigorously defined, which we now state precisely:
		
		\begin{proposition}\label{prop:compACT}
			Consider a finite group $G$ and a pair of ideal frames $\R_1$, $\R_2$. Then,
			\[
			\Phi_{1 \to 2}^{\rm loc} = \pi^{\R_1}_{\E_2} \circ U_{1 \to 2}(\_)U^*_{1 \to 2}: \S(\his \otimes \hio)_{\E_2}^{\R_1} \to  \S(\his \otimes \hit)^{\R_2}_{\E_1}.
			\]
		\end{proposition}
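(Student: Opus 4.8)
The plan is to exploit that, for a finite group and ideal (sharp, principal) frames, every object in sight is an explicit finite sum, so that the two maps being compared reduce to two such sums differing only by off-diagonal coherences which the $\E_1$-framing projection discards. I work with the conventions fixed just before the statement, $U_i(g)\ket h=\ket{hg^{-1}}$ and $\E_i\equiv P_i$ with $P_i(\{g\})=\dyad{g^{-1}}$ on $\his\cong\hio\cong\hit\cong l^2(G)$, taking the domain and codomain of $\Phi^{\rm loc}_{1\to2}$ as in \eqref{frmchmap} (i.e. $\S(\hit\otimes\his)^{\R_1}_{\E_2}\to\S(\hio\otimes\his)^{\R_2}_{\E_1}$) and $\pi^{\R_2}_{\E_1}$ the corresponding framing quotient. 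First I would record two reductions. Since $\mu^{\E_1}_{\dyad{e}_{\R_1}}=\delta_e$ exactly, the constant sequence $\omega_n\equiv\dyad{e}_{\R_1}$ is a localizing sequence centred at $e$, so by \eqref{eq:fcm} and Theorem \ref{locfrtrans1} the limit defining $\Phi^{\rm loc}_{1\to2}$ is attained at once and $\Phi^{\rm loc}_{1\to2}([\Omega^{\R_1}]_{\E_2})=[\Y^{\R_2}_*(\dyad{e}_{\R_1}\otimes\Omega^{\R_1})]_{\E_1}$. Moreover every state $\sigma$ of $\hit\otimes\his$ is $\R_1$-relative, as $\sigma=\Y^{\R_1}_*(\dyad{e}_{\R_1}\otimes\sigma)$, and similarly every state of $\hio\otimes\his$ is $\R_2$-relative, so $\pi^{\R_2}_{\E_1}$ may be applied directly to $U_{1\to2}(\cdot)U_{1\to2}^*$.

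Next I would write out both sides on an arbitrary representative $\Omega^{\R_1}\in\S(\hit\otimes\his)$, expanded as $\Omega^{\R_1}=\sum_{g,h}\dyad{g}{h}_{\hit}\otimes\bra{g}_{\hit}\Omega^{\R_1}\ket{h}_{\hit}$. Inserting the explicit predual of the $\R_2$-relativization and the form $P_2(\{g\})=\dyad{g^{-1}}$ gives
\[
\Y^{\R_2}_*(\dyad{e}_{\R_1}\otimes\Omega^{\R_1})=\sum_{k\in G}\dyad{k^{-1}}_{\hio}\otimes U_\S(k)\,\bra{k}_{\hit}\Omega^{\R_1}\ket{k}_{\hit}\,U_\S(k)^*,
\]
while the defining action $U_{1\to2}(\ket{h}_{\hit}\otimes\ket{\psi}_\S)=\ket{h^{-1}}_{\hio}\otimes U_\S(h)\ket{\psi}_\S$, extended bilinearly as in the displayed density-operator formula, gives
\[
U_{1\to2}\,\Omega^{\R_1}\,U_{1\to2}^*=\sum_{g,h\in G}\dyad{g^{-1}}{h^{-1}}_{\hio}\otimes U_\S(g)\,\bra{g}_{\hit}\Omega^{\R_1}\ket{h}_{\hit}\,U_\S(h)^*.
\]
So the second state is the first with extra $g\neq h$ terms adjoined, i.e. with off-diagonal coherences in the $\ket{g}_{\hio}$-basis.

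The key structural step is to pin down the $\E_1$-framing relation on $\S(\hio\otimes\his)$. Because $\E_1$ is ideal, the effects $P_1(X)=\sum_{g\in X}\dyad{g^{-1}}$ span, in finite dimension, the full diagonal subalgebra $\mathrm{Diag}(\hio)\subset B(\hio)$ in the $\ket{g}$-basis, whence $B(\hio\otimes\his)^{\E_1}=\mathrm{Diag}(\hio)\otimes B(\his)$ and two states $\sigma,\sigma'$ are $\E_1$-equivalent iff $\bra{j}_{\hio}\sigma\ket{j}_{\hio}=\bra{j}_{\hio}\sigma'\ket{j}_{\hio}$ in $B(\his)$ for every $j\in G$. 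Computing these blocks for the two displays: in $\Y^{\R_2}_*(\dyad{e}_{\R_1}\otimes\Omega^{\R_1})$ only the $k=j^{-1}$ term contributes, in $U_{1\to2}\Omega^{\R_1}U_{1\to2}^*$ only the diagonal $g=h=j^{-1}$ term contributes, and in both cases the block equals $U_\S(j^{-1})\,\bra{j^{-1}}_{\hit}\Omega^{\R_1}\ket{j^{-1}}_{\hit}\,U_\S(j^{-1})^*$. Hence $U_{1\to2}\Omega^{\R_1}U_{1\to2}^*\sim_{\E_1}\Y^{\R_2}_*(\dyad{e}_{\R_1}\otimes\Omega^{\R_1})$, i.e. $\pi^{\R_2}_{\E_1}(U_{1\to2}\Omega^{\R_1}U_{1\to2}^*)=\Phi^{\rm loc}_{1\to2}([\Omega^{\R_1}]_{\E_2})$. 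Since the right-hand side depends only on $[\Omega^{\R_1}]_{\E_2}$, this simultaneously shows that $\pi^{\R_2}_{\E_1}\circ U_{1\to2}(\cdot)U_{1\to2}^*$ descends to the quotient and coincides with $\Phi^{\rm loc}_{1\to2}$.

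There is no deep obstacle; the one step that genuinely requires care is the identification $B(\hio\otimes\his)^{\E_1}=\mathrm{Diag}(\hio)\otimes B(\his)$, which is precisely where ideality (sharpness) of the frame enters — it makes the framed algebra the full diagonal subalgebra, so that the $\E_1$-projection forgets exactly the coherences that separate the representative of \cite{de2020quantum} from ours; without sharpness this identification, and the proposition, would fail. The remainder is bookkeeping: propagating $U(g)\ket{h}=\ket{hg^{-1}}$ and $P(g)=\dyad{g^{-1}}$ consistently through the preduals of $\Y^{\R_1}$ and $\Y^{\R_2}$ and through the action of $U_{1\to2}$, and keeping track of which tensor slot carries which frame.
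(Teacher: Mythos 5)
Your proof is correct and follows essentially the same route as the paper's: both arguments verify that $U_{1\to2}\Omega^{\R_1}U^*_{1\to2}$ and $\Y^{\R_2}_*(\dyad{e}\otimes\Omega^{\R_1})$ agree on every $\E_1$-framed effect $\dyad{l^{-1}}\otimes\F_\S$ --- you do this by extracting the diagonal $\hio$-blocks of explicit operator expansions, while the paper computes the two traces directly, which is the same verification in different clothing. Your preliminary observations (that $\dyad{e}$ is an exact constant localizing sequence, that $\Y^{\R_1}_*$ is onto all of $\S(\hit\otimes\his)$ here, and that the $\E_1$-framed algebra is $\mathrm{Diag}(\hio)\otimes B(\his)$ so that $\sim_{\E_1}$ is equality of diagonal blocks) are all correct and simply make explicit what the paper leaves implicit.
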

		
		\begin{proof}
			We calculate
			\begin{align*}
				&\tr[U_{1 \to 2}\Omega^{\R_1}U^*_{1 \to 2}\dyad{l^{-1}}\otimes \F_\S] \\
				&=\tr[\sum_{g,h}\dyad{g^{-1}}{g} \otimes U_\S(g)\Omega^{\R_1}\dyad{h}{h^{-1}} \otimes U^*_\S(h)\dyad{l^{-1}} \otimes \F_\S] \\
				&=\tr[\sum_g\dyad{g^{-1}}{g} \otimes U_\S(g)\Omega^{\R_1}\dyad{l}{l^{-1}} \otimes U^*_\S(g)\F_\S] \\
				&=\tr[\sum_g\Omega^{\R_1}\dyad{l}{l^{-1}} \otimes U^*_\S(g)\F_\S\dyad{g^{-1}}{g} \otimes U_\S(g)] \\
				&=\tr[\Omega^{\R_1}\dyad{l} \otimes U^*_\S(l)\F_\S U_\S(l)] =\tr[\Omega^{\R_1}\dyad{l} \otimes l^{-1}.\F_\S] \\
			\end{align*}
			By contrast,
			\begin{align*}
				&\tr[\Phi^{\rm loc}_{1 \to 2}(\Omega^{\R_1})\dyad{l^{-1}}\otimes \F_\S] \\
				&=\tr[\dyad{e} \otimes \Omega^{\R_1} \sum_g \dyad{g^{-1}} \otimes g. \left(\dyad{l^{-1}} \otimes \F_\S\right)]\\
				&=\tr[\dyad{e} \otimes \Omega^{\R_1} \sum_g \dyad{g^{-1}} \otimes \dyad{l^{-1}g^{-1}} \otimes g.\F_\S]\\
				&=\tr[\Omega^{\R_1} \sum_g \dyad{g^{-1}} \delta(l,g^{-1}) \otimes g.\F_\S]=\tr[\Omega^{\R_1} \dyad{l} \otimes l^{-1}.\F_\S].
			\end{align*}
		\end{proof}
		In particular, this shows that if the frame $\R_2$ is prepared in a superposed state, e.g. the input vector of the frame change~is
		\begin{equation}
			\ket{\psi}=\left((\alpha \ket{h_1}_{\R_2}+\beta \ket{h_2}_{\R_2}) \otimes \ket{g}_\S\right)^{\ket{e}_{\R_1}},
		\end{equation}
		at an operational level there is no difference between the transformed state as given in \cite{de2020quantum}
		\begin{equation}\label{eq:fecc1}
			\left( \alpha \ket{h_1^{-1}}_{\R_1} \otimes \ket{gh_1^{-1}}_\S  +\beta \ket{h_2^{-1}}_{\R_1} \otimes \ket{gh_2^{-1}}_\S \right)^{\ket{e}_{\R_2}},
		\end{equation}
		and that arising from our procedure on a representative of the $\E_2$-equivalence class of $\dyad{\psi}$, which reads
		\begin{equation}\label{eq:fecc}
			\scalebox{0.8}{$\left(|\alpha|^2 \dyad{h_1^{-1}}_{\R_1} \otimes \dyad{gh_1^{-1}}_\S 
				+|\beta|^2 \dyad{h_2^{-1}}_{\R_1} \otimes \dyad{gh_2^{-1}}_\S \right)^{\dyad{e}_{\R_2}},$}
		\end{equation}
		since they occupy the same $\E_1$-equivalence class in $\S(\hit \otimes \his)^{\R_2}$. The state in  \eqref{eq:fecc} is the L\"{u}ders mixture corresponding to \eqref{eq:fecc1}, and is \emph{not} entangled. (Note that the same conclusion---that \eqref{eq:fecc1} and \eqref{eq:fecc} are equivalent---can be drawn by evaluating any invariant observable in \eqref{eq:fecc1} and noting that the cross-terms vanish). It may be tempting from Eq. \eqref{eq:fecc1} (and \cite{giacomini2019quantum}) to draw strong physical conclusions that ``superposition and entanglement are frame-dependent''. Whilst we do not necessarily disagree with the broad understanding of quantum properties depending on frame choices---indeed we have seen such behaviour here---the precise notion that a superposition state for $\R_2$ (tensored with any state of $\S$) is transformed into an entangled state of $\S$ and $\R_1$, we believe deserves further scrutiny; as we can see it is not as innocuous as one might think from an operational perspective.

		We finish with a mention of the perspective-neutral framework (e.g. \cite{Vanrietvelde:2018pgb,ahmad2022quantum,krumm2021quantum,de2021perspective}) and how it relates conceptually and technically to the construction presented here. The perspective-neutral approach 
		plays out on the physical Hilbert space $\hi_{\rm phys}$, which is defined, for compact $G$, by one of three equivalent procedures: (i) as the kernel of a quantised classical constraint which is defined in some kinematical Hilbert space $\hi_{\rm kin}$, (ii) as the (closed linear span of) the invariant vectors under the unitary group defined through the exponentiated constraint operators, (iii) as the image of the `coherent group averaging' $\Pi: \hi_{\rm kin} \to \hi_{\rm phys}; \ket{\psi} \mapsto  = \int_G   \, d\mu(g) U(g)$, where $U$ is the relevant unitary representation of $G$. The algebra of (bounded) physical observables is then $B(\hi_{\rm phys})$, which in general does not coincide with our $B(\hi)^G$, even if the same kinematical Hilbert space is chosen. The differences are further magnified when $G$ is not compact; this does not have much effect on the operational approach, but the setting of standard Hilbert space quantum theory is no longer viable for the perspective-neutral approach, and distributional techniques are required. These make use of a  
		`rigging' construction \cite{giulini1999generality,giulini1999uniqueness}, which is not known to well defined in general, and the distributional framework is used informally in \cite{de2021perspective}.
		
		An exhaustive comparison of the technical and conceptual differences between the operational approach constructed here and the perspective-neutral approach is a serious undertaking in its own right, given that there is significant technical work to be done to ensure the well-definedness of all the objects involved. However, the frame change map can be constructed informally; at the level of rigour of that work the map is unitary even for non-ideal frames and agrees with, and therefore in a sense subsumes, \cite{giacomini2019quantum,de2020quantum} for ideal frames. 
		
		The ``relational Schr{\"o}dinger picture'' frame change map of the perspective-neutral framework can be written (setting $g_i=g_j=e$) in the following form (see Thm. 4 on pg. 40 of \cite{de2021perspective})
		\[
		V_{1\to 2} = \int_G \dyad{\phi(g)}{\psi(g)} \otimes U_\S(g) \, d\mu(g),
		\]
		where $\{\ket{\phi(g)}\}_{g \in G} \subset \hit$ and $\{\ket{\psi(g)}\}_{g \in G} \subset \hit$ are systems of coherent states of the first and second frame, respectively, understood in the sense of \eqref{ex:csspovm}, and $\dyad{\phi(g)}{\psi(g)}$ as a map from $\hit$ to $\hit$. Within the Hilbert space framework of quantum mechanics (i.e., without considering rigged Hilbert spaces and distributions), ideal coherent state frames may only be defined on discrete groups. In this setting, for $\R_1$ ideal, a similar calculation shows that there is no operational distinction between the frame change map we have provided and that of the perspective-neutral approach (see \cite{JGthesis}). 
		
		A comprehensive interpretation of the distinctions between the various frameworks is a topic of ongoing investigation. Each has different assumptions, applications and domains of validity. Important questions are yet to be answered; for instance, both the perspective-neutral (e.g. \cite{Hoehn:2019owq}) and operational frameworks \cite{loveridge2019relative} have been applied to the famed Page-Wootters mechanism \cite{page1983evolution}, in which a globally time-translation-invariant state/observable may give rise to 
		Schr\"{o}dinger/Heisenberg dynamics for one subsystem relative to another. The different frameworks differ in their implementation, and the physical content of the distinction is yet to be fully understood.
		
		\section{Conclusion}
		
		In this work, we have provided a mathematical foundation for an operationally motivated framework for quantum reference frames and their transformations.  The quantum reference frames were defined as systems of covariance based on $G$-spaces. The relativization map was used to construct relative observables, which are observables on the composite systems satisfying the framing and invariance conditions that we require from observable quantities. The relative states are defined as operational equivalence classes of states that cannot be distinguished by the relative observables, or, equivalently, as the states of the system lying in the image of the predual of the relativization map. The Banach duality between states and observables in the orthodox, non-relational quantum theory is seen to manifest on the relational level. We analysed conditioning the relative descriptions on a state of the reference, finding that in the limit of highly localized states, the relative descriptions probabilistically reproduce the standard ones, effectively externalizing the reference. All this is achieved by introducing the notion of operational equivalence and merging it with the ideas and results previously given in \cite{Loveridge2011, loveridge2012quantum, loveridge2017relativity, Loveridge2017a, Loveridge2020a}.
		
		A further contribution in this work is the provision of the invertible and composable frame transformations for internal localizable principal quantum reference frames. The frame change procedure relies upon the specification of two frames from the outset, which are understood as fixing the conditions under which the observed phenomena are to be understood. In the setting of sharp frames, this amounts to choosing particular commutative subalalgebras of the frame algebras, with the interpretation that quantum reference frames are classical windows into the quantum world. This view is morally in line with the Bohrification programme of Landsman \cite{landsman2017foundations}, but distinct from it in that only one part of a composite system is `Bohrified' and the covariance requirement strongly restricts the choice of frame. Further work is needed to cement this connection, but it appears to be an important feature of the operational approach.
		
		The frame change is operationally indistinguishable, on the common domain of rigorous applicability, from those of \cite{de2020quantum} and \cite{de2021perspective}. This relies fundamentally on the stipulation that only relative observables are truly measurable, and that the fixing of frames fixes a classical context, and in particular suggests that there is no `fact of the matter' about whether the frame change is coherent, or entangling, or not. We conclude that the strong physical claims made based on previously studied quantum reference frame transformations should therefore be treated with caution.
		
		There remains, as ever, work to be done; though the relativization map has now been given on homogeneous spaces in fairly general terms \cite{glowacki2023quantum,fewster2024quantum}, the frame changes have not yet been analysed in that setting. Topics which have been investigated within the other frameworks should be investigated operationally, for example the `relativity of subsystems' \cite{ahmad2022quantum,castro2021relative} should be systematically studied in the operational framework. More generally, it is not yet known what the precise differences in the frameworks are in terms of the physics they describe, and a thorough analysis must be done. Regarding further work from within the operational approach, the convex flavour is well suited to general probabilistic theories. The framework is clearly also amenable to a von Neumann algebraic generalisation, aspects of the theory of which have appeared in \cite{fewster2024quantum}. There, it was shown that unsharp quantum reference frames are needed for the reduction of a type $III$ to a type $II_1$ von Neumann algebra in quantum field theory
		(see also \cite{chandrasekaran2023algebra,witten2022gravity,witten2024background,de2024gravitational,de2024crossed,ali2024crossed,ali2024quantum,ahmad2024semifinite,ahmad2024relational,gomez2022cosmology,gomez2023entanglement,klinger2024crossed,jensen2023generalized,witten2024algebras,cirafici2024fluctuation} for further work on type reduction, von Neumann algebras, and quantum reference frames etc.), and further work would involve understanding the role of operational frame changes in such a setting.

		\section*{Acknowledgements}
		Thanks are due to Chris Fewster, James Waldron, Anne-Catherine de la Hamette, Stefan Ludescher, Philipp H\"{o}hn, Markus M\"{u}ller, Tom Galley, Sebastiano Nicolussi Golo, Isha Kotecha, Josh Kirklin, Fabio Mele, Jaros{\l}aw Korbicz, Marek Ku{\'s},  Alexander and Inna Boritchev, Hamed Mohammady, Takayuki Miyadera, Kjetil B{\o}rkje, and Klaas Landsman for helpful interactions. We also owe our gratitude to an anonymous referee who offered some pertinent comments and criticisms, and which certainly improved the manuscript. LL would like to thank the Theoretical Visiting Sciences Programme at the Okinawa Institute of Science and Technology (OIST) for enabling his visit, and for the generous hospitality and excellent working conditions during his time there, which significantly aided the development of this work. JG acknowledges the funding received via NCN through the OPUS grant nr.~$2017/27/$B/ST$2/02959$ and support by the Digital Horizon Europe project FoQaCiA, Foundations of quantum computational advantage, GA No. 101070558, funded by the European Union, NSERC (Canada), and UKRI (UK). This publication was made possible through the support of the ID\# 62312 grant from the John Templeton Foundation, as part of the \href{https://www.templeton.org/grant/the-quantuminformation-structure-ofspacetime-qiss-second-phase}{``The Quantum Information Structure of Spacetime'' Project (QISS)}. The opinions expressed in this project/publication are those of the author(s) and do not necessarily reflect the views of the John Templeton Foundation.

		
		
		\bibliographystyle{quantum}
		\bibliography{QRFarticleQuantumTemplate.bib}

		\appendix
		\section{Glossary of Spaces}\label{glossary}
		We collect below, for reference, the key spaces which have been used and some of the main results. All closures are understood to be with respect to the ultraweak topology.
		
		\begin{itemize}
			\item \emph{Invariant operators/effects}: von Neumann subalgebra $B(\hi)$/(convex) subset of $\Eff(\hi)$ consisting of operators/effects invariant under the given (strongly continuous) unitary representation of a (locally compact) group~$G$
			\begin{align*}
				B(\hi)^G &:= \{A \in B(\hi)\hspace{3pt}|\hspace{3pt}g.A \equiv U(g)AU^*(g)=A\} \subseteq B(\hi), \\
				\Eff(\hi)^G &:= \{\F \in \Eff(\hi)\hspace{3pt}|\hspace{3pt}g.\F \equiv U(g)\F U^*(g)=\F\} \subseteq \Eff(\hi).
			\end{align*}
			
			\item \emph{Invariant states/trace class operators}: (total convex) subset/subspace of states/trace class operators which are invariant under the given unitary representation of $G$
			\begin{align*}
				\T(\hi)^G &:= \{T \in \T(\hi)\hspace{3pt}|\hspace{3pt} T.g \equiv U^*(g)T U(g)=T\} \subseteq \T(\hi), \\
				\S(\hi)^G &:= \{\Omega \in \S(\hi)\hspace{3pt}|\hspace{3pt} \Omega.g \equiv U^*(g)\Omega U(g)=\Omega\} \subset \T(\hi)^G.
			\end{align*}
			
			\item $G$-\emph{equivalent states/trace class operators}: (total convex) operational quotient space of classes of states/trace class operators that can not be distinguished by the invariant effects (or, equivalently, by the invariant operators)
			\begin{align*}
				\T(\hi)_G &:= \T(\hi)/\hspace{-3pt}\sim_G, \\
				\S(\hi)_G &:= \S(\hi)/\hspace{-3pt}\sim_G,
			\end{align*}
			where $T \sim_G T' \hspace{5pt} \Leftrightarrow \hspace{5pt} \tr[T \F]=\tr[T' \F] \text{ for all } \F \in \Eff(\hi)^G$. In the case of compact $G$ we have
			\begin{align*}
				\T(\hi)_G &\cong \T(\hi)^G \text{ (as Banach spaces),} \\
				\S(\hi)_G &\cong \S(\hi)^G \text{ (as state spaces).}
			\end{align*}
			
			\item \emph{Framed operators/effects}: (ultraweakly closed) operator space in $B(\hisr)$/(convex) subset of $\Eff(\hisr)$ consisting of operators/effects respecting the choice of the frame-orientation observable $\E_\R: \mathcal{B}(G) \to \Eff(\hir)$
			\begin{align*}
				\Eff(\hisr)^{\E_\R} &:= \rm{conv}\left\{\F_\S \otimes \E_\R(X) \hspace{3pt} | \hspace{3pt} X \in \mathcal{B}(G), \F_\S \in \Eff(\his)\right\}^{\rm cl},\\
				B(\hisr)^{\E_\R} &:= \rm{span}\left\{\F_\S \otimes \E_\R(X) \hspace{3pt} | \hspace{3pt} X \in \mathcal{B}(G), \F_\S \in \Eff(\his)\right\}^{\rm cl}.
			\end{align*}
			\item \emph{Framed states/trace class operators}: (total convex) operational quotient space of classes of states/trace class operators that cannot be distinguished by the $\E_\R$-framed effects (or, equivalently, by the $\E_\R$-framed operators)
			\begin{align*}
				\T(\his)_{\E_\R} &:= \T(\hisr)/\hspace{-3pt}\sim_{\E_\R}, \\
				\S(\hi)_{\E_\R} &:= \S(\hisr)/\hspace{-3pt}\sim_{\E_\R},
			\end{align*}
			where $T \sim_{\E_\R} T' \hspace{5pt} \Leftrightarrow \hspace{5pt} \tr[T \, \F]=\tr[T' \, \F] \text{ for all } \F \in \Eff(\hisr)^{\E_\R}$.
			
			\item \emph{Relative operators/effects}: (ultraweakly closed) operator space in $B(\hisr)$/(convex) subset of $\Eff(\hisr)$ consisting of operators/effects lying in the ultraweak closure of the $\R$-relativization map
			\[
			\Y^\R (A_\S) := \int_G g.A_\S \otimes d\E_\R(g),
			\]
			so that we have
			\begin{align*}
				\Eff(\his)^\R &:=  \Y^\R(\Eff(\his))^{\rm cl},\\ 
				B(\his)^\R &:= 
				\Y^\R(B(\his))^{\rm cl}. 
			\end{align*}
			We have $\Eff(\his)^\R \subseteq \Eff(\hisr)^G$ and $B(\his)^\R \subseteq B(\hisr)^G$. When $\E_\R$ is sharp $B(\hir)^\R$ is a von Neumann subalgebra of $B(\hisr)^G$.
			
			\item \emph{Relative states/trace class operators}: (total convex) operational quotient space of classes of states/trace class operators that can not be distinguished by the $\R$-relative effects (or, equivalently, by the $\R$-relative operators)
			\begin{align*}
				\T(\his)_\R &:= \T(\hisr)/\hspace{-3pt}\sim_\R, \\
				\S(\hi)_\R &:= \S(\hisr)/\hspace{-3pt}\sim_\R,
			\end{align*}
			where $T \sim_\R T' \hspace{5pt} \Leftrightarrow \hspace{5pt} \tr[T \F]=\tr[T' \F] \text{ for all } \F \in \Eff(\his)^\R$. Writing
			\[
			\S(\hi)^\R := \Y^\R_*(\S(\hisr)) = \Y^\R_*(\S(\hisr)/\hspace{-3pt}\sim_G) \subseteq \S(\his),
			\]
			we have an isomorphism of state spaces
			\[
			y_\R: S(\his)^\R \ni \Y^\R_*(\Omega) \mapsto [\Omega]_\R \in \S(\his)_\R.
			\]
			We use the following notation for $\R$-relative states
			\[
			\Omega^\R \equiv \Y^\R_*(\Omega) \cong [\Omega]_\R,
			\]
			where $\Omega$ is a state on the composite system $\Omega \in \S(\hisr)$ and $[\_]_\R$ denotes the operational equivalence class taken with respect to $\Eff(\his)^\R$. When $\R$ is localizable the inclusion $\S(\his)^\R \subseteq \S(\his)$ is dense in the operational topology.
			
			\item $\omega$-\emph{conditioned relative operators/effects}: (ultraweakly closed) operator space in $B(\his)$/(convex) subset of $\Eff(\his)$ consisting of $\omega$-conditioned $\R$-relative operators/effects
			\begin{align*}
				\Eff(\his)^\R_\omega &:= \Y^\R_\omega \left(\Eff(\his)\right)^{\rm cl},\\
				B(\his)^\R_\omega &:=\Y^\R_\omega \left(B(\his)\right)^{\rm cl}, 
			\end{align*}
			where $\Y^\R_\omega := \Gamma_\omega \circ \Y^\R: B(\his) \to B(\his)$ with
			\[
			\Gamma_\omega: B(\hisr) \ni A_\S \otimes A_\R \mapsto \tr[\omega A_\R] A_\S \in B(\his)
			\]
			extended by linearity and continuity. The $\omega$-conditioned $\R$-relative operators take the form
			\[
			\Y^\R_\omega(A_\S) = \int_G U(g)A_\S U^*(g) \, d\mu^{\E_\R}_\omega(g).
			\]
			\item $\omega$-\emph{product relative states/trace class operators}: (total convex) operational quotient space of classes of states/trace class operators that can not be distinguished by the $\omega$-conditioned $\R$-relative effects (or, equivalently, by the $\omega$-conditioned $\R$-relative operators)
			\begin{align*}
				\T(\his)^\omega_\R &:= \T(\his)/\hspace{-3pt}\sim_{(\R,\omega)},\\ 
				\S(\hi)^\omega_\R &:= \S(\his)/\hspace{-3pt}\sim_{(\R,\omega)},\\
			\end{align*}
			where $T \sim_{(\R,\omega)} T' \hspace{5pt} \Leftrightarrow \hspace{5pt} \tr[T \F]=\tr[T' \F_\S] \text{ for all } \F_\S \in \Eff(\his)_\omega^\R$. We have 
			\[
			\S(\hi)^\omega_\R \cong (\Y^\R_\omega)_*(\S(\his)),
			\]
			with the $\omega$-product $\R$-relative states taking the form
			\[
			\rho^{(\omega)} := (\Y^\R_\omega)_*(\rho) = \int_G U(g)^*\rho U(g) \, d\mu^{\E_\R}_\omega(g).
			\]
			\item $\omega$-\emph{lifted relative states}: $G$-equivalent states in $\S(\hisr)_G$ that arise by attaching a frame state $\omega$ to an $\R$-relative state via the $\omega$-lifting map $\mathcal{L}^\R_\omega$ defined as
			\[
			\mathcal{L}^\R_\omega := (\Gamma^\R_\omega)_* \circ y_\R: \S(\his)^\R \to \S(\hisr)_G,
			\]
			where
			\[
			\Gamma^\R_\omega := \Y^\R \circ \Gamma_\omega \circ i^G: B(\hisr)^G \to B(\his)^\R
			\]
			is the $\R$-\emph{relative} $\omega$-\emph{restriction~map}. We then have
			\[
			\mathcal{L}^\R_\omega: \Omega^\R \mapsto [\Omega^\R \otimes \omega]_G.
			\]
			
			\item \emph{Relative framed operators/effects}: (ultraweakly closed) operator space in $B(\hio \otimes \hit \otimes \his)^G$/(convex) subset of $\Eff(\his \otimes \hio \otimes \hit)^G$ consisting of $\R_1$-relative $\E_2$-framed effects, e.g. 
			\begin{align*}
				\Eff(\his \otimes \hit)^{\R_1,\E_2} &:= \Y^{\R_1}\left(\Eff(\his \otimes \hit)^{\E_2}\right)^{\rm cl} \\
				B(\his \otimes \hit)^{\R_1,\E_2} &:= \Y^{\R_1}\left(B(\his \otimes \hit)^{\E_2}\right)^{\rm cl}.
			\end{align*}
			
			\item \emph{Framed relative states/trace class operators}: (total convex) operational quotient space of classes of states/trace class operators that can not be distinguished by the $\R_1$-relative $\E_2$-framed effects (or, equivalently, by the $\R_1$-relative $\E_2$-framed operators)
			\begin{align*}
				\T(\his \otimes \hit)^{\R_1,\E_2} &:= \T(\his \otimes \hio \otimes \hit)/\hspace{-3pt}\sim_{(\R_1,\E_2)}, \\
				\S(\his \otimes \hit)^{\R_1,\E_2} &:= \S(\his \otimes \hio \otimes \hit)/\hspace{-3pt}\sim_{(\R_1,\E_2)},
			\end{align*}
			where $T \sim_{(\R_1,\E_2)} T' \hspace{5pt} \Leftrightarrow \hspace{5pt} \tr[T \, \F]=\tr[T' \, \F] \text{ for all } \F \in \Eff(\his \otimes \hit)^{\R_1,\E_2}$. We also have
			\[
			\S(\his \otimes \hit)^{\R_1}_{\E_2} := \S(\his \otimes \hit)^{\R_1}/\hspace{-3pt}\sim_{\E_2} \cong \S(\his \otimes \hit)^{\R_1,\E_2}.
			\]
			For the corresponding quotient projections we write
			\[
			\pi^{\R_1}_{\E_2}: \S(\his \otimes \hit) \supseteq \S(\his \otimes \hit)^{\R_1} \to \S(\his \otimes \hit)^{\R_1}/\hspace{-3pt}\sim_{\E_2}.
			\]
			
			\item \emph{Localized frame transformations}: given a pair of frames $\R_1$ and $\R_2$ and a system $\S$, assuming localizibility of $\R_1$ we have the localized frame transformation, which is a state space map 
			\[
			\Phi^{\rm loc}_{1 \to 2}: \S(\his \otimes \hit)^{\R_1}_{\E_2} \to \S(\hi_{\R_1} \otimes \his)^{\R_2}_{\E_1},
			\]
			given by
			\begin{align*}
				\Phi^{\rm loc}_{1 \to 2}:= \lim_{n \to \infty} \pi^{\R_2}_{\E_2} \circ \Y^{\R_2}_* \circ \mathcal{L}^{\R_1}_{\omega_n}: [\Omega^{\R_1}]_{\E_2} &\mapsto \lim_{n \to \infty}[\Y^{\R_2}_* \circ \mathcal{L}^{\R_1}_{\omega_n}(\Omega^{\R_1})]_{\E_1}\\
				 &= \lim_{n \to \infty}[\Y^{\R_2}_* (\Omega^{\R_1} \otimes \omega_n)]_{\E_1}.
			\end{align*}
			If $\R_2$ is also localizable, the corresponding localized frame transformation provides the inverse. In the setup of three frames localizable frame transformations compose as follows
			\[
			\pi^{\R_3}_{\E_2}\circ \Phi ^{\rm loc}_{1 \to 3} = \Phi ^{\rm loc}_{2 \to 3} \circ \Phi ^{\rm loc}_{1 \to 2}.
			\]
		\end{itemize}
		
		\section{Diagrams}\label{diagrams}
		As a graphical summary of the main ingredients of relational quantum kinematics, we present a dual pair of commutative diagrams representing all the relevant maps. The $\omega \in \S(\hir)$ is as usual arbitrary, the maps $i$ denote the obvious inclusions, and the maps $\pi$ are the corresponding quotient projections. We restrict to state spaces (as opposed to the full predual spaces) to include the $\omega$-lifting map.

		\vspace{10pt}
		\begin{center}
			\adjustbox{scale=0.9,center}{%
			\begin{tikzcd}
				&    &   & B(\hisr) &   \\
				B(\hisr)^G \arrow[rr, "\Gamma^\R_\omega"] \arrow[rd, "\Gamma_\omega \circ i^G"] &   & B(\his)^\R \arrow[rd, "\Gamma_\omega \circ i^\R"] \arrow[rr, "i^G_\R", hook] \arrow[ru, "i^\R", hook] &    & B(\hisr)^G \arrow[lu, "i^G"', hook'] \arrow[ld, "\Gamma_\omega \circ i^G"] \\
				& B(\his) \arrow[ru, "\Y^\R"] \arrow[rr, "\Y^\R_\omega"] &  & B(\his)^\R_\omega &          \end{tikzcd}
			}
		\end{center}
		
		\vspace{10pt}
		\begin{center}
			\adjustbox{scale=0.9,center}{%
			\begin{tikzcd}
				&     &   & \S(\hisr) \arrow[ld, "\pi_\R"', two heads] \arrow[rd, "\pi_G", two heads] &   \\
				\S(\hisr)_G &  & \S(\his)^\R \cong S(\his)_\R \arrow[ll, "\mathcal{L}^\R_\omega"'] \arrow[ld, "i_\R \cong \Y^\R_*"', hook'] &    & \S(\hisr)_G \arrow[ll, "\pi_G^\R"', two heads] \\
				& \S(\his) \arrow[lu, "\pi_G \circ \mathcal{V}_\omega"'] &  & \S(\his)_\R^\omega \arrow[lu, "\pi_\R \circ \mathcal{V}_\omega"'] \arrow[ll, "(\Y^\R_\omega)_*"'] \arrow[ru, "\pi_G \circ \mathcal{V}_\omega"]&     
			\end{tikzcd}
		}
		\end{center}
		
		\section{Proofs of Theorems \ref{locfrtrans1} and \ref{locfrtrans2}}\label{app:proofs}
		
		Here we present the proofs of well-definedness, invertibility and composability of the localized frame transformations.
		
		\subsection{Proof of Theorem \ref{locfrtrans1}}\label{proof1}

		\begin{proof}
			Take $\Omega_1, \Omega_2 \in \S(\his \otimes \hio \otimes \hit)/\hspace{-3pt}\sim_G$ and write  $\Omega^{\R_i} = \Y^{\R_i}_*(\Omega)$ as usual. We need to show that for localizable $\E_1$ and any localizing sequence $(\omega_n)$ whenever $[\Omega_1^{\R_1}]_{\E_2} = [\Omega_2^{\R_1}]_{\E_2} $, i.e., whenever we have
			\[
			\tr[\Omega_1^{\R_1}(\F_\S \otimes \E_2(X))] =\tr[\Omega_2^{\R_1}(\F_\S \otimes \E_2(X))] \text{ for all } X \in \mathcal{B}(G), \F_\S \in \Eff(\his)
			\]
			we will also have $\Phi_{1\to2}^{\rm loc}(\Omega_1^{\R_1}) = \Phi_{1\to 2}^{\rm loc}(\Omega_2^{\R_1})$, i.e.,
				\[\lim_{n \to \infty}\tr[(\Omega_1^{\R_1} \otimes \omega_n)\Y^{\R_2}(\F_\S \otimes \E_1(X))] = \lim_{n \to \infty}\tr[(\Omega_2^{\R_1} \otimes \omega_n)\Y^{\R_2}(\F_\S \otimes \E_1(X))]\]
				
				for all $X \in \mathcal{B}(G), \F_\S \in \Eff(\his)$. We then calculate
			\begin{align*}
				\tr[\Phi_{1\to2}^{\rm loc}(\Omega_1^{\R_1})\F_\S \otimes \E_1(X)] &=
				\lim_{n \to \infty} \tr[(\Omega_1^{\R_1} \otimes \omega_n)\Y^{\R_2}(\F_\S \otimes \E_1(X))]\\ &=
				\lim_{n \to \infty} \tr[(\Omega_1^{\R_1} \otimes \omega_n)\int_G g.\F_\S \otimes \E_1(g.X)  \otimes d\E_2(g)]\\ &=
				\tr[\Omega_1^{\R_1} \int_G (\lim_{n \to \infty} \mu_{\omega_n}^{\E_1}(g.X))g.\F_\S \otimes d\E_2(g)]\\ &=
				\tr[\Omega_1^{\R_1} \int_G  \delta_e(g.X) g.\F_\S \otimes d\E_2(g)]\\
				&=\tr[\Omega_1^{\R_1} \int_G  \chi_{g.X}(e) g.\F_\S \otimes d\E_2(g)]\\ &= 
				\tr[\Omega_1^{\R_1} \int_G \chi_{X}(g^{-1}) g.\F_\S \otimes d\E_2(g)],
			\end{align*}
			where we have used that $\lim_{n \to \infty} \mu_{\omega_n}^{\E_1} = \delta_e$ and $\delta_e(g.X) = \chi_{g.X}(e) = \chi_X(g^{-1})$. Now we see that by hypothesis we can replace $\Omega_1^{\R_1}$ by $\Omega_2^{\R_1}$ and get the same number for any $X \in \mathcal{B}(G)$ and $\F_\S \in \Eff(\his)$. Running this calculation backwards gives the first claim, as the calculation does not depend on the choice of localizing sequence. To prove the second claim, we need to show that for arbitrary $\Omega \in \S(\his \otimes \hio \otimes \hit)/\hspace{-3pt}\sim_G$, $X \in \mathcal{B}(G)$ and $\F_\S \in \Eff(\his)$ we have
			\[
			\tr[\Phi_{1\to2}^{\rm loc}(\Omega^{\R_1}) \, \F_\S \otimes \E_1(X)] = \tr[\Omega^{\R_2} \, \F_\S \otimes \E_1(X)].
			\]
			We calculate
			\begin{align*}
				\tr[\Phi_{1\to2}^{\rm loc}(\Omega^{R_1}) \F_\S \otimes \E_1(X)] &=
				\tr[\Omega \int_G h.\left(\int_G \chi_X(g^{-1}) g.\F_\S \otimes d\E_2(g)\right) \otimes d\E_1(h)]\\ &=
				\tr[\Omega \int_G \int_G \chi_X(g^{-1})  hg.\F_\S \otimes d\E_2(hg) \otimes d\E_1(h)]. 
			\end{align*}
			Now performing the change of variables $l := hg$ in the $\E_2$ integral and exchanging the order of integration, we write
			\begin{align*}
				\tr[\Phi_{1\to2}^{\rm loc}(\Omega^{R_1}) \E_1(X) \otimes \F_\S] &=
				\tr[\Omega \int_G \int_G  \chi_X(l^{-1}h) l.\F_\S \otimes d\E_1(h) \otimes  d\E_2(l)] \\&=
				\tr[\Omega \int_G l.\F_\S \otimes \int_G \chi_X(l^{-1}h)d\E_1(h) \otimes d\E_2(l)].
			\end{align*}
			Since the $h$ variable appears only in the second tensor factor the $\E_1$ integral can be evaluated giving
			\[
			\int_G \chi_X(l^{-1}h) \, d\E_1(h) = \int_G \chi_{l.X}(h) \, d\E_1(h) = \E_1(l.X) = l.\E_1(X),
			\]

			and finally we arrive at
			\begin{align*}
				\tr[\Phi_{1\to2}^{\rm loc}(\Omega^{R_1}) \F_\S \otimes \E_1(X)] &=
				\tr[\Omega \int_G l.(\F_\S \otimes \E_1(X)) \otimes d\E_2(l)]\\
				&=\tr[\Omega \, \Y^{\R_2}(\F_\S \otimes \E_1(X))] =
				\tr[\Omega^{\R_2}(\F_\S \otimes \E_1(X))].
			\end{align*}
			
			To show the last claim, writing $(\eta_m)$ for a localizing sequence of $\R_2$, we calculate
			\begin{align*}
				&\tr[\Phi^{\rm loc}_{2 \to 1} \circ \Phi^{\rm loc}_{1 \to 2}(\Omega^{\R_1}) \, \F_\S \otimes \E_2(X)]\\
				&= \lim_{m \to \infty}\tr[\Y^{\R_1}_* \circ \mathcal{L}^{\R_2}_{\eta_m} \circ \Phi^{\rm loc}_{1 \to 2}(\Omega^{\R_1}) \, \F_\S \otimes \E_2(X)]\\
				&= \lim_{m \to \infty}\tr[\mathcal{L}^{\R_2}_{\eta_m} \circ \Phi^{\rm loc}_{1 \to 2}(\Omega^{\R_1}) \int_G g.(\F_\S \otimes \E_2(X)) \otimes d\E_1(g)]\\
				&= \lim_{m \to \infty}\tr[\Phi^{\rm loc}_{1 \to 2}(\Omega^{\R_1}) \int_G \mu^{\E_2}_{\eta_m}(g.X) g.\F_\S \otimes d\E_1(g)]\\
				&= \lim_{m \to \infty}\lim_{n \to \infty}\tr[\Y^{\R_2}_* \circ \mathcal{L}^{\R_1}_{\omega_n}(\Omega^{\R_1}) \int_G \mu^{\E_2}_{\eta_m}(g.X) g.\F_\S \otimes d\E_1(g)]\\
				&= \lim_{m \to \infty}\lim_{n \to \infty}\tr[\mathcal{L}^{\R_1}_{\omega_n}(\Omega^{\R_1}) \int_G h.\left(\int_G \mu^{\E_2}_{\eta_m}(g.X) g.\F_\S \otimes d\E_1(g) \right) \otimes d\E_2(h)]\\
				&= \lim_{m \to \infty}\lim_{n \to \infty}\tr[\mathcal{L}^{\R_1}_{\omega_n}(\Omega^{\R_1}) \int_G \int_G \mu^{\E_2}_{\eta_m}(g.X) hg.\F_\S \otimes d\E_1(hg) \otimes d\E_2(h)]\\
				&= \lim_{m \to \infty}\lim_{n \to \infty}\tr[\Omega^{\R_1} \int_G \int_G \mu^{\E_2}_{\eta_m}(g.X) hg.\F_\S \, d\mu^{\E_1}_{\omega_n}(hg) \otimes d\E_2(h)]\\
				&= \lim_{m \to \infty}\tr[\Omega^{\R_1} \int_G \mu^{\E_2}_{\eta_m}(h^{-1}.X)  \F_\S \otimes d\E_2(h)]\\
				&= \tr[\Omega^{\R_1} \, \F_\S \otimes \int_G \chi_X(h) \, d\E_2(h)] = \tr[\Omega^{\R_1} \F_\S \otimes \E_2(X)],
			\end{align*}
			where we have used $\lim_{n \to \infty} \mu^{\E_1}_{\omega_n}(gh) = \delta_e(gh) = \delta_{g^{-1}}(h)$ and $\lim_{m \to \infty} \mu^{\E_2}_{\eta_m}(h^{-1}.X) = \chi_X(h)$. From commutativity it follows that the map $\Phi_{1 \to 2}^{\rm loc}: \S(\his \otimes \hit)^{\R_1}_{\E_2} \to \S(\his \otimes \hio)^{\R_2}_{\E_1}$ is well-defined in the sense that taking the limit $n \to \infty$ does not take the outcome out of the codomain. Since $\Y^{\R_2}_*$ and $\mathcal{L}^{\R_1}_{\omega_n}$ are linear, we have a state space map.
		\end{proof}
		
		\subsection{Proof of Theorem \ref{locfrtrans2}}\label{proof2}

		\begin{proof}
			Writing $(\omega_n)$ for a localizing sequence of $\R_1$ and $(\eta_m)$ for that of $\R_2$ as before, we calculate
			\begin{align*}
				&\tr[\Phi^{\rm loc}_{2 \to 3} \circ \Phi^{\rm loc}_{1 \to 2}(\Omega^{\R_1}) \,
				\F_\S \otimes \E_1(X) \otimes \E_2(Y)]\\
				&= \lim_{m \to \infty} \tr[\Y^{\R_3}_* \circ \mathcal{L}_{\eta_m}^{\R_2} \circ \Phi^{\rm loc}_{1 \to 2}(\Omega^{\R_1})\F_\S \otimes \E_1(X) \otimes \E_2(Y)]\\
				&= \lim_{m \to \infty} \tr[\mathcal{L}_{\eta_m}^{\R_2} \circ \Phi^{\rm loc}_{1 \to 2}(\Omega^{\R_1})\int_G g.(\F_\S \otimes \E_1(X) \otimes \E_2(Y)) \otimes d\E_3(g)]\\
				&= \lim_{m \to \infty} \tr[\Phi^{\rm loc}_{1 \to 2}(\Omega^{\R_1})\int_G \mu_{\eta_m}^{\E_2}(g.Y) g.(\F_\S \otimes \E_1(X)) \otimes d\E_3(g)]\\
				&= \lim_{m \to \infty} \lim_{n \to \infty} \tr[\Y^{\R_2}_* \circ \mathcal{L}_{\omega_n}^{\R_1}(\Omega^{\R_1})\int_G \mu_{\eta_m}^{\E_2}(g.Y) g.(\F_\S \otimes \E_1(X)) \otimes d\E_3(g)]\\
				&= \lim_{m \to \infty} \lim_{n \to \infty} \tr[\mathcal{L}_{\omega_n}^{\R_1}(\Omega^{\R_1})\int_G h.\left(\int_G \mu_{\eta_m}^{\E_2}(g.Y) g.(\F_\S \otimes \E_1(X)) \otimes d\E_3(g)\right) \otimes d\E_2(h)]\\
				&= \lim_{m \to \infty} \lim_{n \to \infty} \tr[\mathcal{L}_{\omega_n}^{\R_1}(\Omega^{\R_1})\int_G \int_G \mu_{\eta_m}^{\E_2}(g.Y) hg.(\F_\S \otimes \E_1(X)) \otimes d\E_3(hg) \otimes d\E_2(h)].
			\end{align*}
			If we now change the integration variable in the $\E_3$ integral for $g' :=hg$ and change the order of integration we can write the operator above as
			\begin{align*}
				&\int_G \int_G \mu_{\eta_m}^{\E_2}(g.Y) hg.(\F_\S \otimes \E_1(X)) \otimes d\E_3(hg) \otimes d\E_2(h)\\
				&= \int_G g'.(\F_\S \otimes \E_1(X)) \otimes \int_G \mu_{\eta_m}^{\E_2}(h^{-1}g'.Y) d\E_2(h) \otimes d\E_3(g').
			\end{align*}
			Exchanging the order of limits and taking $m \to \infty$ the $\E_2$ integral can be evaluated giving
			\[
			\lim_{m \to \infty}\int_G \mu_{\eta_m}^{\E_2}(h^{-1}g'.Y) \, d\E_2(h) = \int_G \chi_{g'.Y}(h) \, d\E_2(h) = \E_2(g'.Y) = g'.\E_2(Y).
			\]
			We then get
			\begin{align*}
				&\tr[\Phi^{\rm loc}_{2 \to 3} \circ \Phi^{\rm loc}_{1 \to 2}(\Omega^{\R_1})
				\F_\S \otimes \E_1(X) \otimes \E_2(Y)]\\
				&= \lim_{m \to \infty} \tr[\mathcal{L}_{\omega_n}^{\R_1}(\Omega^{\R_1})\int_G g'.(\F_\S \otimes \E_1(X) \otimes \E_2(Y)) \otimes d\E_3(g')]\\
				&= \lim_{m \to \infty} \tr[\Y^{\R_3}_* \circ \mathcal{L}_{\omega_n}^{\R_1}(\Omega^{\R_1})\F_\S \otimes \E_1(X) \otimes \E_2(Y)]\\
				&= \tr[\Phi_{1\to 3}^{\rm loc}(\Omega^{\R_1})\F_\S \otimes \E_1(X) \otimes \E_2(Y)].
			\end{align*}
			Since $X,Y \in \mathcal{B}(G)$ and $\F_\S \in \Eff(\his)$ were arbitrary this completes the proof.
		\end{proof}
		
	\end{document}